\newcommand{\R}{\mathbb{R}}
\newcommand{\N}{\mathbb{N}}
\newcommand{\tr}{\mathop{\rm{tr}}}
\newcommand{\rank}{\mathop{\rm{rank}}}
\newcommand{\var}{\mathop{\mbox{Var}}}
\newcommand{\E}{\mbox{\sf E}}     
\renewcommand{\P}{\mathrm{P}}            
\newcommand{\IF}{\boldsymbol{1}}    
\def\defeq{\stackrel{\mathrm{def}}{=}}  
\def\B{\mbox{\boldmath $B$}}
\def\K{\mbox{$\mathcal K$}}
\def\p{\mbox{$\mathcal P$}}
\def\vep{\varepsilon}
\def\prox{S_{\lambda/M}}
\newcommand{\bSigma}{\bm{\Sigma}}
\newcommand{\hSigma}{\widehat{\bm{\Sigma}}}
\def\Gamah{\widehat{\bm{\Gamma}}_\tau}
\def\Gamat{\bm{\Gamma}_{\tau,T}}
\def\Gamai{\bm{\Gamma}_{\tau,\infty}}
\def\tG{\widehat{\bm{\Gamma}}_{\tau,\delta}}
\newcommand{\bGamma}{{\bm{\Gamma}}}
\newcommand{\bOmega}{\bm{\Omega}}
\newcommand{\bTheta}{\bm{\Theta}}
\newcommand{\bDelta}{\bm{\Delta}}
\newcommand{\hDelta}{\widehat{\bm{\Delta}}}
\def\tdel{\widehat{\bm{\Delta}}}
\newcommand{\ab}{\mathbf{a}}
\newcommand{\ub}{\mathbf{u}}
\newcommand{\vb}{\mathbf{v}}
\newcommand{\Ab}{\mathbf{A}}
\newcommand{\Bb}{\mathbf{B}}
\newcommand{\Db}{\mathbf{D}}
\newcommand{\Ib}{\mathbf{I}}
\newcommand{\Pb}{\mathbf{P}}
\newcommand{\Sb}{\mathbf{S}}
\newcommand{\Ub}{\mathbf{U}}
\newcommand{\Vb}{\mathbf{V}}
\newcommand{\Wb}{\mathbf{W}}
\newcommand{\Xb}{\mathbf{X}}
\newcommand{\Yb}{\mathbf{Y}}
\newcommand{\bW}{\bm{W}}
\newcommand{\bX}{\bm{X}}
\newcommand{\bY}{\bm{Y}}
\newcommand{\bZ}{\bm{Z}}
\newcommand{\bb}{\bm{b}}
\newcommand{\cA}{\mathcal{A}}
\newcommand{\cP}{\mathcal{P}}
\newcommand{\cK}{\mathcal{K}}
\newcommand{\cM}{\mathcal{M}}
\newcommand{\cS}{\mathcal{S}}
\newcommand{\cV}{\mathcal{V}}
\newcommand{\cU}{\mathcal{U}}
\newcommand{\fline}{\underline{f}^\tau}
\newcommand{\hQ}{\widehat{Q}}
\newcommand{\GG}{\mathbb{G}}
\newcommand{\PP}{\mathbb{P}}
\def\bvep{{\mbox{\boldmath $\varepsilon$}}}
\def\proofname{PROOF}
\def\ba{\bm{a}}
\def\bx{\bm{x}}
\def\bu{{\mbox{\boldmath $u$}}}
\def\bv{{\mbox{\boldmath $v$}}}
\def\bvep{{\mbox{\boldmath $\varepsilon$}}}
\def\sbvep{{\mbox{\scriptsize \boldmath $\varepsilon$}}}
\def\sbx{{\mbox{\scriptsize \boldmath $x$}}}
\newcommand{\IO}{\boldsymbol{0}} 
\renewcommand{\hat}{\widehat}
\renewcommand{\tilde}{\widetilde}
\newtheorem{theorem}{Theorem}[section]
\newtheorem{corollary}[theorem]{Corollary}
\newtheorem{lemma}[theorem]{Lemma}
\newtheorem{remark}[theorem]{Remark}
\newtheorem{defin}[theorem]{Definition}
\numberwithin{equation}{section}
\numberwithin{theorem}{section}
\author{Shih-Kang Chao\thanks{Department
		of Statistics, University of Missouri-Columbia, Columbia, MO 65211. E-mail:
		\texttt{chaosh@missouri.edu}. Tel: +1-573-882-1577. Fax: +1-573-415-8075.
		} \and Wolfgang K. H\"{a}rdle\thanks{Ladislaus von Bortkiewicz Chair of Statistics, C.A.S.E. - Center for applied Statistics and Economics, Humboldt-Universit\"{a}t zu Berlin, Unter den Linden 6, 10099 Berlin, Germany. Email: \texttt{haerdle@wiwi.hu-berlin.de.} Sim Kee Boon Institute for Financial Economics, Singapore Management University, 50 Stamford Road, Singapore 178899, Singapore.} \and Ming Yuan\thanks{Department of Statistics, Columbia University, 1255 Amsterdam Avenue, New York, NY 10027, U.S.A. Email: \texttt{ming.yuan@columbia.edu}.}} 
\titleformat{\section}{\Large}{\textbf\thesection.}{.5em}{\textbf}
\titlespacing{\section}{0pt}{*3}{*2}
\titleformat{\subsection}{\large}{\textbf\thesubsection.}{.5em}{\textbf}
\titlespacing{\subsection} {0pt}{*3}{*2}
\titleformat{\subsubsection}{\normalfont}{\textbf\thesubsubsection.}{.5em}{\textbf}
\titlespacing{\subsubsection} {0pt}{*3}{*2}
\renewenvironment{proof}[1][\proofname]{{\noindent\bfseries #1.\ }}{\qed\medskip}
\begin{document}
\title{Factorisable Multitask Quantile Regression}
\maketitle
\begin{abstract}\singlespacing

A multivariate quantile regression model with a factor structure is proposed to study data with many responses of interest. The factor structure is allowed to vary with the quantile levels, which makes our framework more flexible than the classical factor models. The model is estimated with the nuclear norm regularization in order to accommodate the high dimensionality of data, but the incurred optimization problem can only be efficiently solved in an approximate manner by off-the-shelf optimization methods. Such a scenario is often seen when the empirical risk is non-smooth or the numerical procedure involves expensive subroutines such as singular value decomposition. To ensure that the approximate estimator accurately estimates the model, non-asymptotic bounds on error of the the approximate estimator is established. For implementation, a numerical procedure that provably marginalizes the approximate error is proposed. The merits of our model and the proposed numerical procedures are demonstrated through Monte Carlo experiments and an application to finance involving a large pool of asset returns. 
\end{abstract}

\noindent {\bf KEY WORDS:} Factor model; quantile regression; non-asymptotic analysis; multivariate regression; nuclear norm regularization.\\ 
\noindent {\bf JEL:} C13, C38, C61, G17.

\newpage
\doublespacing
\section{Introduction} 

In a variety of applications in economics, the interest is in the conditional quantiles of response variable \citep{KH01}. Quantile regression \citep{KB:1978} is arguably one of the most popular methods for estimating the quantile of a response variable. However, in the situation of multivariate responses with common predictors, equation-by-equation quantile regression fails to capture the latent common structure. In econometrics literature, factor models are used for quantile regression with multiple responses \citep{AT11,CDG15}, but the factors are usually invariant to the quantile level, or do not include the information of exogenous predictors. This seems to contradict with the reality that the upper and lower quantiles are usually interpreted differently. 

To fill this gap, we consider quantile level dependent factors $f_k^\tau(\bX)$ formed by covariates $\bX\in\R^p$, where $\tau\in(0,1)$ is the quantile level. We assume that the conditional quantile $q_j(\tau|\bX)$ of $Y_j$, the $j$th component in the response vector $\bY$, satisfies
\begin{align}
q_j(\tau|\bX) &= \sum_{k=1}^{r_\tau} \Psi_{\tau,kj} f_k^\tau(\bX),\quad j=1,...,m=\dim(\bY), \label{eq:fintro}
\end{align}
where $\Psi_{\tau,kj}\in\R$ is the factor loading, and $r_\tau\in\N$ is assumed small and quantile level dependent. At first glance, Model (1.1) appears to be a factor-augmented regression model (FAR) of \cite{SW02}, but in fact they are drastically different. The predictors and the response variables are both generated by factors in the FAR model, but (1.1) assumes that the factors are functions of predictors, where the functions are unknown but non-random. Besides the factor models, \cite{FXZ:13} consider transnormal models to allow for ultrahigh dimensional covariates.

From a practical perspective, the factors $f_k^\tau(\bX)$ in \eqref{eq:fintro} are unobservable, so computing the parameters $\Psi_{\tau,kj}$ can be challenging. To tackle this challenge, instead of pre-estimating the factors, we adopt a \emph{one-shot} approach that simultaneously estimates the factors and the loadings. In particular, assume additionally that $f_k^\tau(\bX_i)$ is linear in $\bX\in\R^p$, that is, $f_k^\tau(\bX_i) \defeq \Phi_{\tau,*k}^\top\bX_i$, where $\Phi_{\tau,*k}\in\R^p$. The model \eqref{eq:fintro} can be written as
\begin{align}
q_j(\tau|\bX_i) = \bGamma_{\tau,*j}^\top \bX_i, \quad i=1,...,n, \quad j=1,...,m=\dim(\bY),\label{eq:mintro}
\end{align}
where $\bGamma_{\tau,*j} = \Phi_\tau\Psi_{\tau,*j}$ is the $j$th column of $\bGamma_\tau$, which is a $p\times m$ coefficient matrix. Note that this implies $\bGamma_\tau = \Phi_\tau\Psi_\tau$, where $\Phi_\tau\in\R^{p\times r_\tau}$ linearly transforms $\bX$ to factors $\Phi_\tau^\top \bX_i=(f_1^\tau(\bX_i),f_2^\tau(\bX_i),...,f_{r_\tau}^\tau(\bX_i))^\top$, and $\Psi_\tau\in\R^{r_\tau\times m}$ are the loadings defined in \eqref{eq:fintro}, with $j$th column $\Psi_{\tau,*j}$ corresponding to output $Y_j$. 
Here, we normalize $\|\Psi_{\tau,k*}\|_2=1$ so that the model is identifiable. If the matrix $\bGamma_\tau$ in \eqref{eq:mintro} is available, a factorization of $\bGamma_\tau$ gives factors and loadings simultaneously; see Section \ref{sec.ex}. 


We observe that the rank of $\bGamma_\tau$ is the number of factors $r_\tau$, so the rank of $\bGamma_\tau$ is small when $r_\tau$ is small. Therefore, estimation of $\bGamma_\tau$ should exploit this low rankness property, so that the estimation can be stable even when both $m$ and $p$ are large relative to $n$. Recall that the rank corresponds to the number of nonzero singular values. Therefore, with a tuning parameter $\lambda>0$, an estimation with singular values regularization is considered
\begin{align}
\Gamah &\defeq \operatorname{arg}\,\underset{\Sb \in \R^{p \times m}}{\operatorname{min}}\big\{L_\tau(\Sb) \defeq \widehat Q_\tau(\Sb)+\lambda \|\Sb\|_*\big\},  \label{peml}
\end{align}
where $\|\Sb\|_*$ denotes the nuclear norm, which is the sum of singular values, and
\begin{align}
\widehat Q_\tau(\Sb)\defeq (mn)^{-1} \sum_{i=1}^n \sum_{j=1}^{m} \rho_\tau\big(Y_{ij}-\bX_i^\top\Sb_{\ast j}\big),	\label{eq:Qhat}
\end{align}
in which $\rho_\tau(u) = u(\tau-\IF\{u \leq 0\})$ is the "check function" \citep{KB:1978}. Note that $\widehat Q_\tau$ is similar to the loss function used in \cite{KP:90} for the low dimensional case, i.e. $m$ and $p$ do not diverge with $n$. For mean regression, the nuclear norm penalty has been considered by many authors \citep{YELM:2007,BSW:2011,NW:11,KLT:2011,MP13,MPR16}. 

In \eqref{peml}, $\Gamah$ is the minimizer of a convex empirical risk. Its theoretical guarantee has been well studied \citep{K:11} and its convergence rate is the best of all admissible estimators. Unfortunately, in reality, many off-the-shelf optimization methods only solves the optimization problem \eqref{peml} \emph{approximately} with an error $\delta>0$. The optimization error $\delta$ may be nonzero for various reasons. For example, $\tG$ may be computed with a surrogate loss function of $L_\tau$ that is easier to optimize, or $\tG$ is the outcome of an iterative procedure, in which each iteration involves costly subroutines such as singular value decomposition. A question arises naturally: \emph{can $\tG$ estimates $\bGamma_\tau$, as accurately as $\Gamah$ in some statistical sense?}  

The main goal of this paper is to provide an affirmative answer to the above question. In particular, we prove a non-asymptotic bound for $\tG-\bGamma_\tau$, that has similar convergence rate as that of $\Gamah-\bGamma_\tau$, under the condition that $\delta$ is smaller than an explicit upper bound depending on $n$, $m$ and $p$. This result is then be applied to prove that the estimator from a simple algorithm has the same convergence rate as $\Gamah$, because its optimization error can be marginalized if parameters of the algorithm are well-chosen. Theoretical tools developed in this paper may be potentially useful for other convex problems where finding an exact optimizer is expensive and unrealistic.

In addition to the theoretical guarantees, we experiment the numerical procedure using Monte Carlo simulations with i.i.d. and dependent design. The outputs variables are generated from a two-piece normal distribution \citep{W14}, which has been used for the inference of inflation rate by central banks \citep{W99}. The results show that our numerical procedure can correctly identify the number of factors. For an empirical illustration of our method, we estimate the market systemic risk from a large pool of assets, and compute the exposure of each asset to the systemic risk. As our method is scalable to high dimensional data, we are able to overcome the computational barriers inherent in the existing studies \citep{AB16,WKM10caviar2}.  

Lastly, we remark that multitask linear models have been considered in applications where the tails of distribution are the focus of interest. A recent paper \cite{CHH18} explores the functional magnetic resonance imaging (fMRI) data with multitask \emph{expectile} regression. They propose an iterative shrinkage algorithm, and show finite-sample convergence rate of the estimator while taking the optimization risk into account. The task undertaken in the current paper is much more challenging than \cite{CHH18} from both computational and theoretical aspects, because of the non-smoothness of the quantile regression loss function. The model in this paper is more appropriate when response variables are heavy-tailed as moment conditions are not required here.

\bigskip
The rest of this paper is organized as follows. Section \ref{sec.al} discusses the numerical procedure for estimating the coefficient matrix $\bGamma_\tau$ and the factors and loadings. The selection of $\lambda$ is also presented. Section \ref{sec:ora} provides non-asymptotic analysis for $\tG$ and characterizes the sufficient condition on $\delta$. The estimator for factors and loadings are also investigated. Results on Monte Carlo experiments are presented in Section \ref{sec.si}, and an application on financial systemic risk is shown in Section \ref{sec.app}. Section \ref{sec:conc} concludes this paper. Appendix contains the detailed development of the algorithm in Section \ref{sec.al} and the proof of key theoretical results. Other proofs and technical details are shifted to the supplementary materials.
\vspace{0.3cm}

\noindent\textbf{Notations.} 
Notations associated with matrices will be used extensively in this paper. For a matrix $\Ab = (A_{ij}) \in \R^{p \times m}$, denote the singular values of $\Ab$: $\sigma_1(\Ab) \geq \sigma_2(\Ab) \geq ... \geq \sigma_{p \wedge m}(\Ab)$. $\sigma_{\max}(\Ab)$ and $\sigma_{\min}(\Ab)$ for the largest and smallest singular values of $\Ab$. Let $\|\Ab\|=\sigma_{\max}(\Ab)$ be the spectral norm, $\|\Ab\|_*$ be the nuclear norm and $\|\Ab\|_{\rm F}$ be the Frobenius norm. Denote $\Ab_{\ast j}$ and $\Ab_{i \ast}$ as the $j$th column vector and the $i$th row vector of $\Ab$. $\Ib_p$ denotes the $p \times p$ identity matrix. For vectors $\ba_1,...,\ba_m$ in $\R^p$, denote $[\ba_1\ \ba_2\ ...\ \ba_m] \in \R^{p\times m}$ a matrix with $\ba_j$ being its $j$th column. $\IF(S)$ is the indicator function, which is one when the statement $S$ is true. 

\section{Approximate Estimator and Estimation}\label{sec.al}

In this paper, the approximate estimator $\tG$ is assumed to satisfy
	\begin{align}
		0\leq L_\tau(\tG)- L_\tau(\Gamah) \leq \delta. \label{eq:appest}
	\end{align}
	for some $\delta\geq 0$, where $L_\tau$ is the empirical risk in \eqref{peml}.

Section \ref{sec:coef} presents an algorithm that computes an $\tG$, and its optimization error $\delta$ will be characterized. Given $\tG$, Section \ref{sec.ex} describe the ways to estimate factors and loadings from $\tG$. Section \ref{sec:tune} discusses the choice of tuning parameter $\lambda$.

\subsection{Coefficient Matrix}\label{sec:coef}

The proposed estimation procedure combines the Fast Iterative Shrinkage-Thresholding Algorithm (FISTA) of \cite{BT:09} and the smoothing technique of \cite{N:05}. Similar approach is previously applied to estimate regression models with penalties that induce complex structural sparsity \citep{CLK:2012}. Comparing with other existing methods, this approach is more scalable to higher dimension than the semidefinite programming (SDP, \cite{FHB01,SRJ05}), and is more stable than the non-convex reformulations \citep{RS05,WKLS08,WKS08}. See \cite{CSP17} for a recent account on the latter issue. 

Specifically, the first step is to smooth $\hat Q_\tau$ in \eqref{eq:Qhat} with a surrogate function $\hat Q_{\tau,\kappa}$ parameterized by a smoothing parameter $\kappa$ \citep{N:05}. The surrogate loss function converges to the original loss function as $\kappa\to 0$. $\hat Q_{\tau,\kappa}$ has a good property that $\nabla \hat Q_{\tau,\kappa}$ is globally Lipschitz with constant $Lip=(\kappa m^2 n^2)^{-1}\|\Xb\|^2$, where $\Xb=[\bX_1 \,...\,\bX_n]^\top\in\R^{n\times p}$ is the design matrix. Next, FISTA \citep{BT:09} is applied on the modified loss function with step size $Lip^{-1}$. The procedure is summarized below, and the details are shifted to Section \ref{sec:alg_detail}.
\begin{enumerate}
	\item[Step 1:] Given $\kappa>0$, $\hat Q_{\tau,\kappa} = \texttt{SMOOTH}(\hat Q_\tau)$;
	\item[Step 2:] Given $\lambda>0$ and an initial estimator $\bGamma_{\tau,0}$, for each $t=1,2,..,T$, apply \texttt{FISTA} step on the minimization problem $\min_{\Sb\in\R^{p\times m}}\{\tilde L_\tau(\Sb)\defeq\hat Q_{\tau,\kappa}(\Sb)+\lambda\|\Sb\|_*\}$ with step size $\kappa m^2 n^2\|\Xb\|^{-2}$. Return the last iterate $\Gamat$.
\end{enumerate}
The quantity $\kappa$ is the key that controls both the smoothing quality and the step size of FISTA. Small $\kappa$ leads to smaller smoothing error, but slows down the convergence. Therefore, there exists a tradeoff between smoothing error and the speed of convergence. In our simulation and data application, we typically set $\kappa$ between $10^{-4}$ and $10^{-7}$, and 3000 to 4000 iterations are usually sufficient for convergence.

\bigskip
The next theorem shows that $\Gamat$ is an approximate estimator in the sense of \eqref{eq:appest}.

\begin{theorem}\label{thm.qr} 
Recall that $\Gamah$ is the optimal solution for minimizing \eqref{peml} and let $\Gamai\defeq\lim_{T\to\infty}\Gamat=\arg\,\min_{\Sb}\{\tilde L_\tau(\Sb)=\widehat Q_{\tau,\kappa}(\Sb)+\lambda \|\Sb\|_*\}$. 
Then for any $T$, $\tG=\bGamma_{\tau,T}$ satisfies \eqref{eq:appest} with
\begin{align}
	 &\delta = \delta(T,n,\kappa,\tau,\Xb)= \frac{3\kappa mn (\tau \vee \{1-\tau\})^2}{2} + \frac{2 \|\bGamma_{\tau,0}-\bGamma_{\tau,\infty}\|_{\rm F}^2}{(T+1)^2} \frac{\|\Xb\|^2}{\kappa m^2n^2}. \label{qr.bound}
\end{align}	
\end{theorem}
See Section \ref{proof.thmqr} in the supplementary material for a proof for Theorem \ref{thm.qr}. This theorem shows that the proposed numerical procedure provides an approximate optimizer in the sense of \eqref{eq:appest}. The first term on the right-hand side of \eqref{qr.bound} is related to the smoothing error $\hat Q_{\tau,\kappa}$ (Step 1), while the second term is related to the FISTA algorithm (Step 2). The quantile level $\tau$ enters \eqref{qr.bound} by the term $\tau \vee \{1-\tau\}$, which increases when $\tau$ approaches the boundaries of the interval $[0,1]$. 

\subsection{Factors number, factors and Loadings}\label{sec.ex}

Factorizing the true coefficient matrix $\bGamma_\tau = \Phi_\tau \Psi_\tau$ allows to compute the factors $f_k^\tau(\bX)=\Phi_{\tau,*k}^\top\bX_i$ and loadings $\Psi_{\tau,kj}$ for $j=1,...,m$ and $k=1,...,r_\tau$ at one shot. However, a potential problem here is that the decomposition $\bGamma_\tau = \Phi_\tau \Psi_\tau$ is not unique. In particular, for any invertible matrix $\Pb \in \R^{r\times r}$, we have $\Phi_\tau \Psi_\tau = \Phi_\tau \Pb\Pb^{-1}\Psi_\tau$. Therefore, to fix such a matrix $\Pb$, we apply the constraint in equation (2.14) on page 28 of \cite{RV:98}: let the singular value decomposition $\bGamma_\tau=\Ub_\tau \Db_\tau \Vb_\tau^\top$, where the singular vectors associated with the zero singular values are not included in the expression, and $\Ub_\tau^\top\Ub_\tau = \Vb_\tau^\top\Vb_\tau = \Ib_{r_\tau}$. We set
\begin{align}
	\Psi_\tau = \Vb_\tau \mbox{ and }\Phi_\tau = \Db_\tau^\top \Ub_\tau^\top. \label{eq:faciden}
\end{align} 
In practice, using singular value decomposition $\tG=\tilde\Ub_\tau \tilde\Db_\tau \tilde\Vb_\tau^\top$, the factors and loadings can be estimated similarly as \eqref{eq:faciden}:
\begin{align}\label{eq:facidenem}
	\begin{split}
\hat f_k^\tau(\bX_i)&=\tilde\sigma_k \tilde\Ub_{\tau,*k}^\top \bX_i,\\
\hat\Psi_\tau &= \tilde\Vb_{\tau}, 
	\end{split}
\end{align} 
where $\tilde\sigma_k$ is the $k$th largest singular value of $\tG$. 

The nuclear norm penalty in our loss function \eqref{peml} shrinks most singular values to 0, so typically $\tilde r_\tau = \max\{k:\tilde\sigma_k>0\}$ is small relatively to $n$, $p$ and $m$. Note that when $r_\tau$ is not small, the proposed method can still provably work, if the importance (measured by the singular values) of latter factors is small; see Remark \ref{rmk:notlowrank} for details. In practice, the number of factors $r_\tau$ is estimated by $\hat r_\tau = \max\{k:\tilde\sigma_k>\epsilon\}$ for, e.g. $\epsilon = 10^{-10}$, as adopted in this paper.

\subsection{Tuning}\label{sec:tune}

It is crucial to appropriately select $\lambda$ for the problem \eqref{peml}. We propose two ways to select $\lambda$. The first method is based on simulation. In particular, define the random variable 
\begin{align}
	\Lambda_\tau \defeq (nm)^{-1} \|\Xb^\top \widetilde\Wb_\tau\|, \label{pivotal}
\end{align}
where $(\widetilde W_\tau)_{ij} = \IF(U_{ij} \leq \tau)-\tau$, and $\{U_{ij}\}$ are i.i.d. uniform (0,1) random variables for $i=1,...,n$ and $j=1,...,m$, independent from $\bX_1,...,\bX_n$. The random variable $\Lambda_\tau$ is pivotal conditioning on design $\Xb$, as it does not depend on the unknown $\bGamma_\tau$. The formula in \eqref{pivotal} arises from the subgradient $\nabla \widehat Q_\tau(\bGamma_\tau)=(nm)^{-1}\Xb^\top \widetilde\Wb_\tau$. Set 
\begin{align}
	\lambda = \lambda_\tau = 2 \cdot \Lambda_\tau(1-\eta|\Xb), \label{qpivotal}
\end{align}
where $\Lambda_\tau(1-\eta|\Xb) \defeq (1-\eta)$-quantile of $\Lambda_\tau$ conditional on $\Xb$, for $0<\eta<1$ close to 0, for instance $\eta=0.1$. The constant 2 in \eqref{qpivotal} is mainly for the convenience of theoretical development, and it can be replaced by any constant greater than 1. 
In practice, when $n$ is large enough, the constant has little effect on the estimated number of factors, which is shown in our empirical study; see the left panel of Figure \ref{pcts}.

For estimating the number of factors, simulation study in Section \ref{sec.si} suggests that the tuning parameter given by \eqref{qpivotal} sometimes leads to too small $r_\tau$. Alternatively, we propose to choose $\lambda$ by minimizing the penalized testing error:
\begin{align}
	\min_{\lambda}\E\big[\widehat Q_\tau(\tG^\lambda)\big] + \lambda \|\tG^\lambda\|_*, \label{eq:cv}
\end{align}
where we include the superscript $\lambda$ to $\tG$ to emphasize its dependence on the tuning parameter $\lambda$, and $\hat Q(\cdot)$ is defined in \eqref{eq:Qhat}. The expectation in \eqref{eq:cv} can be approximated by an empirical average with a testing data set, in which the estimator $\tG^\lambda$ is computed with a training data set that has no overlap with the testing data, and the minimizer can be found by grid search. To select the grid points, the value in \eqref{qpivotal} can be used as the upper bound of the grid points. The performance of \eqref{eq:cv} will be evaluated by simulation in Section \ref{sec:facno}.

\section{Theory}\label{sec:ora}

Section \ref{sec:ora2} develops high probability error bound for the approximate optimizer $\tG$. The bound can be applied with Theorem \ref{thm.qr} to derive a bound for the estimator $\Gamat$ proposed in Section \ref{sec:coef}. Section \ref{sec:rfacloa} characterizes the risk of the factors and loadings estimator.

\subsection{Stochastic Risk of the Approximate Estimator \texorpdfstring{$\tG$}{$\Gamma_{\tau,t}$}}\label{sec:ora2}

The following assumptions are introduced.
\begin{itemize}
	\item[(A1)]\label{A1} (Sampling setting) Samples $(\bX_1,\bY_1),...,(\bX_n,\bY_n)$ are i.i.d. copies of $(\bX,\bY)$ random vectors in $\R^{p+m}$ with $p,m\geq 3$. $F_{Y|\bX}^{-1}(\tau|\bx) = \bx^\top\bGamma_{\tau,*j}$, where $Y_j$ is the $j$th variable in $\bY$. Moreover, for each $i$, $\bW_{\tau,i\ast}\defeq \{(\IF(Y_{ij}-\bX_i^\top \bGamma_{\tau,\ast j}\leq 0)-\tau)\}_{1\leq j\leq m}$ are i.i.d.  
	\item[(A2)]\label{A2} (Covariates) $\bX$ is centered with covariance matrix $\bSigma_X$. Assume the density function of $\bX$ exists. Suppose $0<\sigma_{\min}(\bSigma_X)<\sigma_{\max}(\bSigma_X)<\infty$, and there exist constants $B_p\geq 1$, $c_1,c_2>0$ such that $\|\bX_i\|$ and the sample covariance matrix $\hSigma_{\bX} = \frac{1}{n}\Xb^\top\Xb$ satisfies
	\begin{align}
		\P\big\{\sigma_{\min}(\hSigma_X)\geq c_1 \sigma_{\min}(\bSigma_X), \sigma_{\max}(\hSigma_X)\leq c_2 \sigma_{\max}(\bSigma_X), \|\bX_i\|\leq B_p \big\} \geq 1-\gamma_n, \label{cond.cov}
	\end{align}
	for a sequence $\gamma_n\to 0$.
	\item[(A3)]\label{A3} (Conditional densities) There exist constants $\bar f>0$, $\fline >0$ and $\bar f'<\infty$ such that 
	\begin{align*}
	&\max_{j\leq m}\sup_{\sbx,y}\big|f_{Y_{j}|\bX}(y|\bx)\big|\leq \bar f,\  \max_{j\leq m}\sup_{\sbx,y}\bigg|\frac{\partial}{\partial y_j}f_{Y_{j}|\bX}(y|\bx)\bigg| \leq \bar f',\ 
	\min_{j \leq m}\inf_{\sbx} f_{Y_{j}|\bX}(\bx^\top\bGamma_{\tau,*j}|\bx) \geq \fline, 
\end{align*}
where $f_{Y_{j}|\bX}$ is the conditional density function of $Y_{j}$ on $\bX$.
\end{itemize}

The i.i.d. condition in Assumption \hyperref[A1]{(A1)} allows to bound some tail probability with sharp random matrix theory (see Remark \ref{rem:noniid}). This may be replaced by $m$-dependent or weak dependent conditions, but the theory will be more complicated, which is left for future research. In Assumption \hyperref[A2]{(A2)}, $\bX$ is centered. $B_p$ can be assumed bounded by a constant (for example, p.2 of \cite{MP13} and Theorem 1 of \cite{AKLMY16}), but generally $B_p\asymp \sqrt{p}$ if each component of $\bX$ is bounded almost surely. Eigenvalue bounds in \eqref{cond.cov} hold when the components in $\bX$ have light tail; see, for example, \cite{V12}. \hyperref[A3]{(A3)} is standard in quantile regression literature \citep{BC:2011}. Note that $\fline$ decreases when $\tau$ approaches 0 or 1.

The next lemma gives the bound for $n^{-1}\|\Xb^\top \Wb_\tau\|$, where $\Wb_\tau = \{W_{\tau,ij}\}_{ij}= \{(\IF(Y_{ij}-\bX_i^\top \bGamma_{\tau,\ast j}\leq 0)-\tau)\}_{1\leq i\leq n,1\leq j\leq m}$ is an $n\times m$ matrix. The detailed proof can be found in the supplementary material. 

\begin{lemma}\label{lem.rate}
Assume \hyperref[A1]{(A1)} and \hyperref[A2]{(A2)} hold.
\begin{enumerate}
	\item For arbitrary $u>1$, with probability greater than $1-3e^{-(u-1)(p+m)\log 8}-\gamma_n$,
	\begin{align}
	\frac{1}{n}\|\Xb^\top \Wb_\tau\| \leq C^* \sqrt{u \sigma_{\max}(\bSigma_X)K(\tau)}\sqrt{\frac{p+m}{n}}, \label{rate.2norm}
	\end{align}
	where $C^* = 4\sqrt{\frac{c_2}{C'}\log 8} \vee 1$, $C'$ and $c_2$ are absolute constants given by Lemma \ref{lem.hoef} in the supplementary material and Assumption \hyperref[A2]{(A2)}, and
	\begin{align}\label{eq:Ktau}
		K(\tau) \defeq \left\{\begin{array}{ll}
					0, &\ \tau=0,1;\\
					\frac{2\tau-1}{2\{\log \tau-\log (1-\tau)\}}, &\ \tau \in (0,1)\backslash\{1/2\};\\
					1/4, &\ \tau=1/2,
					\end{array}
				\right.
	\end{align}
	\item With probability 1, for any $0<\eta<1$,
	\begin{align}
		\Lambda_\tau(1-\eta|\Xb) &\leq \bar\lambda := \frac{C^*}{m} \sqrt{\Big(1-\frac{\eta-\gamma_n}{3 (p+m)\log 8}\Big)\sigma_{\max}(\bSigma_X)K(\tau)}\sqrt{\frac{p+m}{n}} \label{eq:bddquant}
	\end{align}
	where $\gamma_n\to 0$ is defined in Assumption \hyperref[A2]{(A2)}.
\end{enumerate} 
\end{lemma}
See Section \ref{sec:proof_lem_rate} for a proof of Lemma \ref{lem.rate}. The constant $K(\tau)$ is the sub-Gaussian norm of the binary random variable $W_{\tau,ij}$ \citep[Theorem 3.1]{BM13}. Particularly, $K(\tau)$ is a concave function of $\tau\in[0,1]$ and is symmetric about $\tau=1/2$. The maximum of $K(\tau)$ is 1/4 at $\tau=1/2$. In addition, $K(\tau)\geq \tau(1-\tau)$ [Eqn. (9) on p.36 of \cite{BM13}]. See Lemma 2.1 of \cite{BM13} for more on $K(\tau)$.

The next result presents a non-asymptotic risk bound of the approximate optimizer $\tG$, when the optimization $\delta$ is well controlled. The key ingredient in its proof is the convexity arguments and a new tail probability bound for the empirical process $\GG_n\{\hQ_\tau(\bGamma_\tau+\bDelta)-\hQ_\tau(\bGamma_\tau)\}$, which builds on a sharp bound for the spectral norm of a partial sum of random matrices \citep{MP13,T:11}. Define
	\begin{align}
		\nu_\tau(\delta) &\defeq \frac{3}{8}\frac{\fline}{\bar f'} \inf_{\bDelta \in \cK(\bGamma_\tau;\delta)\atop\bDelta \neq 0} \frac{\big(\sum_{j=1}^m \E[|\bX_i^\top\bDelta_{*j}|^2]\big)^{3/2}}{\sum_{j=1}^m \E[|\bX_i^\top\bDelta_{*j}|^3]},\label{eq:nut}
\end{align}
where $\cK(\bGamma_\tau;\delta)$ is a "star-shaped" set of matrices defined in \eqref{eq:conea} [see more details there]. Note that $\cK(\bGamma_\tau;\delta_1)\subset\cK(\bGamma_\tau;\delta_2)$ for all $0\leq\delta_1\leq\delta_2$.
	
\begin{theorem}\label{thm:rec}
	Assume that \hyperref[A1]{(A1)}-\hyperref[A3]{(A3)} hold, and $\lambda$ satisfies 
	\begin{align}
		2 \Lambda_\tau(1-\eta|\Xb) \leq \lambda \leq 2 \bar\lambda, \label{lambda}
	\end{align}
	where $\bar\lambda$ is defined in \eqref{eq:bddquant}. Let $\delta \leq C \lambda m^{1/2}n^{-1/2}$ for some constant $C>0$, where $\delta$ is the upper bound of the optimization error in \eqref{eq:appest}. For some $u>1$, assume that $r=\rank(\bGamma_\tau)$ satisfies
	{\small
	\begin{align}
		u \epsilon_{n,\tau,r} \defeq u\frac{(384\sqrt{2}+96C^*)}{\fline\wedge 1}\sqrt{\frac{\sigma_{\max}(\bSigma_X) \vee 1}{\sigma_{\min}(\bSigma_X)\wedge 1}}  \sqrt{\frac{r(m+p \vee B_p)(\log p+\log m)}{m n}} < \nu_\tau(2C m^{1/2} n^{-1/2}),\label{eq:gr}
	\end{align}}
	where $C^*$ is a large constant defined in \eqref{rate.2norm}. Then, 
	\begin{align}
		\|\tG-\bGamma_\tau\|_{L_2(P_X)} &\leq u \epsilon_{n,\tau,r}\label{opred} 
	\end{align}
	with probability at least $1-\eta-\gamma_n-16(pm)^{1-u^2}-3\exp\{-(p+m)\log 8\}$, where $\|\cdot\|_{L_2(P_X)}^2 \defeq m^{-1}\E_{P_X}\|\cdot^\top \bX_i\|_2^2$ is the prediction error; in addition,
	\begin{align}
		\|\tG-\bGamma_\tau\|_{\rm F} \leq u\bigg(\frac{m}{\sigma_{\min}(\bSigma_X)}\bigg)^{1/2} \epsilon_{n,\tau,r}.\label{eq:L2bdd}
	\end{align}
\end{theorem}

See Section \ref{sec:proof_lem_ora} for a proof of Theorem \ref{thm:rec}. A sufficient condition for the bounds \eqref{opred} and \eqref{eq:L2bdd} of Theorem \ref{thm:rec} is $\delta \leq C \lambda m^{1/2}n^{-1/2}$.  We note that the conclusion of Theorem \ref{thm:rec} holds regardless of the algorithm that computes $\tG$, as long as the optimization error $\delta$ satisfies the bound. Corollary \ref{co:rec} provides an application of Theorem \ref{thm:rec} on the estimator $\bGamma_{\tau,T}$ proposed in Section \ref{sec:coef}.

 The error bound $\epsilon_{n,\tau,r}$ defined in \eqref{eq:gr} can be regarded as the stochastic error, which is not related to the optimization error $\delta$. If $p$ and $m$ are fixed with respect to $n$ (low dimensional setting), $\epsilon_{n,\tau,r} = O(n^{-1/2})$. The quantity $r(p+m)$ can be viewed as the actual number of unknown parameters, which has to be much smaller than $n$. The covariates can influence $\epsilon_{n,\tau,r}$ through the condition number $\sigma_{\max}(\bSigma_X)/\sigma_{\min}(\bSigma_X)$ of the covariance matrix $\bSigma_X$ and $B_p$. 
The estimation at $\tau$ close to 0 or 1 is challenging as $\fline$ in the denominator decreases when $\tau$ approaches 0 or 1. 

The rate in \eqref{eq:L2bdd} achieves the same (up to a constant) convergence rate in $p,m$ and $n$ to the multivariate regression for mean \citep{NW:11,KLT:2011}\footnote{Their regression problem (in mean) is analogous to our multivariate quantile regression setting by adjusting their $n$ to $mn$. See Example 1 on page 1075 of \cite{NW:11}. Note also that $\|\bDelta\|_{\rm F}\leq (m/\sigma_{\min}(\bSigma_X))^{1/2} \|\bDelta\|_{L_2(P_X)}$ for a well-behaved $\bDelta$ in our setting.}, which is shown to be unimprovable up to a logarithmic factor in the minimax sense \citep{KLT:2011}. Hence, we conjecture that $\epsilon_{n,\tau,r}$ is \emph{unimprovable} up to a logarithmic factor under the multivariate quantile regression setting, as the convergence rate of quantile regression is typically the same (except for some constants) as the mean regression. If it were true, then $\tG$ is as good as $\Gamah$, because the rate of both are unimprovable. 

\begin{remark}[Comment on the growth condition \eqref{eq:gr}]\label{rmk:nu}
In Theorem \ref{thm:rec}, the growth condition \eqref{eq:gr} guarantees the difference of population quantile loss $Q_{\tau}(\Sb+\bDelta)-Q_{\tau}(\Sb)$ is minorized by a quadratic function for $\bDelta$ inside a well-behaved set. Moreover, it can be easily seen that $\nu_\tau(\delta_1)<\nu_\tau(\delta_2)$ for all $\delta_1>\delta_2$ from the definition of $\nu_\tau(\delta)$ in \eqref{eq:nut}. Section \ref{sec:nu} discusses the details of the growth condition \eqref{eq:gr}. 
\end{remark}

\bigskip
The selection of smoothing parameter $\kappa>0$ has a significant impact on the algorithm in Section \ref{sec:coef}. Indeed, Corollary \ref{co:rec} below proves that estimator $\Gamat$ in Section \ref{sec:coef} achieves the bound in Theorem \ref{thm:rec}, provided that the smoothing parameter $\kappa$ and the number of iterations $T$ satisfy certain conditions.

\begin{corollary}\label{co:rec}
	Assume the conditions of Theorem \ref{thm:rec} and $C=1$ in \eqref{eq:gr}. Suppose in addition that the true coefficient matrix $\bGamma_\tau$ satisfies $\|\bGamma_\tau\|_{\rm F}^2 \leq C_\tau pm$ for some constant $C_\tau>0$, $B_p r(p+m)(\log p+\log m)=o(n)$. Let the initial estimator be $\bGamma_{\tau,t=0}=0$ in the algorithm in Section \ref{sec:coef}, and that
	\begin{align}
		\kappa &\leq \frac{\lambda}{3 m^{1/2} n^{3/2} \{\tau \vee (1-\tau)\}^2}, \quad\mbox{(smoothing parameter)} \label{eq:kas}\\
		T = T(\kappa)&\geq \frac{4 C_\tau^{1/2} p^{1/2}m^{1/2} \|\Xb\|}{\lambda^{1/2}\kappa^{1/2}m^{5/4}n^{3/2}}-1,
		 \quad\mbox{(number of iterations)}\label{eq:Ts}
	\end{align}
	then for $u>1$, \eqref{opred} and \eqref{eq:L2bdd} hold with $\tG=\Gamat$ with probability at least $1-2(\eta+\gamma_n+16(pm)^{1-u^2}+3\exp\{-(p+m)\log 8\})$, where the last bound in \eqref{eq:Ts} uses \eqref{eq:kas}.	
\end{corollary}

See Section \ref{sec:pfcor} for a proof of Corollary \ref{co:rec}. The key component of the proof is to verify that the optimization error of $\Gamat$ is less than $\lambda m^{1/2}n^{-1/2}$. The constants in both \eqref{eq:kas} and \eqref{eq:Ts} can be improved, and we adopt the current form for transparent exposition. In practice, the $\kappa$ based on \eqref{eq:kas} may be too small, and larger $\kappa$ usually performs better, as observed in the simulation analysis in Section \ref{sec:facno}. 
 
\begin{remark}[Many factors]\label{rmk:notlowrank}
When $\bGamma_\tau$ is not exactly small in rank, i.e. when $r_\tau$ is not a fixed number and possibly $r_\tau=\min\{p,m\}$. In this case, we may characterize the recovery performance of $\tG$ using the mathod of \cite{NRWY:12}. See Section \ref{sec:notlowrank} for more details. 
\end{remark}

\subsection{Realistic Bounds for Factors and Loadings}\label{sec:rfacloa}

The estimation error for the estimators for factors and loadings, defined in \eqref{eq:facidenem}, will be stated in terms of the Frobenius error $\|\tG-\bGamma_\tau\|_{\rm F}$. Theorem \ref{thm:rec} can be applied to find the explicit rate for the factors and loadings. 

First we observe that by Mirsky's theorem (see, for example, Theorem 4.11 on page 204 of \cite{SS90}), the singular values can be consistently estimated.
\begin{lemma}\label{lem:mirsky}
Let $\tG$ satisfy \eqref{eq:appest}, then
\begin{align}
	\sum_{j=1}^{p \wedge m}\big\{\sigma_j(\tG)-\sigma_j(\bGamma_\tau)\big\}^2 \leq \|\tG-\bGamma_\tau\|_{\rm F}^2.
\end{align}
\end{lemma}

Next, the error bounds for the factors and loadings are presented.

\begin{theorem}\label{th:facloa}
If the nonzero singular values of matrix $\bGamma_\tau$ are distinct, then with the choice of $\hat\Psi_\tau$ and $\hat f_k^\tau(\bX_i)$ in \eqref{eq:facidenem}, 
\begin{align}
	1-|(\hat\Psi_\tau)_{*j}^\top (\Psi_\tau)_{*j}| \leq \frac{2(2\|\bGamma_\tau\|+\|\tG-\bGamma_\tau\|_{\rm F})\|\tG-\bGamma_\tau\|_{\rm F}}{\min\{\sigma_{j-1}^2(\bGamma_\tau)-\sigma_j^2(\bGamma_\tau),\sigma_{j}^2(\bGamma_\tau)-\sigma_{j+1}^2(\bGamma_\tau)\}}\label{eq:vbound}	
\end{align}
If, in addition, let the SVDs $\bGamma_\tau=\Ub_\tau \Db_\tau \Vb_\tau^\top$ and $\tG=\tilde\Ub_\tau \tilde\Db_\tau \tilde\Vb_\tau^\top$, suppose $(\tilde \Ub_\tau)_{*j}^\top (\Ub_\tau)_{*j}\geq 0$, then
\begin{align}
	&\big|\hat f_k^\tau(\bX_i)-f_k^\tau(\bX_i)\big|\notag\\
	&\leq \|\bX_i\|\bigg(\|\tG-\bGamma_\tau\|_{\rm F}+2\sigma_k(\bGamma_\tau)\sqrt{\frac{(2\|\bGamma_\tau\|+\|\tG-\bGamma_\tau\|_{\rm F})\|\tG-\bGamma_\tau\|_{\rm F}}{\min\{\sigma_{k-1}^2(\bGamma_\tau)-\sigma_k^2(\bGamma_\tau),\sigma_{k}^2(\bGamma_\tau)-\sigma_{k+1}^2(\bGamma_\tau)\}}}\bigg)\label{eq:bfac}
\end{align}
\end{theorem}
See Section \ref{sec:facloa} for a proof for Theorem \ref{th:facloa}. The proof is based on a new Davis-Kahan type inequality of \cite{YWS15}. The inequalities in Theorem \ref{thm:rec} can be applied to find the exact rate for the loadings and factors.

\begin{remark}[Repeated singular values]\label{rmk:angle}
	Theorem \ref{th:facloa} is under the condition that the singular values for $\bGamma_\tau$ are distinct. If there are repeated singular values, then the corresponding singular vectors are not uniquely defined, and we can only obtain a bound for the "canonical angle" (see, for example, \cite{YWS15}) of the subspaces generated by the singular vectors associated with the repeated singular values. 
\end{remark}
\section{Simulation}\label{sec.si}
The performance of the numerical procedure in Section \ref{sec.al} on the factor quantile models will be checked via Monte Carlo experiments in this section. Section \ref{sec:esterr} presents the results on the Frobenius error. Section \ref{sec:facno} presents the performance on estimating $r_\tau$ with i.i.d. data, while Section \ref{sec:facno_ts} focuses on time series data.

\subsection{Estimation Error}\label{sec:esterr}
Given two distinct matrices $\Sb_1,\Sb_2$ with \emph{nonnegative} entries, $\rank(\Sb_1)=r_1$ and $\rank(\Sb_2)=r_2$, the ouputs are simulated from the two-piece normal model \citep{W14}:
\begin{align}
	Y_{ij} = G_\sigma^{-1}(U_{ij})\bX_i^\top&\big((\Sb_1)_{*j}\IF\{U_{ij} \leq 0.5\}+(\Sb_2)_{*j} \IF\{U_{ij} > 0.5\}\big),\label{het.model} \\ 
	&\hspace{4cm} i=1,...,n=500;\ j=1,...,m=300,\notag
\end{align}
where $U_{ij}$ are i.i.d. $U(0,1)$ independent of $\bX_i$; $G_\sigma$ is the cdf of $\mathcal N(0,\sigma^2)$. $\bX_i \in \R^p$ follows a multivariate $U([0,1])$ distribution with covariance matrix $\bSigma$ in which $\bSigma_{ij} = 0.1*0.8^{|i-j|}$ for $j=1,...,p=300$. Simulation of $\bX_i$ follows by the method of \cite{F99}. The number of simulation repetitions is 500.

Because the elements in $\bX_i$ are non-negative, the conditional quantile function $q_j(\tau|\bx)$ of $Y_{ij}$ on $\bx$ for the distribution of $Y_{ij}$ is  
\begin{align}\label{eq:trueq}
q_j^l(\tau|\bx) = \bx^\top G_\sigma^{-1}(\tau) \big(\Sb_1\IF\{\tau \leq 0.5\}+\Sb_2\IF\{\tau > 0.5\}\big)\defeq \bx^\top\bGamma_{\tau,*j},
\end{align}
It follows that $\bGamma_\tau = G_\sigma^{-1}(\tau)\Sb_1$ for $\tau\leq 1/2$ and $\bGamma_\tau = G_\sigma^{-1}(\tau)\Sb_2$ for $\tau>1/2$. In particular, $\bGamma_{1/2}=0$ and the $1/2$-quantile of all responses are equal to 0, because $G_\sigma^{-1}(1/2)=0$ for all $\sigma>0$. Hence, $\tau=0.5$ is the least interesting case, so it is not investigated.

To select $\Sb_1$ and $\Sb_2$, we first fix $s_1=\rank(\Sb_1)=2$ and for $\Sb_2$: 
\begin{itemize}
	\item[I.] Symmetric model: $\Sb_2^{Sym}$ with $r_2^{Sym}=\rank(\Sb_2^{Sym}) = 2$;
	\item[II.] Asymmetrical model: $\Sb_2^{Asym}$ with $r_2^{Asym}=\rank(\Sb_2^{Asym}) = 6$.
\end{itemize} 
The entries of $\Sb_1$, $\Sb_2^{Sym}$ and $\Sb_2^{Asym}$ are selected randomly and fixed for all simulations, and their singular values are distinct. We shift the details on selecting these matrices to Section \ref{sec:s1s2}.  

The algorithm in Section \ref{sec:coef} (specifics in Algorithm \ref{alg_qr}) is applied with $\tau$=5\%, 10\%, 20\%, 80\%, 90\% and 95\% to compute the estimator $\tG^l$ for $\bGamma_\tau^l$, where $l\in\{Sym,Asym\}$. We set $\kappa=10^{-4}$ and stop the algorithm when the change in the loss function is less than $10^{-6}$. The tuning parameter $\lambda$ is selected by the simulation procedure \eqref{qpivotal}. We compare $\tG^l$ with an oracle estimator, which is computed under the knowledge of the true $r_\tau$. 
The performance of $\tG^l$ and the oracle estimator is measured by the Frobenius error to the true coefficient $\bGamma_\tau$. 

The results are reported in Table \ref{tab:fro}. The oracle estimator errors are generally smaller for all $\tau$, and their standard deviation is also lower. When the model variance is larger ($\sigma=1$), the estimation of $\tG^l$ has greater error. The error of $\tG^l$ varies with $\tau$: the error for $\tau=0.05$ or 0.95 is almost twice as large as those for $\tau=0.2$ and 0.8. For the two models, the errors of $\tG^l$ are similar when $\tau$ is less than 0.5. However, when $\tau$ is greater than 0.5, the errors of the asymmetric model is around $\sqrt{r_2^{Asym}/r_2^{Sym}}=\sqrt{6/2} \approx 1.732$ times of that of the symmetric model. The oracle estimator also shows a similar pattern. The outcomes here are consistent with our theoretical analysis in Theorem \ref{thm:rec}, which predicts that the models with a larger rank and with $\tau$ closer to 0 or 1 have greater estimation errors. The prediction errors have similar pattern as the Frobenius error and are omitted for brevity. 


%
\begin{table}[htb]
        \begin{center}
        \begin{tabular}{lrrrrrr}
\hline\hline
$\tau$&$0.05$ &$0.1$ &$0.2$ &$0.8$ &$0.9$ &$0.95$\\
\hline
&\multicolumn{6}{c}{\underline{$\sigma = 0.5$}}    \\[0.7ex]
Symmetric & 60.995 & 48.746 & 34.302 & 33.973 & 48.375 & 60.604 \\
 &{\sl\small(0.253)} & {\sl\small(0.227)} & {\sl\small(0.209)} & {\sl\small(0.202)} & {\sl\small(0.217)} & {\sl\small(0.247)} \\ 
Symmetric Or. & 57.261 & 44.926 & 30.006 & 29.853 & 44.735 & 57.007 \\
			& {\sl\small(0.191)} & {\sl\small(0.152)} & {\sl\small(0.116)} & {\sl\small(0.118)} & {\sl\small(0.152)} & {\sl\small(0.184)} \\
Asymmetric & 60.978 & 48.724 & 34.289 & 60.487 & 85.997 & 108.310 \\
& {\sl\small(0.263)} & {\sl\small(0.220)} & {\sl\small(0.207)} & {\sl\small(0.539)} & {\sl\small(0.567)} & {\sl\small(0.820)} \\
Asymmetric Or. & 57.239 & 44.911 & 30.002 & 54.922 & 80.583 & 102.663 \\
& {\sl\small(0.202)} & {\sl\small(0.164)} & {\sl\small(0.120)} & {\sl\small(0.744)} & {\sl\small(0.464)} & {\sl\small(0.572)} \\
\hline
&\multicolumn{6}{c}{\underline{$\sigma = 1$}}    \\[0.7ex]
Symmetric & 118.245 & 93.419 & 64.289 & 63.634 & 92.519 & 117.365 \\
& {\sl\small(0.570)} & {\sl\small(0.420)} & {\sl\small(0.387)} & {\sl\small(0.382)} & {\sl\small(0.372)} & {\sl\small(0.438)} \\
Symmetric Or.& 113.636 & 88.781 & 58.913 & 58.593 & 88.365 & 113.099 \\
			& {\sl\small(0.427)} & {\sl\small(0.338)} & {\sl\small(0.238)} & {\sl\small(0.221)} & {\sl\small(0.301)} & {\sl\small(0.378)} \\
Asymmetric & 118.259 & 93.434 & 64.291 & 120.338 & 170.904 & 217.185 \\
& {\sl\small(0.530)} & {\sl\small(0.412)} & {\sl\small(0.380)} & {\sl\small(1.151)} & {\sl\small(1.273)} & {\sl\small(1.547)} \\
Asymmetric Or.& 113.647 & 88.788 & 58.911 & 108.754 & 161.303 & 205.371 \\
& {\sl\small(0.387)} & {\sl\small(0.308)} & {\sl\small(0.224)} & {\sl\small(0.711)} & {\sl\small(0.929)} & {\sl\small(1.188)} \\
\hline\hline
        \end{tabular}
        \end{center}
		  \caption{Averaged Frobenius errors with standard deviations. "Or." denotes the oracle estimator, which is estimated under the knowledge of true rank. The numbers in parentheses are standard deviations of the errors. }\label{tab:fro}
        \end{table}

\subsection{Estimating the Number of Factors for i.i.d. Data}\label{sec:facno}

Using the same data generating process \eqref{het.model}, this section shows the performance of estimating the number of factors. We will focus on the asymmetric case, as it is more challenging. Estimation performance for two representative quantile levels $\tau=0.2$ and $0.8$ are shown. Note the the number of factors are $r_{\tau=0.2}=2$ and $r_{\tau=0.8}=6$.

The tuning parameter $\lambda$ is selected by minimizing the \emph{penalized testing error} in \eqref{eq:cv} through grid search. The value in \eqref{qpivotal} is used as the upper bound of the grid points. To compute the penalized testing error \eqref{eq:cv} at a given $\lambda$, we independently generate data 150 times and compute 150 $\tG$. For each $\tG$, a testing error is computed with a testing data set of size $3000$ independent of the training data. Finally, we take an average of the 150 testing errors. The algorithm in Section \ref{sec:coef} (Algorithm \ref{alg_qr}) is applied with $T=4000$ iterations. The estimated number of factor is the number of singular values greater than $10^{-10}$.

Figure \ref{facno_iid} shows the relative frequency of the estimated number of factors and the estimated penalized testing error. The penalized testing error shows a quadratic shape, and there exists a minimum of the penalized testing error as a function of $\lambda$ for both $\tau=0.2$ and 0.8. The $\lambda$ that reaches the minimum for the two quantiles are essentially the same because they are symmetric about 0.5. For $\tau=0.2$ with the number of factor $r_{\tau=0.2}=2$, the number of factors is correctly estimated over 50\% of the time, and in the worst case scenario, the number of factors is misestimated by one. For $\tau=0.8$ with $r_{\tau=0.8}=6$, which is greater than $r_{\tau=0.2}=2$ of $\tau=0.2$. In this case, to ensure accurate estimation, the optimization error $\delta>0$ has to be small. For this purpose, \eqref{qr.bound} in Theorem \ref{thm.qr} implies that one should select smaller $\kappa$ and greater $T$. Therefore, a slightly smaller $\kappa = 4\times 10^{-7}$ is selected for $\tau=0.8$, but $T=4000$ appears to still be sufficient here. The bottom left panel of Figure \ref{facno_iid} shows that the correct number of factors is identified almost 60\% of the time, and at worst it is misestimated by two. We note that the $\lambda$ tuned by \eqref{qpivotal} is 0.0189 which leads to a models with only one factor as observed by unpublished simulations.

\begin{figure}[!h]
	\centering
\includegraphics[width=7cm, height = 7cm]{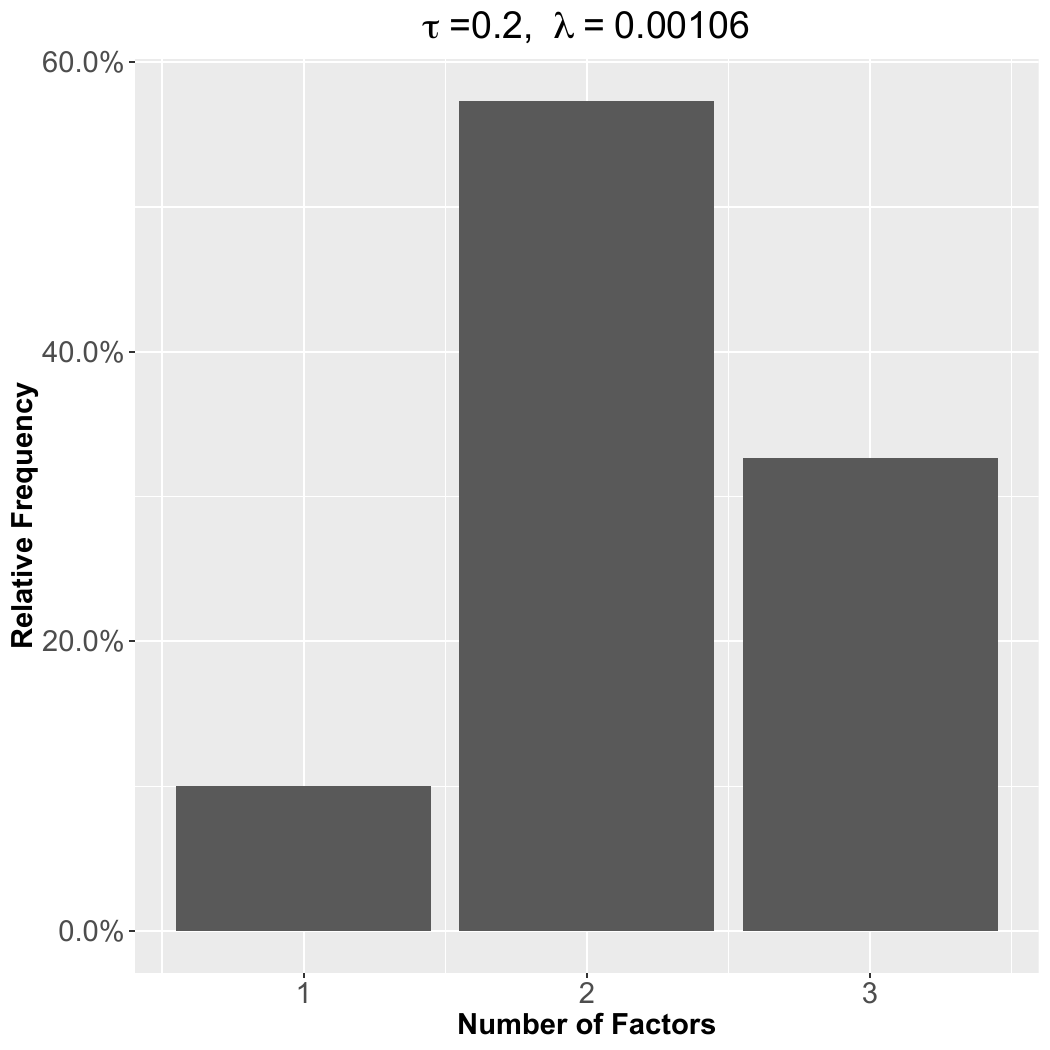}
\includegraphics[width=7cm, height = 7cm]{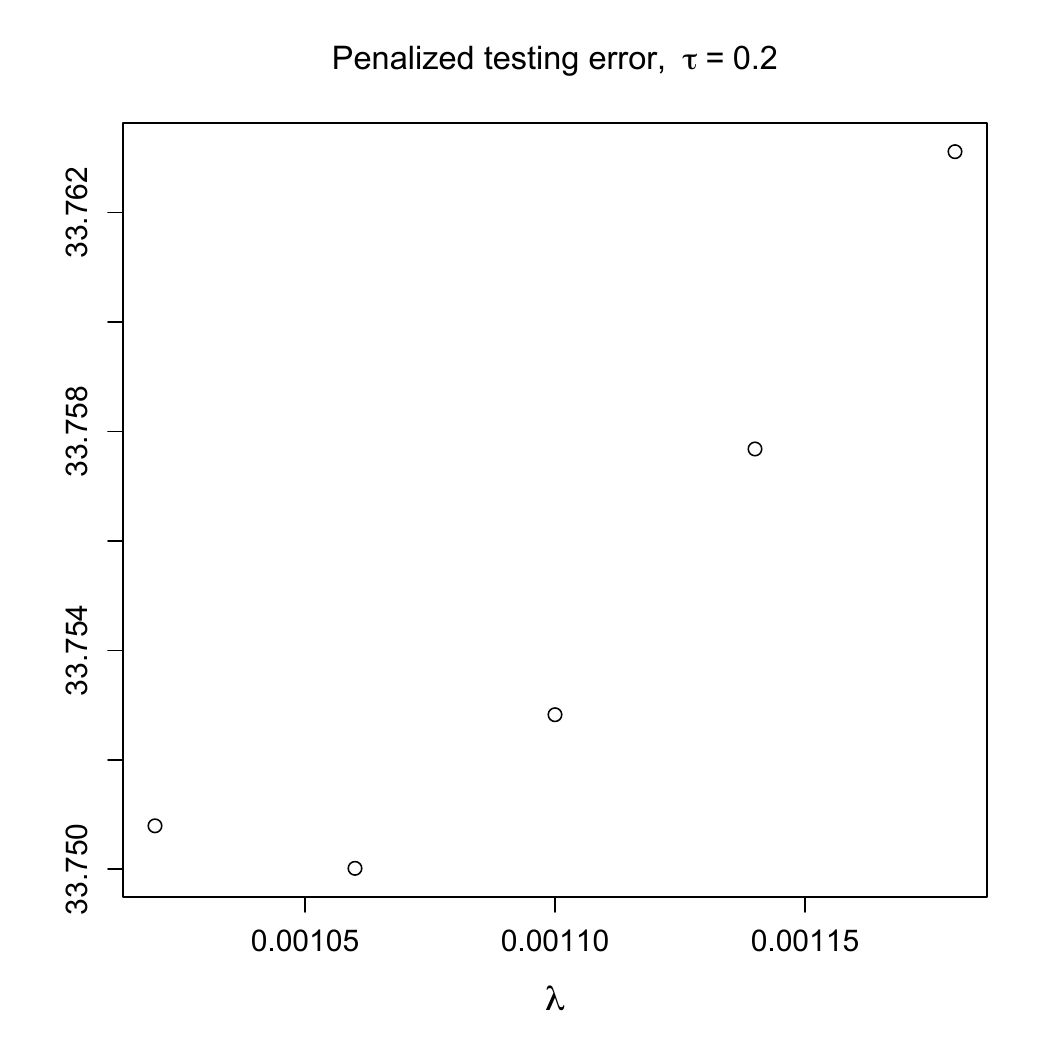}\\
\includegraphics[width=7cm, height = 7cm]{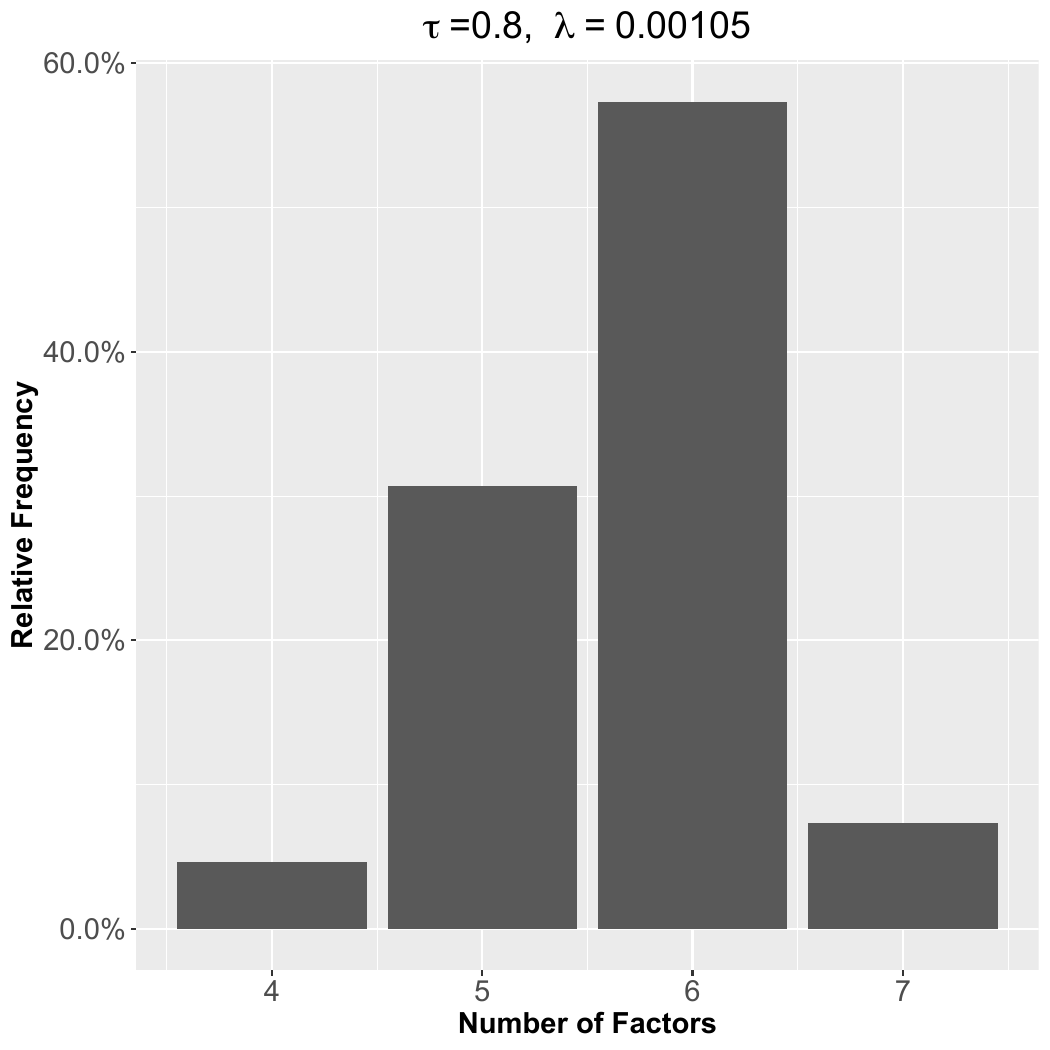}
\includegraphics[width=7cm, height = 7cm]{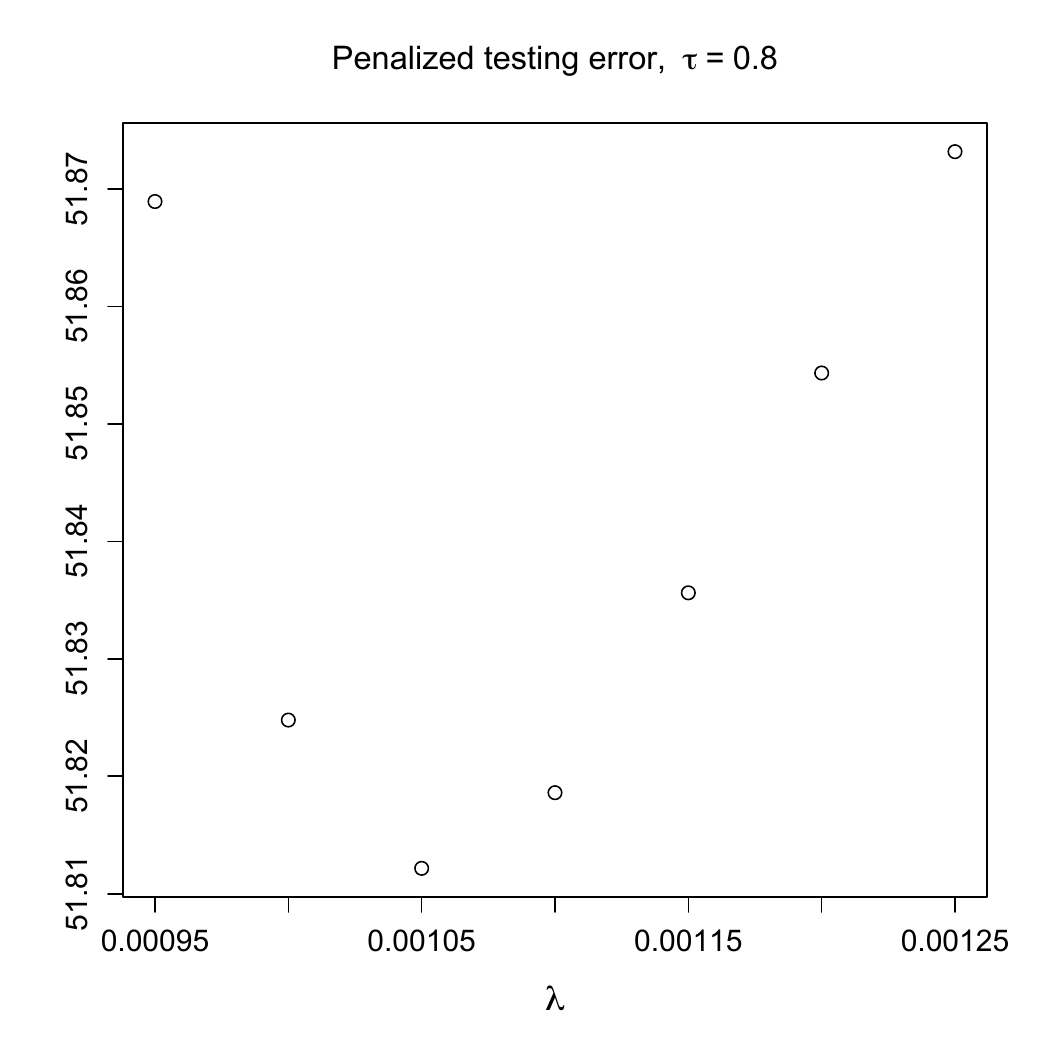}
\caption{Relative frequency of the estimated number of factors with 150 Monte Carlo repetitions, and the plot for the penalized testing error, $\tau=0.2$ and 0.8. Data are generated as \eqref{het.model} where $\bX_i$ is i.i.d. multivariate $U([0,1])$ with positive correlations. $r_{\tau=0.2}=2$ and $r_{\tau=0.8}=6$. $\kappa=6.66\times 10^{-7}$ for $\tau=0.2$ and $\kappa=4\times 10^{-7}$ for $\tau=0.8$. $T=4000$.}\label{facno_iid}
\end{figure}

\subsection{Estimating the Number of Factors for Dependent Data}\label{sec:facno_ts}

In this section, we consider a time dependent design
\begin{align}
	\bX_i = \Bb \bX_{i-1}+\bvep_i, \label{eq:ts_x}
\end{align}
where $\bvep_i$ is a multivariate Gaussian vector with mean zero and covariance $\Sigma_{\sbvep}$. The matrices $\Bb$ and $\Sigma_{\sbvep}$ are both $p\times p$, which are selected to imitate the temporal and cross-sectional dependent structure of the vector in \eqref{input.vari} in the empirical analysis Section \ref{sec.app}, where $p=\mbox{dim}(\bX)=460$ as in \eqref{input.vari}. $\Bb$ is set to be a sparse coefficient matrix, which is estimated by a vector autoregressive model with $\ell_1$ norm penalty \citep{DZZ16, NMB17}. Note that \eqref{eq:ts_x} is unstable, as four singular values of $\Bb$ are greater than one. The details for selecting $\Bb$ and $\Sigma_{\sbvep}$ are in Section S.5. 

The output $\bY$ is generated as \eqref{het.model} with $\bX_i$ in \eqref{eq:ts_x}. Here, we fix $\mbox{rank}(\Sb_2)=2$, while $\mbox{rank}(\Sb_1)=2$ or 3. Under this design, the number of quantile factors $r_\tau=\mbox{rank}(\Sb_1)$ for all $\tau\in(0,1)$ because the distribution of $\bX$ in \eqref{eq:ts_x} is symmetric about the origin. Similar to the situation in Section \ref{sec.app}, we set $n=2765$ and $m=230$. To selection $\lambda$, we adopt the simulation method in \eqref{qpivotal}, which yields $\lambda^*=4.38\times 10^{-3}$. Selecting $\lambda$ by minimizing penalized testing error in \eqref{eq:cv} is feasible, but it is more computationally demanding. As will be shown, $\lambda$ selected by \eqref{qpivotal} can yield accurate estimation in our setting. For all the numerical experiments in this section, $\kappa=1.5\times 10^{-4}$ and the number of iterations $T=3500$ for the algorithm.

Figure \ref{facno_ts_re} shows the performance of factor number estimation at $\tau=0.01$. In 150 Monte Carlo simulations, our method can identify the correct number of factors in over 60\% of the simulations. In the worst case scenario, the number of factors is misestimated by one. Note that the quantile level $\tau=0.01$ is more extreme than the that in Section \ref{sec:facno} and is more relevant for the empirical analysis in Section \ref{sec.app}. The results in Figure \ref{facno_ts_re} show that our method still has a good performance even for extreme quantile level and time series data. We remark that additional numerical experiments with a simpler AR(1) structure is in Section S.6.

As a remark, although numerical results suggest our method is useful even for complex time series data, our theory does not apply to this case. Extension of the theory in this direction is nontrivial, so it is left for future research.

\begin{figure}[!h]
	\centering
\includegraphics[width=7cm, height = 7cm]{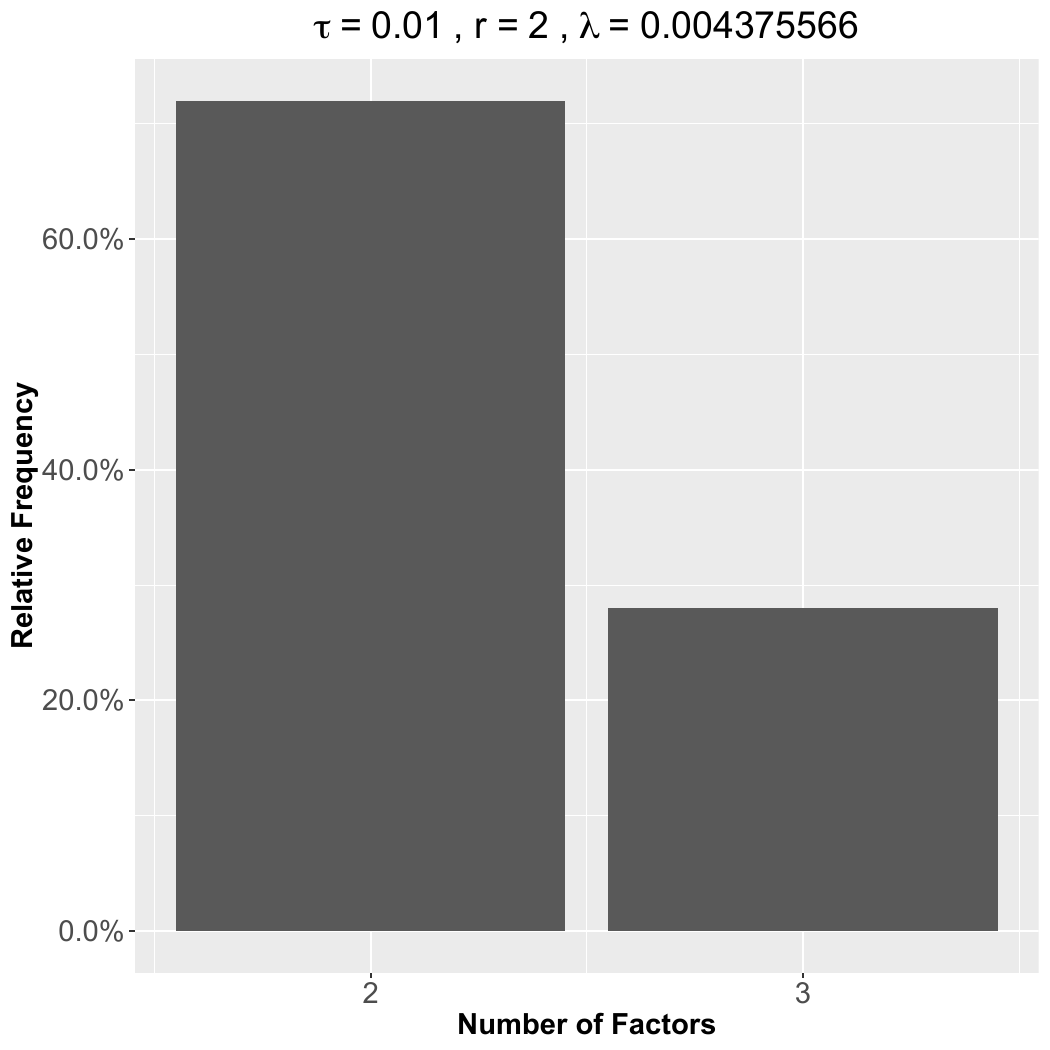}
\includegraphics[width=7cm, height = 7cm]{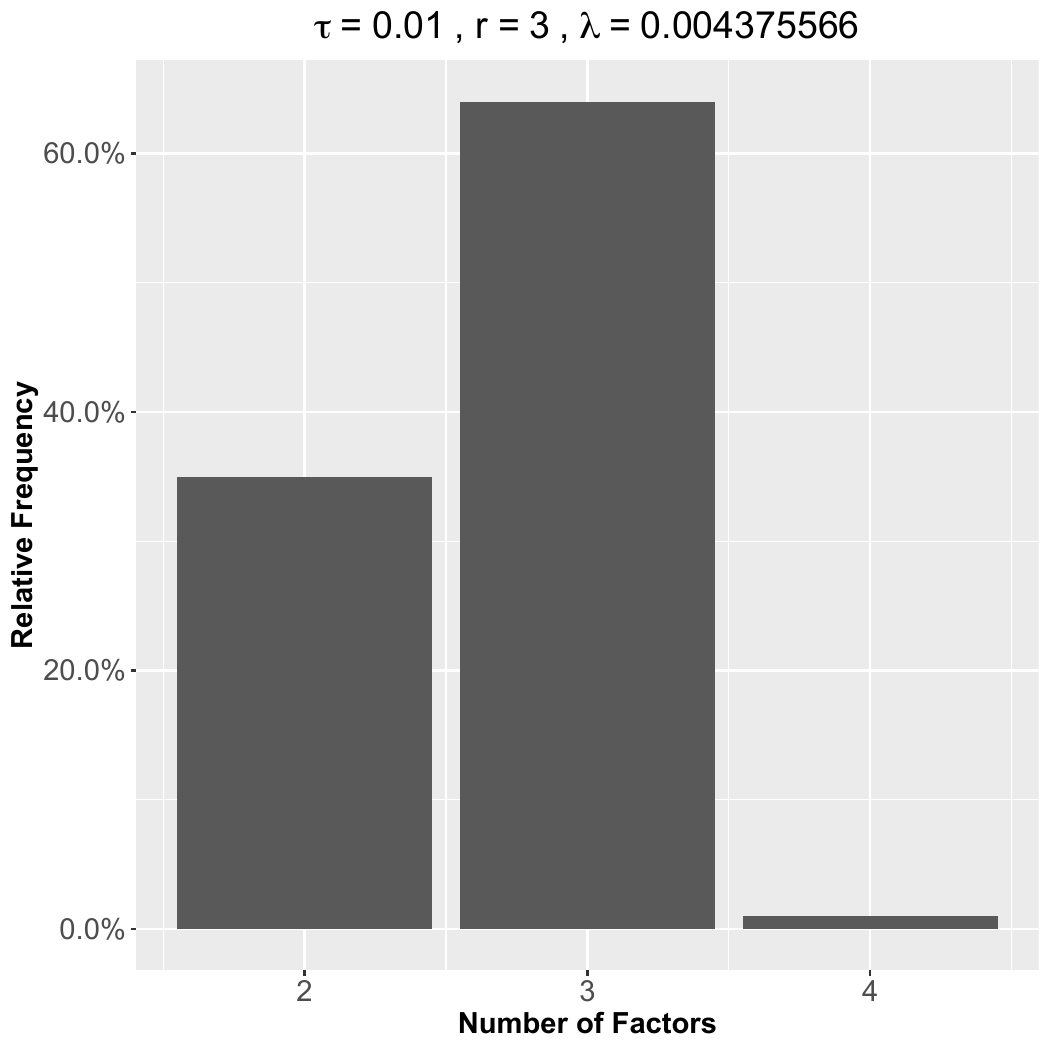}
\caption{The distribution of the estimated number of factors for $r=2$ (left) and $r=3$ (right) with $\tau=0.01$, with covariates simulated from \eqref{eq:ts_x}. The distribution is obtained from 150 Monte Carlo repetitions. $(\kappa,T)=(1.5\times 10^{-4},3500)$. The $\lambda$ is computed with the simulation method in \eqref{qpivotal}.}\label{facno_ts_re}
\end{figure}

\section{Empirical Analysis: Estimating the Systemic Risk}\label{sec.app}

In the aftermath of the financial crisis that started in 2007, governments and supervisory authorities have come to realize the need to quantify the impact of systemic risk on financial institutions. Numerous studies have been made in this direction; see \cite{BFLV12} or \cite{BO13} for a survey. Quantifying the impact of systemic risk is inherently a high dimensional problem, as hundreds or sometimes thousands of financial institutions have to be included in the model in order to make it realistic. Unfortunately, due to excessive computational cost, the models in the existing studies are low dimensional in nature. For example, \cite{AB16} estimate pairwise spillover effect between two institutions by conditioning on a set of variables; \cite{WKM10caviar2} estimate the impact of market shock by performing bivariate vector autoregression (VAR) between an institution and a pre-calculated proxy of the market shock. The proposed multitask quantile regression can fill this gap, as our method can easily scale up to hundreds of response variables and covariates. In addition, no proxy of the market shock needs to be pre-calculated, because the quantile factors obtained by our method summarize the market information that is most relevant to the downside risk. 


We analyze the same set of daily stock closing prices as \cite{WKM10caviar2}, with the same time frame from January 1, 2000 to August 6, 2010. The dataset is downloaded from Dr. Manganelli's personal website. See Table 1 of \cite{WKM10caviar2} for a detailed breakdown of the stocks by sector and country, as well as their averaged market value and leverage (the ratio of short and long term debt over common equity) over the data period. There are $m=230$ financial institutions. The daily log-returns of the stock closing prices are used, and this results in $n=2765$. 

Let $Y_{i,j}$ be the asset return for institution $j$ at time $i$, where $j=1,...,m$ and $i=1,...,n$. For $\tau\in(0,1)$, consider the quantile $q_{j}(\tau|\bX_i) = \bX_i^\top\bGamma_{\tau,\ast j}$ for $Y_j$, where 
\begin{align} 
	\bX_i = (|Y_{i-1,1}|,...,|Y_{i-1,m}|,Y_{i-1,1}^-,...,Y_{i-1,m}^-)^\top \in \R^{2m}, \label{input.vari}
\end{align}
and $Y^- \defeq \max\{-Y,0\}$. The covariate $\bX_i$ captures the fact that the positive or negative lag stock returns have different influence to the return today, which is motivated by \cite{EM04caviar}.

We estimate $\bGamma_\tau$ using the algorithm in Section \ref{sec:coef} (equivalently, Algorithm \ref{alg_qr}) with two quantile levels $\tau=0.01$ and $0.99$. The algorithm is performed with $\kappa=1.5\times 10^{-4}$ and we stop the algorithm when the change in the loss function is less than $10^{-6}$. The factors and loadings are estimated as \eqref{eq:facidenem} in Section \ref{sec.ex}. The tuning procedure in \eqref{qpivotal} yields $\lambda^*=8.53\times 10^{-3}$ for $\tau=0.01$. 
Left panel in Figure \ref{pcts} shows the estimated singular values for $\lambda\leq \lambda^*$. Even when $\lambda$ is smaller than $\lambda^*$ by ten folds, which corresponds to the case of increasing $n$ or $m$ by ten folds, the estimated number of factor for $\tau=0.01$ is still one, which shows the robustness of the estimated $r_\tau$. This result is similar for $\tau=0.99$. This suggests that the number of factor is one for both $\tau=0.01$ and 0.99. For later discussion, we set $\lambda^*$ for both $\tau=0.01$ and $\tau=0.99$ by symmetry.

Figure \ref{pcts} presents the estimated first factors at $\tau=0.01$ and 0.99. 
Both first factors $f_1^{0.01}(X_i)$ and $f_1^{0.99}(X_i)$ are volatile and moving away from 0 at the end of 2008 and in the first quarter of 2009, which corresponds to the periods of financial crisis.  

\begin{figure}[h!]
	\centering
  \includegraphics[width=6.2cm, height = 6cm]{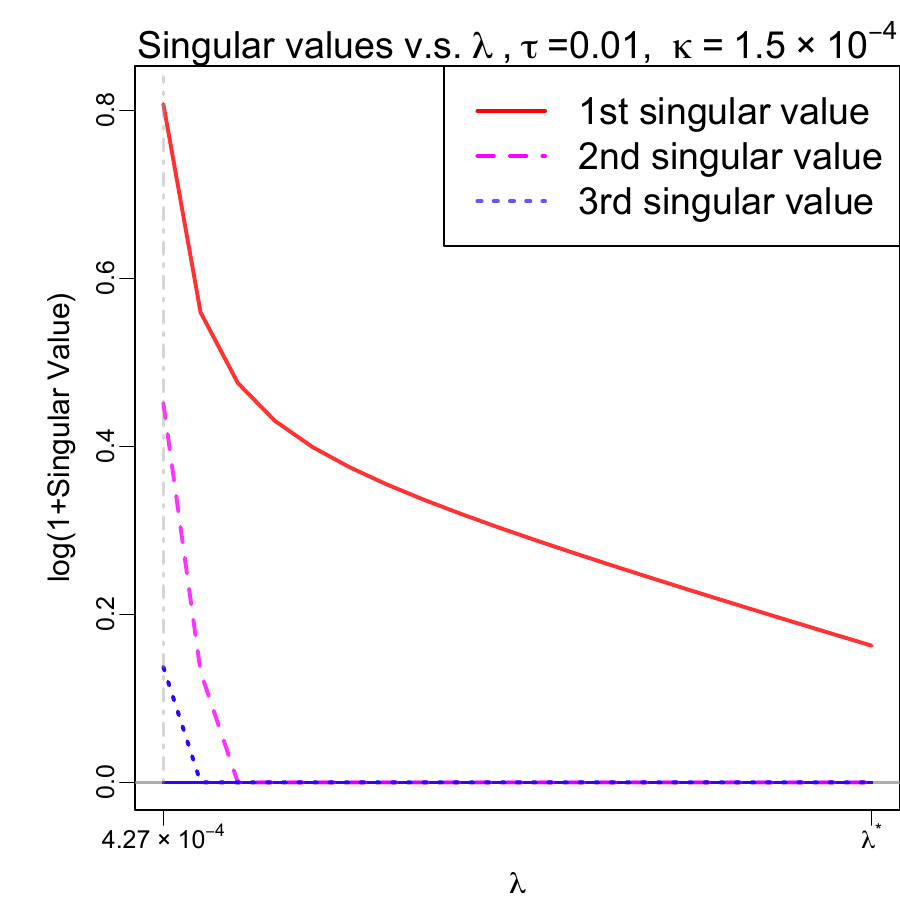}
 \includegraphics[width=6.2cm, height = 6cm]{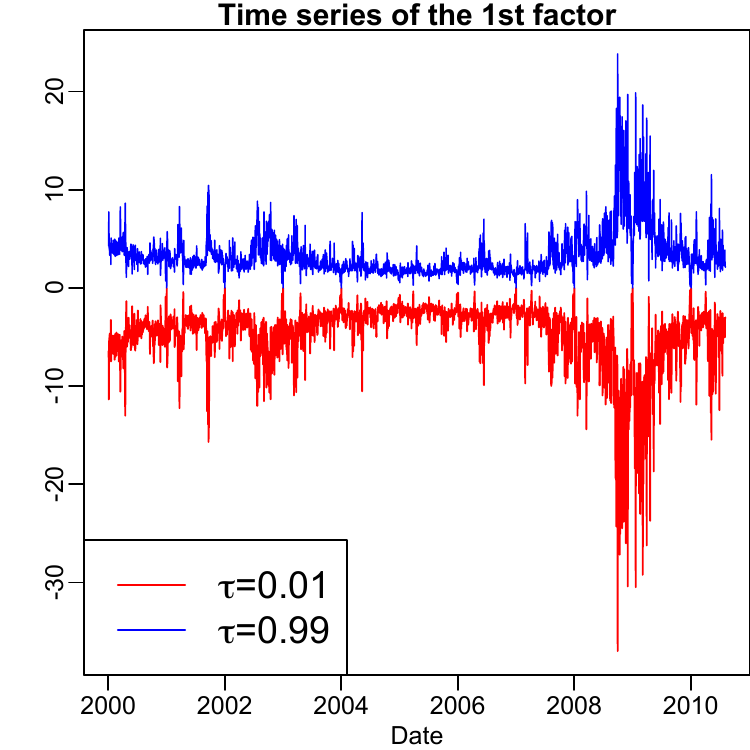}
 \caption{Left panel plots singular values versus $\lambda$. The right end point of x-axis is $\lambda^*=8.53\times 10^{-3}$ selected by the simulation method \eqref{qpivotal}. Right panel: time series plot of the first factor of $\tau=0.01$ (red) and $\tau=0.99$ (blue).}\label{pcts}
\end{figure}

The left panel of Figure \ref{T2T_fin} is the "tail to tail" plot with $\tau=0.01$ and 0.99, on which each point is a pair of loadings $((\hat\Psi_{0.01})_{1j},(\hat\Psi_{0.99})_{1j})$ defined in \eqref{eq:facidenem} for $j$th financial institution, $j=1,...,230$. The values $((\hat\Psi_{0.01})_{1j},(\hat\Psi_{0.99})_{1j})$ are all positive. The fact that they distribute around the 45 degree line suggests that the log-returns of these stocks are roughly equally associated to the two tail quantile factors, but the magnitude of their association to the factors varies dramatically. The points become more disperse and deviate from the 45 degree line in the northeast corner. 

The right panel of Figure \ref{T2T_fin} plots the institutions based on their averaged market value (x-axis) and leverage (y-axis), and the color represents the magnitude of the $\tau=0.01$ factor loading of the corresponding financial institution. It shows that financial institutions with large market value and high leverage tend to have high loadings of the first left tail factor $f_1^{0.01}(\bX)$, as most red and yellow color points are concentrating in the northeastern part of the figure. This shows that they are more vulnerable to the market shock, and this is in line with the conclusion of \cite{WKM10caviar2}. Interestingly, the institutions that are more vulnerable to market shock seem to form clusters. It is an interesting future research to study the geographical and financial properties of the financial institutions in the same cluster. 

\begin{figure}[h!]
	\centering
 \includegraphics[width=6cm, height = 7cm]{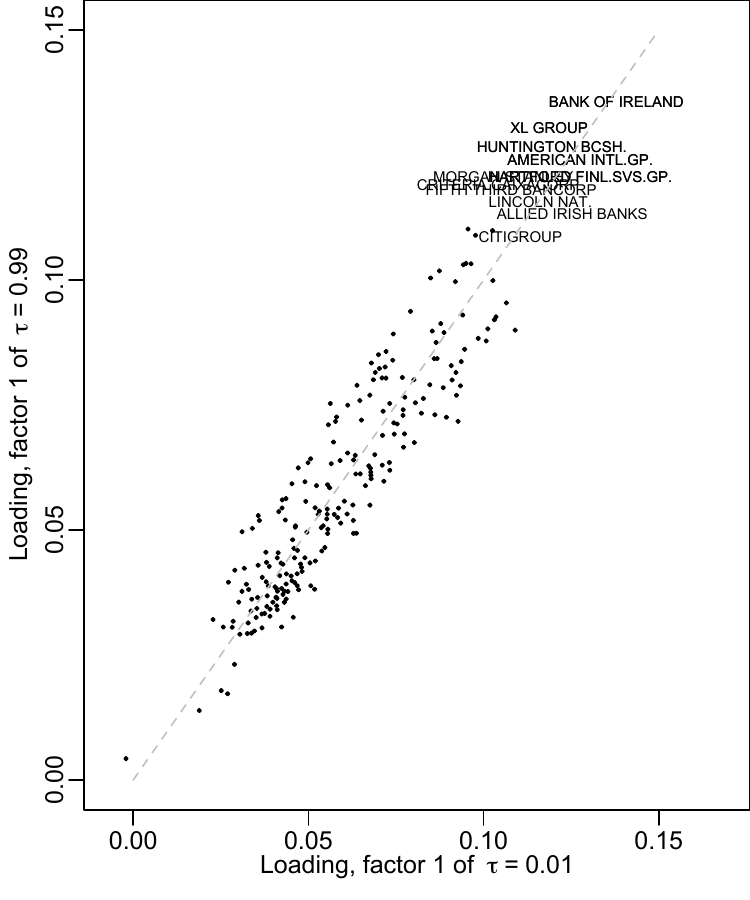}
 \includegraphics[width=9.5cm, height = 7cm]{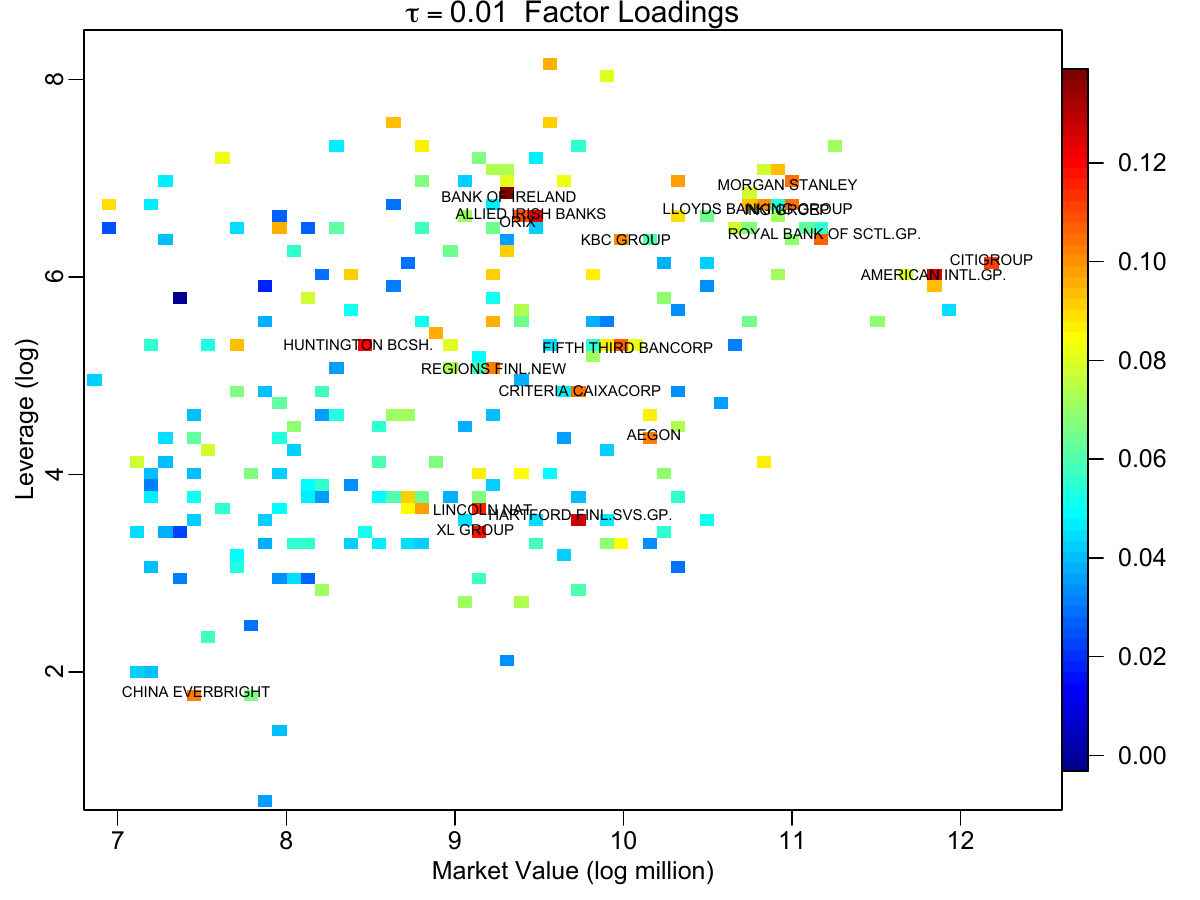}
 \caption{Left panel: tail to tail plot. Each point is a pair $((\hat\Psi_{0.01})_{1j},(\hat\Psi_{0.99})_{1j})$ for stocks $j=1,...,230$; Right panel: the plot of firms based on their averaged market value and leverage over the data period. The color scale corresponds to the magnitude of their $\tau=0.01$ factor loading, and the firms with loading greater than 0.1 are labeled by name.}\label{T2T_fin}
\end{figure}

\begin{remark}[Extreme quantiles]
As $\tau$ is close to zero or one, the non-asymptotic bounds \eqref{opred} and \eqref{eq:L2bdd} in Theorem \ref{thm:rec} become loose as $1/\fline$ increases, so the estimation may not be accurate. In the literature, extreme quantile is often characterized through the low \emph{extremal order} or \emph{extremal rank} $\tau n$ \citep{C99,C05,CFK17}. In particular, letting $\tau\to 0$ as $n\to\infty$ and $n\tau \to c$ for some $c\geq 0$ (respectively, $n(1-\tau)\to c$ for the right extreme quantile, by symmetry we only discuss left quantile in the following), classical asymptotic analysis breaks down if $c$ is small or equal to 0, and such scenario is regarded as "extreme" quantile. Simulation study of asymptotic distribution in \cite{C99} suggests that $n\tau \geq 15$ might be large enough to regard the quantile as "non-extreme". In our application, extreme quantile issue may be mild as $n \tau = 27.65$ with $\tau=1\%$. However, complete analysis for the extreme quantiles under multitask regression scenario is left for future research.
\end{remark}

\section{Conclusions and Future Works}\label{sec:conc}

In this paper, we consider a factor based multitask quantile regression model which allows the factors to vary with quantile levels, and the estimation of such model can be done with the nuclear penalization. Because the typical empirical risk minimizer cannot be efficiently computed due to non-smoothness of the loss function and the expensive subroutines such as singular value decomposition in the algorithm, a numerical procedure that approximately solves the empirical optimization problem is proposed and its theoretical guarantee is proved. Recommendations on how to tune the algorithm for provably accurate estimation are provided. Monte Carlo experiments show the performance of the numerical procedure and the ability to recover the number of factors, even for time series data and extreme quantiles. Potential application of our method is illustrated with a joint analysis on the financial risk of a large pool of stock returns of institutions with large market capitalization.

For future research, the readers may be aware that the model \eqref{eq:mintro} could be misspecified for some applications. To remedy this, nonparametric models may be applied to function $f_k^\tau(\bx)$ by regarding it as an element of a sieve space, and use the basis functions of the sieve space to represent $f_k^\tau(\bx)$ as a series. Methods for estimating this nonparametric model can be derived from adapting our algorithm in Section \ref{sec.al}. Illustrations of this idea using temperature data are presented in an earlier version of this paper \citep[Section 7]{CHY15}. Its theoretical analysis is left for future research. Other interesting research directions include showing that our bounds in Theorem \ref{thm:rec} are unimprovable, and extending our framework to extreme quantiles.

\vskip 2em \centerline{\Large \bf APPENDIX} \vskip -1em
\setcounter{subsection}{0}
\renewcommand{\thesubsection}{A.\arabic{subsection}}
\setcounter{equation}{0}
\renewcommand{\theequation}{A.\arabic{equation}}
\setcounter{theorem}{0}
\renewcommand{\thetheorem}{A.\arabic{theorem}}
\setcounter{algocf}{0}
\renewcommand{\thealgocf}{A.\arabic{algocf}}
\vskip 2em

Details on the algorithm in Section \ref{sec:coef} are provided in Section \ref{sec:alg_detail}. Section \ref{sec:pfcor} provides a proof of Corollary \ref{co:rec}.

\subsection{Details for the Numerical Procedure in Section \ref{sec:coef}}\label{sec:alg_detail}
The Fast Iterative Shrinkage-Thresholding Algorithm (FISTA) of \cite{BT:09} is a popular method for optimizing the loss function consisted of two convex functions. However, one of the major challenge here is that the subgradient of $\widehat Q_\tau(\Sb)$ is not Lipschitz, so the FISTA algorithm may not be stable. To resolve this problem, we apply the method of \cite{N:05} to find a "nice" surrogate for $\widehat Q_\tau(\Sb)$, as will be shown below. 

Recall from \eqref{peml} that the objective function to be minimized is
\begin{align}
L_\tau(\Sb) &=(mn)^{-1} \sum_{i=1}^n \sum_{j=1}^{m} \rho_\tau\big(Y_{ij}-\bX_i^\top\Sb_{\ast j}\big) +\lambda \|\Sb\|_* = \widehat Q_\tau(\Sb)+\lambda \|\Sb\|_*, \label{mqrloss}
\end{align}
We introduce the dual variables $\Theta_{ij}$:
\begin{align}
\widehat Q_\tau(\Sb)= \max_{\Theta_{ij} \in [\tau-1,\tau]} (mn)^{-1} \sum_{i=1}^n \sum_{j=1}^{m} \Theta_{ij} \big(Y_{ij}-\bX_i^\top\Sb_{\ast j}\big). \label{f}
\end{align}
See Section \ref{proof.f} in the supplementary material for a proof of \eqref{f}. 
To smooth this function, denote the matrix $\bTheta = (\Theta_{ij})$ for $i=1,...,n$, $j=1,...,m$, we consider a smooth approximation to $\widehat Q_\tau(\Sb)$ as in equation (2.5) of \cite{N:05}:
\begin{align}
  \widehat Q_{\tau,\kappa}(\Sb) \defeq \max_{\Theta_{ij} \in [\tau-1,\tau]} \Big\{(mn)^{-1} \tilde Q_\tau(\Sb,\bTheta)-\frac{\kappa}{2}\|\bTheta\|_{\rm F}^2 \Big\}, \label{qr_smooth}
\end{align}
where $\tilde Q_\tau(\Sb,\bTheta)\defeq\sum_{i=1}^n \sum_{j=1}^{m} \Theta_{ij}\big(Y_{ij}-\bX_i^\top\Sb_{\ast j}\big)$, and $\kappa>0$ is a smoothing regularization constant depending on $m,n$ and the desired accuracy. When $\kappa \to 0$, $\widehat Q_{\tau,\kappa}(\Sb)$ converges to $\widehat Q_{\tau}(\Sb)$. 
$\widehat Q_{\tau,\kappa}(\Sb)$ defined in \eqref{qr_smooth} has Lipschitz gradient
\begin{align}
\nabla \widehat Q_{\tau,\kappa}(\Sb) \defeq -(mn)^{-1}\Xb^\top [[(\kappa m n)^{-1} (\Yb-\Xb\Sb)]]_\tau, \label{sm.grad}
\end{align}
where $\Xb = [\bX_1\ \bX_2\ ...\ \bX_n]^\top$, $[[\Ab]]_\tau = ([[A_{ij}]]_\tau)$ performs component-wise truncation on a real matrix $\Ab$ to the interval $[\tau-1,\tau]$; in particular,
$$
[[A_{ij}]]_\tau = \left\{\begin{array}{ll}
\tau, &\mbox{ if }A_{ij} \geq \tau; \\
A_{ij}, &\mbox{ if } \tau-1<A_{ij}<\tau; \\
\tau-1, &\mbox{ if } A_{ij} \leq \tau-1.
\end{array}\right.
$$
Observe that \eqref{sm.grad} is similar to the subgradient $-\Xb \{\tau-\IF(\Yb-\Xb\Sb \leq 0)\}$ of $\widehat Q_\tau(\Sb)$, where the operator $\tau-\IF(\cdot \leq 0)$ applies component-wise to the matrix $\Yb-\Xb\Sb$. The major difference lies in the fact that \eqref{sm.grad} replaces the discrete non-Lipschitz $\tau-\IF(\Yb-\Xb\Sb \leq 0)$ with a Lipschitz function $[[\kappa^{-1} (\Yb-\Xb\Sb)]]_\tau$. 

Now, we replace the optimization problem involving $L_\tau(\Sb)$ in \eqref{mqrloss} by the one involving
	\begin{align}
	\widetilde L_\tau(\Sb) \defeq \widehat Q_{\tau,\kappa}(\Sb)+\lambda \|\Sb\|_*, \label{eq:tL}
	\end{align}
	where we recall the definition of $\widehat Q_{\tau,\kappa}(\Sb)$ in \eqref{qr_smooth}. Since the gradient of $\widehat Q_{\tau,\kappa}(\Sb)$ is Lipschitz, we may apply FISTA of \cite{BT:09} for minimizing \eqref{eq:tL}. Define $S_\lambda(\cdot)$ to be the proximity operator on $\R^{p\times m}$:
	\begin{align}
	S_{\lambda}(\Sb) \defeq \Ub (\Db-\lambda\Ib_{p\times m})_+ \Vb^\top, \label{eq:prox}
	\end{align}
	where $\Ib_{p\times m}$ is the $p \times m$ rectangular identity matrix with the main diagonal elements equal to 1, and the SVD $\Sb=\Ub\Db\Vb^\top$. See Theorem \ref{prox_nu} in the supplementary material for more detail for the proximity operator. 
	
	Specific steps are summarized in Algorithm \ref{alg_qr}.

\RestyleAlgo{boxruled}
\LinesNumbered
\begin{algorithm}[h!]
\textbf{Input:} $\Yb$, $\Xb$, $0<\tau<1$, $\lambda>0$, $\kappa>0$\; 
\textbf{Initialization:} $\bGamma_{\tau,0}=0,\bOmega_{\tau,1}=0$, step size $\delta_1 = 1$, $M = \frac{1}{\kappa m^2 n^2} \|\Xb\|^2$\; 
\For{$t=1,2,...,T$}{
$\bGamma_{\tau,t}=\prox\big(\bOmega_{\tau,t}-\frac{1}{M} \nabla \widehat Q_{\tau,\kappa}(\bOmega_{\tau,t})\big)$\;
$\delta_{t+1} = \frac{1+\sqrt{1+4 \delta_t^2}}{2}$\;
$\bOmega_{\tau,t+1} = \bGamma_{\tau,t} + \frac{\delta_t-1}{\delta_{t+1}}(\bGamma_{\tau,t}-\bGamma_{\tau,t-1})$\;
}
\textbf{Output:} $\bGamma_{\tau,T}$
\caption{}\label{alg_qr}
\end{algorithm}



\subsection{Proof of Corollary \ref{co:rec}}\label{sec:pfcor}

To apply Theorem \ref{thm:rec}, it is enough to find a bound for the optimization error $\delta$ of $\Gamat$ that holds with high probability. Suppose the initial estimator is $\bGamma_{\tau,t=0}=0$. The bound in \eqref{qr.bound} suggests that
\begin{align}
	L_\tau(\Gamat)-L_\tau(\Gamah) \leq \underbrace{\frac{3\kappa mn (\tau \vee \{1-\tau\})^2}{2}}_{\mbox{\scriptsize(I)}} + \underbrace{\frac{4 (\|\bGamma_\tau\|_{\rm F}^2+\|\bGamma_{\tau,\infty}-\bGamma_\tau\|_{\rm F}^2)}{(T+1)^2} \frac{\|\Xb\|^2}{\kappa m^2n^2}}_{\mbox{\scriptsize(II)}}, \label{eq:opterr2}
\end{align} 
where we apply the bound $\|\bGamma_{\tau,\infty}\|_{\rm F}^2\leq 2\|\bGamma_{\tau,\infty}-\bGamma_\tau\|_{\rm F}^2+2\|\bGamma_\tau\|_{\rm F}^2$. It is sufficient to show that (I)+(II)$\leq \lambda (m/n)^{1/2}$, and the desired conclusion will follow from Theorem \ref{thm:rec}. Using the bound on $\kappa$ in \eqref{eq:kas}, elementary calculation verifies that (I) in \eqref{eq:opterr2} is less than $\lambda (m/n)^{1/2}/2$. Under the event that $\|\bGamma_{\tau,\infty}-\bGamma_\tau\|_{\rm F}^2=o(1)$, it follows by elementary calculation that (II)$\leq \lambda (m/n)^{1/2}/2$, and the corollary is proved.

\bigskip
It is left to show that $\|\bGamma_{\tau,\infty}-\bGamma_\tau\|_{\rm F}^2=o(1)$ with high probability. Recall that $\Gamai=\lim_{T\to\infty}\Gamat=\arg\,\min_{\Sb}\{\tilde L_\tau(\Sb)=\widehat Q_{\tau,\kappa}(\Sb)+\lambda \|\Sb\|_*\}$. The optimization error $L_\tau(\Gamai)-L_\tau(\Gamah)$ of $\Gamai$ can be estimated by 
\begin{align}
	L_\tau(\Gamai)-L_\tau(\Gamah) &\leq (L_\tau(\Gamai)-\widetilde L_\tau(\bGamma_{\tau,\infty}))+(\underbrace{\widetilde L_\tau(\bGamma_{\tau,\infty})-\widetilde L_\tau(\Gamah)}_{\mbox{\scriptsize $\leq 0$}})+(\underbrace{\widetilde L_\tau(\Gamah)-L_\tau(\Gamah)}_{\mbox{\scriptsize $\leq 0$ by \eqref{eq:smerrbd}}}) \notag\\
	&\leq L_\tau(\Gamai)-\widetilde L_\tau(\bGamma_{\tau,\infty})\notag\\
	&\hspace{-0.25cm}\stackrel{\eqref{eq:smerrbd}}{\leq} \kappa (\tau \vee \{1-\tau\})^2 \frac{nm}{2}\notag\\
	&\leq \lambda(m/n)^{1/2}/6.\label{eq:opterr1}
\end{align}
Therefore, $\Gamai$ is an approximate optimizer. The growth condition \eqref{eq:gr} with $C=1$ in the hypothesis of this corollary ensures \eqref{eq:gr} holds with $C=1/6$. Hence, \eqref{eq:L2bdd} in Theorem \ref{thm:rec} yields
\begin{align}
	\|\Gamai-\bGamma_\tau\|_{\rm F}\leq u\bigg(\frac{m}{\sigma_{\min}(\bSigma_X)}\bigg)^{1/2} \epsilon_{n,\tau,r}, \label{eq:gamaibdd} 
\end{align}
with probability at least $1-\eta-\gamma_n-16(pm)^{1-u^2}-3\exp\{-(p+m)\log 8\}$. 

\bibliographystyle{dcu}
\bibliography{bibgqr}

\newpage

\setcounter{page}{1}
\vskip 1em \centerline{\Large \bf SUPPLEMENTARY MATERIAL: FACTORISABLE} \centerline{\Large \bf MUITITASK QUANTILE REGRESSION} \vskip 2em

Section S.1 presents the convergence analysis for the algorithm. Section S.2 provides details on the non-asymptotic risk analysis of $\tG$. Section S.3 discusses technical detail and remarks. Section S.4 lists some auxiliary results.

\bigskip
\noindent{\bf Additional notations. }For any two matrices $\Ab,\Bb \in \R^{p\times m}$, $\langle \cdot,\cdot \rangle: \R^{n \times m} \times \R^{n \times m} \to \R$ denotes the trace inner product given by $\langle \Ab,\Bb \rangle = \tr(\Ab \Bb^\top)$. Define the empirical measure of $(\bY_i,\bX_i)$ by $\PP_n$. For a function $f:\R^p \to \R$, and $\bZ_i \in \R^p$, define the \emph{empirical process} $\GG_n(f) = n^{-1/2} \sum_{i=1}^n \{f(\bZ_i)-\E[f(\bZ_i)]\}$. The subgradient for $\widehat Q_\tau(\Sb)$ is the matrix
\begin{align}
\nabla \widehat Q_\tau(\Sb) \defeq (nm)^{-1} \sum_{i=1}^n \bX_i\bW_{\tau,i\ast}(\Sb)^\top \defeq (nm)^{-1} \Xb^\top \Wb_\tau(\Sb) \in \R^{p \times m},\label{eq:subgra}
\end{align}
where $$
\bW_{\tau,i\ast}(\Sb) \defeq \left(\IF(Y_{ij}-\bX_i^\top \Sb_{\ast j}\leq 0)-\tau\right)_{1 \leq j \leq m}, \quad \Wb_\tau(\Sb) = [\bW_{\tau,1}(\Sb)\ ...\ \bW_{\tau,n}(\Sb)]^\top \in \R^{n \times m}.
$$
For the true coefficient matrix $\bGamma_\tau$, $\bW_{\tau,i\ast}(\bGamma_\tau) \defeq \bW_{\tau,i\ast}$ and $\Wb_\tau \defeq \Wb_\tau(\bGamma_\tau)$.


\vskip 2em \centerline{\Large \bf S.1: Proofs for Algorithmic Convergence Analysis} \vskip -1em
\setcounter{subsection}{0}
\renewcommand{\thesubsection}{S.1.\arabic{subsection}}
\setcounter{equation}{0}
\renewcommand{\theequation}{S.1.\arabic{equation}}
\setcounter{theorem}{0}
\renewcommand{\thetheorem}{S.1.\arabic{theorem}}

\subsection{Proof of \eqref{f}}\label{proof.f}
To see that this equation holds, note that for each pair of $i,j$, when $Y_{ij}-\bX_i^\top\Sb_{\ast j} > 0$, $\Theta_{ij} = \tau$, since $\tau$ is the largest "positive" value in the interval $[\tau-1,\tau]$. When $Y_{ij}-\bX_i^\top\Sb_{\ast j} \leq 0$, $\Theta_{ij}= \tau-1$ since $\tau$ is the smallest "negative" value in the interval $[\tau-1,\tau]$. This verifies the equation. \hfill$\qed$

\begin{remark}
It is necessary to choose $[\tau-1,\tau]$ rather than $\{\tau-1,\tau\}$ for the support of $\Theta_{ij}$ in \eqref{f} (though both choices fulfill the equation). The previous choice is an interval and is therefore a convex set, and the conditions given in \cite{N:05} is fulfilled.
\end{remark}

\subsection{Proof of Theorem \ref{thm.qr}}\label{proof.thmqr}
Recall the definition of $L_\tau(\Sb)$ and $\hQ_\tau(\Sb)$ in \eqref{mqrloss}, $\tilde L_\tau(\Sb)$ and $\hQ_{\tau,\kappa}(\Sb)$ in \eqref{eq:tL} and \eqref{qr_smooth}. We note a comparison property in (2.7) of \cite{N:05}, for an arbitrary $\Sb\in\R^{p\times m}$,
\begin{align}
\hQ_{\tau,\kappa}(\Sb) \leq \hQ_\tau(\Sb) \leq \hQ_{\tau,\kappa}(\Sb)+\kappa \max_{\bTheta \in[\tau-1,\tau]^{n \times m}} \frac{\|\bTheta\|_{\rm F}^2}{2} \label{eq:smerrbd}
\end{align}
where 
\begin{align*}
\max_{\bTheta \in[\tau-1,\tau]^{n \times m}} \|\bTheta\|_{\rm F}^2 = \max_{\bTheta \in[\tau-1,\tau]^{n \times m}} \sum_{i \leq n,j \leq m} \Theta_{ij}^2 \leq (\tau \vee \{1-\tau\})^2 nm.
\end{align*}
Recall that $\Gamah$ is a minimizer of $L_\tau(\Sb)$ defined in \eqref{mqrloss}. It follows by \eqref{eq:smerrbd} that for an arbitrary $\Sb\in\R^{p\times m}$,
\begin{align}
	\widetilde L_\tau(\Gamah) \leq L_\tau(\Gamah) \leq L_\tau(\Sb) \leq \tilde L_\tau(\Sb)+\kappa (\tau \vee \{1-\tau\})^2 \frac{nm}{2},\label{eq:fact2}
\end{align}
where the first inequality is from the first inequality of \eqref{eq:smerrbd}, the second is the definition of the minimizer $\Gamah$, and the third inequality is from the second inequality of \eqref{eq:smerrbd}. Recall that $\bGamma_{\tau,\infty}=\lim_{t\to\infty}\bGamma_{\tau,t}$ is a minimizer of $\widetilde L_\tau(\Sb)$, then \eqref{eq:fact2} gives
\begin{align}
	\widetilde L_\tau(\bGamma_{\tau,\infty}) \leq \widetilde L_\tau(\Gamah) \leq \widetilde L_\tau(\bGamma_{\tau,\infty})+\kappa (\tau \vee \{1-\tau\})^2 \frac{nm}{2},\label{eq:smgh}
\end{align}
where the first inequality is from the definition of $\bGamma_{\tau,\infty}$ as a minimizer of $\widetilde L_\tau(\Sb)$ and the second inequality is from \eqref{eq:fact2}, which holds for an arbitrary matrix $\Sb\in\R^{p\times m}$.

Now from triangle inequality,
\begin{align}
\big|L_\tau(\Gamat)-L_\tau(\Gamah)\big|\leq &\big|L_\tau(\Gamat)-\widetilde L_\tau(\Gamat)\big| + \big|\widetilde L_\tau(\Gamat)-\tilde L_\tau(\bGamma_{\tau,\infty})\big|+\big|\tilde L_\tau(\bGamma_{\tau,\infty})-\widetilde L_\tau(\Gamah)\big| \notag\\
&+ \big|L_\tau(\Gamah)-\widetilde L_\tau(\Gamah)\big|. \label{temp0}
\end{align}
The third term on the right-hand side of \eqref{temp0} is bounded by \eqref{eq:smgh}. For any matrix $\Sb$, 
we have from \eqref{eq:smerrbd} that
\begin{align}
\big|L_\tau(\Sb)-\widetilde L_\tau(\Sb)\big| \leq \kappa \frac{nm (\tau \vee \{1-\tau\})^2}{2}. \label{temp1} 
\end{align}
Hence, both $\big|L_\tau(\Gamat)-\widetilde L_\tau(\Gamat)\big|$ and $\big|L_\tau(\Gamah)-\widetilde L_\tau(\Gamah)\big|$ satisfy \eqref{temp1}. 

Lemma \ref{thm.smooth} implies that the gradient of $\widehat Q_{\tau,\kappa}(\Sb)$ is Lipschitz continuous with Lipschitz constant $M$. By Theorem 4.1 of \cite{JY:09} or Theorem 4.4 of \cite{BT:09} (applied in general real Hilbert space, see their Remark 2.1), we have
\begin{align}
\big|\widetilde L_\tau(\Gamat) - \widetilde L_\tau(\bGamma_{\tau,\infty}) \big| \leq \frac{2 M \|\bGamma_{\tau,0}-\bGamma_{\tau,\infty}\|_{\rm F}^2}{(t+1)^2}, \label{temp2}
\end{align}
where $M= (\kappa m^2 n^2)^{-1}\|\Xb\|^2$ as given in Lemma \ref{thm.smooth}. 

\subsection{Technical Details for Theorem \ref{thm.qr}}

\begin{lemma}\label{lem.smooth}
For any $\Sb,\bTheta\in\R^{p\times m}$, $\tilde Q_\tau(\Sb,\bTheta)$ can be expressed as $\tilde Q_\tau(\Sb,\bTheta) =  \langle -\Xb\Sb,\bTheta \rangle + \langle \Yb,\bTheta \rangle$.
\end{lemma}
\begin{proof}[Proof of Lemma \ref{lem.smooth}] One can show by elementary matrix algebra that
\begin{align*}
	\tilde Q_\tau(\Sb,\bTheta)&=\sum_{i=1}^n \sum_{j=1}^{m} \Theta_{ij}\left(Y_{ij}-\bX_i^\top\Sb_{\ast j}\right) 
	= \sum_{i=1}^n \sum_{j=1}^{m} \Theta_{ij} Y_{ij} - \sum_{i=1}^n \sum_{j=1}^{m} \Theta_{ij} \bX_i^\top\Sb_{\ast j} \\
	&=  \langle \Yb,\bTheta \rangle + \langle -\Xb\Sb,\bTheta \rangle.
\end{align*}
The proof is therefore completed.
\end{proof}

\begin{lemma}\label{thm.smooth}
For any $\kappa>0$, $\widehat Q_{\tau,\kappa}(\Sb)$ is a well-defined, convex and continuously differentiable function in $\Sb$ with the gradient $\nabla \widehat Q_{\tau,\kappa}(\Sb) = -(mn)^{-1} \Xb^\top \bTheta^*(\Sb)\in \R^{p \times m}$, where $\bTheta^*(\Sb)$ is the optimal solution to \eqref{qr_smooth}, namely
\begin{align}
\bTheta^*(\Sb) = [[(\kappa m n)^{-1} (\Yb-\Xb\Sb)]]_\tau. \label{opt.xi}
\end{align}
The gradient $\nabla \widehat Q_{\tau,\kappa}(\Sb)$ is Lipschitz continuous with the Lipschitz constant $M= (\kappa m^2 n^2)^{-1}\|\Xb\|^2$.
\end{lemma}
\begin{proof}[Proof of Lemma \ref{thm.smooth}]
In view of Lemma \ref{lem.smooth}, we have from \eqref{qr_smooth} that
\begin{align}
  \widehat Q_{\tau,\kappa}(\Sb) = \max_{\Theta_{ij} \in [\tau-1,\tau]} \left\{(mn)^{-1}\langle \Yb,\bTheta \rangle + (mn)^{-1}\langle -\Xb\Sb,\bTheta \rangle-\frac{\kappa}{2}\|\bTheta\|_{\rm F}^2 \right\}.
\end{align}
$\widehat Q_{\tau,\kappa}(\Sb)$ matches the form in (2.5) on page 131 of \cite{N:05}, with their $\hat \phi(\bTheta)=(mn)^{-1}\langle \Yb,\bTheta \rangle$ which is a continuous convex function, and their $A = -(mn)^{-1}\Xb$ which maps from the vector space $\R^{p\times m}$ to the space $\R^{n \times m}$ (the model setting described below (2.2) on page 129 of \cite{N:05}), and their $d_2(\bTheta) = \frac{\kappa}{2}\|\bTheta\|_{\rm F}^2$. 
Therefore, applying Theorem 1 of \cite{N:05}, with $\sigma_2=1$, $d(\bTheta)=\|\bTheta\|_{\rm F}^2/2$, the gradient $\nabla \widehat Q_{\tau,\kappa}(\Sb) = -(mn)^{-1} \Xb^\top \bTheta^*(\Sb)\in \R^{p \times m}$, where $\bTheta^*(\Sb)$ is the optimal solution to \eqref{qr_smooth}:
\begin{align*}
\bTheta^*(\Sb) = [[(\kappa m n)^{-1} (\Yb-\Xb\Sb)]]_\tau,
\end{align*}
and the Lipschitz constant of $\nabla \widehat Q_{\tau,\kappa}(\Sb)$ is $\|\Xb\|/(\kappa n^2m^2)$, where $\|\Xb\|$ is the spectral norm of $\Xb$ (see line 8 on page 129 of \cite{N:05}). Hence, the proof is completed.
\end{proof}

\vskip 2em \centerline{\Large \bf S.2: Proofs for Non-Asymptotic Bounds}
\setcounter{subsection}{0}
\renewcommand{\thesubsection}{S.2.\arabic{subsection}}
\setcounter{equation}{0}
\renewcommand{\theequation}{S.2.\arabic{equation}}
\setcounter{theorem}{0}
\renewcommand{\thetheorem}{S.2.\arabic{theorem}}
\vskip 1em

\begin{remark}\label{rem:normeq}
	For any $\bDelta\in\R^{p\times m}$, from \hyperref[A2]{(A2)},
	\begin{align}
		\|\bDelta\|_{L_2(P_X)}^2 = m^{-1} \E\big[\|\bDelta^\top \bX_i\|_2^2\big] = m^{-1} \sum_{j=1}^m \bDelta_{*j}^\top \E[\bX_i\bX_i^\top]\bDelta_{*j} \geq m^{-1}\sigma_{\min}(\bSigma_X) \|\bDelta\|_{\rm F}^2.\label{eq:frobeq}
	\end{align}
	Moreover, by $\|\p_{\bGamma_\tau}(\bDelta)\|_{\rm F} \leq \|\bDelta\|_{\rm F}$, we have a bound  
	\begin{align}
		\|\bDelta\|_{L_2(P_X)}  \geq \Big(\frac{\sigma_{\min}(\bSigma_X)}{m}\Big)^{1/2} \|\bDelta\|_{\rm F} \geq \Big(\frac{\sigma_{\min}(\bSigma_X)}{m}\Big)^{1/2} \|\p_{\bGamma_\tau}(\bDelta)\|_{\rm F}. \label{eq:pfrobeq}
	\end{align}
\end{remark}

\subsection{Proof for Lemma \ref{lem.rate}}\label{sec:proof_lem_rate}
To prove the first statement, applying the same $\mathcal E$-net argument on the unit Euclidean sphere $\cS^{m-1} = \{\ub \in \R^m: \|\ub\|_2=1\}$ as in the first part of the proof of Lemma 3 in \cite{NW:11} (page 6 to the beginning of page 7 in their supplemental materials), we obtain
\begin{align}
\P\bigg(\frac{1}{n}\|\Xb^\top \Wb_\tau\| \geq 4 s \bigg) = \P\bigg(\sup_{\vb \in S^{p-1}\atop\ub \in \cS^{m-1}} \frac{1}{n} \big|\vb^\top\Xb^\top \Wb_\tau\ub\big| \geq 4 s \bigg) \leq 8^{p+m} \sup_{\vb\in S^{p-1}, \ub\in \cS^{m-1}\atop\|\ub\|=\|\vb\|=1} \P\bigg(\frac{|\langle \Xb \vb,\Wb_\tau \ub\rangle|}{n} \geq s \bigg). \label{NWlem3}	
\end{align}
To bound $n^{-1} \langle \Xb \vb,\Wb_\tau \ub\rangle = n^{-1} \sum_{i=1}^n \langle \vb,\bX_i\rangle\langle \ub,\bW_{\tau,i\ast}\rangle$, first we show the sub-Gaussianity of $\langle \ub,\bW_{\tau,i\ast}\rangle$. Theorem 3.1 of \cite{BM13} suggests that the sub-Gaussian norm of the $j$th component of $\bW_{\tau,i\ast}$ is
\begin{align*}
		\|W_{\tau,ij}\|_{\psi_2} =  \left\{\begin{array}{ll}
					0, &\  \tau=0,1;\\
					\frac{2\tau-1}{2\{\log \tau-\log (1-\tau)\}}, &\  \tau \in (0,1)-\{1/2\};\\
					1/4, &\ \tau=1/2,
					\end{array}
				\right.
	\end{align*} 
	where $\|\cdot\|_{\psi_2}$ denotes the sub-Gaussian norm. It follows by Lemma \ref{lem.hoef} (Hoeffding's inequality) that 
\begin{align*}
	\P\big(\langle \ub,\bW_{\tau,i\ast}\rangle \geq s\big) \leq \exp\bigg(1-\frac{C's^2}{K(\tau)\|\ub\|_2^2}\bigg) = \exp\bigg(1-\frac{C's^2}{K(\tau)}\bigg).
\end{align*} 

We apply Lemma \ref{lem.hoef} again to bound $n^{-1} \sum_{i=1}^n \langle \vb,\bX_i\rangle\langle \ub,\bW_{\tau,i\ast}\rangle$. Conditioning on $\bX_i$, we have
\begin{align*}
	\P\bigg(\bigg|n^{-1} \sum_{i=1}^n \langle \vb,\bX_i\rangle\langle \ub,\bW_{\tau,i\ast}\rangle\bigg| \geq s\bigg) &\leq \exp\bigg(1-\frac{C'n s^2}{K(\tau)n^{-1}\sum_{i=1}^n \langle \vb,\bX_i\rangle^2}\bigg) \\
	&\leq \exp\bigg(1-\frac{C'n s^2}{K(\tau)c_2 \|\bSigma_X\| }\bigg).
\end{align*} 
where the second inequality follows from the fact that $\|\vb\|_2=1$ and $n^{-1}\sum_{i=1}^n \langle \vb,\bX_i\rangle^2 \leq \|\Xb^\top\Xb/n\| \leq c_2 \|\bSigma_X\|$ on the event that \hyperref[A2]{(A2)} holds. 

To summarize, on the event that \hyperref[A2]{(A2)} holds,
\begin{align*}
\P\bigg(\frac{1}{n}\|\Xb^\top \Wb_\tau\| \geq 4 s \bigg) &\leq 8^{p+m} \exp\bigg(1-\frac{C'n s^2}{K(\tau)c_2 \|\bSigma_X\| }\bigg)\\
&\leq \exp\bigg(1-\frac{C'ns^2}{K(\tau)c_2 \|\bSigma_X\|}+(p+m)\log 8\bigg).
\end{align*}
Therefore, for arbitrary $u>1$, the event
\begin{align}
\frac{1}{n}\|\Xb^\top \Wb_\tau\| \geq 4 \cdot \sqrt{u (\log 8)\frac{K(\tau)c_2 \|\bSigma_X\|}{C'}}\sqrt{\frac{p+m}{n}}, \label{eq:temp1bound}
\end{align}
has probability smaller than $3e^{-(u-1)(p+m)\log 8}+\gamma_n$, as $e < 3$.

\bigskip
To prove the second statement, we note that the event in \eqref{eq:temp1bound} has probability less than $\eta$ by setting $k=1-(\eta-\gamma_n)/(3 (p+m)\log 8)$. 

 \hfill $\qed$

\subsection{Proof for Theorem \ref{thm:rec}}\label{sec:proof_lem_ora}

Before we prove Theorem \ref{thm:rec}, we first define the "support" of matrices by projections. 

\begin{defin}\label{def:proj}
	For $\Ab \in \R^{p \times m}$ with rank $r$, the singular value decomposition of $\Ab$ is $\Ab = \sum_{j=1}^r \sigma(\Ab) \ub_j \vb_j^\top$. The \emph{support} of $\Ab$ is defined by $(S_1,S_2)$ in which $S_1 = \mbox{span}\{\ub_1,...,\ub_r\}$ and $S_2 =  \mbox{span}\{\vb_1,...,\vb_r\}$. Define the projection matrix on $S_1$:
	$
	\Pb_1 \defeq \Ub_{[1:r]}\Ub_{[1:r]}^\top, 
	$
	in which $\Ub_{[1:r]} = [\ub_1\,...\,\ub_r]\in\R^{p\times r}$; $\Pb_2 \defeq \Vb_{[1:r]} \Vb_{[1:r]}^\top$, where $\Vb_{[1:r]}=[\vb_1\,...\,\vb_r]\in\R^{m\times r}$. Denote $\Pb_1^\perp=\Ib_{p\times r}-\Pb_1$ and $\Pb_2^\perp=\Ib_{m\times r}-\Pb_2$. For any matrix $\Sb \in \R^{p \times m}$, define
$$
\mathcal P_{\Ab}(\Sb) \defeq \Pb_1 \Sb \Pb_2; \quad \mathcal P_{\Ab}^\perp(\Sb) \defeq \Pb_1^\perp \Sb \Pb_2^\perp.
$$
Define for any $a\geq 0$,
\begin{align}
	\K(\bGamma_\tau;a) &\defeq \left\{\Sb \in \R^{p \times m}: \|\p_{\bGamma_\tau}^\perp(\Sb)\|_* \leq 3 \|\p_{\bGamma_\tau}(\Sb)\|_* + a\right\} \label{eq:conea}.
\end{align}
\end{defin}
	We note that nuclear norm is \emph{decomposable} under the projection: for any $\Sb,\Ab\in\R^{p\times m}$, $\|\Sb\|_* = \|\p_{\Ab}(\Sb)\|_*+\|\p_{\Ab}^\perp(\Sb)\|_*$. This is analogous to the $\ell_1$ norm for vectors: for any vector $\bv$ and support $S$, $\|\bv\|_1=\|\bv_S\|_1+\|\bv_{S^c}\|_1$; see Definition 1 on page 541 of \cite{NRWY:12}. Moreover, the rank of $\p_{\Ab}(\Sb)$ is at most $\rank(\Ab)$.

The shape of $\K(\bGamma_\tau;a)$ is not a cone when $a>0$, but is still a star-shaped set. This set has a similar shape as the set defined in equation (17) on page 544 in \cite{NRWY:12}. See also their Figure 1 on page 544.

To simplify the notations in this proof, let
\begin{align}
	\tdel &= \tG-\bGamma_\tau,\\
	\alpha_r&= 4\sqrt{r/\sigma_{\min}(\bSigma_X)},\\ 
	\alpha_{r,m}&= m^{1/2} \alpha_r,\label{eq:alpharm}\\
c_n &= 16 \sqrt{2} m^{-1/2} \delta \lambda^{-1} \sqrt{c_2\sigma_{\max}(\bSigma_X)+B_p} \sqrt{\log m+\log p},\label{eq:cn}\\
d_n &= 8 \sqrt{2}\alpha_r \sqrt{c_2\sigma_{\max}(\bSigma_X)+B_p} \sqrt{\log m+\log p},\label{eq:dn}
\end{align}
Let the events
\begin{align}
&\Omega_1: \mbox{Assumption \hyperref[A2]{(A2)} holds};\notag\\
&\Omega_2: \cA(t) \leq u (t d_n + c_n) \mbox{ for }u>1, \mbox{ where } \notag\\
&\cA(t) \defeq \sup_{\|\bDelta\|_{L_2(P_X)} \leq t, \bDelta \in \cK(\bGamma_\tau;2\delta/\lambda)} \bigg|\GG_n\bigg[m^{-1}\sum_{j=1}^m\big(\rho_\tau\{Y_{ij}-\bX_i^\top(\bGamma_{\tau,\ast j}+\bDelta_{\ast j})\}-\rho_\tau\{Y_{ij}-\bX_i^\top \bGamma_{\tau,\ast j}\}\big)\bigg]\bigg|.
\label{eq:cA}\\
&\Omega_3: \frac{1}{n}\|\Xb^\top \Wb\| \leq C^* \sqrt{\sigma_{\max}(\bSigma_X)K(\tau)}\sqrt{\frac{p+m}{n}},\notag 
\end{align}
where $C^* = 4\sqrt{2\frac{c_2}{C'}\log 8}$,

The probability of event $\P(\Omega_1 \cap \Omega_2 \cap \Omega_3) \geq 1-\gamma_n-16(pm)^{1-u^2}-3e^{-(p+m)\log 8}$ by Assumption \hyperref[A2]{(A2)}, Lemma \ref{lem.rate} and Lemma \ref{lem.Q}. 

Recall that $\alpha_{r,m}$, $c_n$ and $d_n$ are defined in \eqref{eq:alpharm}, \eqref{eq:cn} and \eqref{eq:dn}. Set 
	\begin{align}
	t = \sqrt{n^{-1/2} u c_n \frac{4}{\fline}+\frac{8}{\fline}\delta}+\frac{4}{\fline}(n^{-1/2}u d_n + \lambda \alpha_{r,m}).\label{eq:oproof_u}
	\end{align}

\bigskip
	We will prove by contradiction. Suppose to the contrary that $\|\tdel\|_{L_2(P_X)} \geq t$. Since $\Gamah$ minimizes $L_\tau(\Sb)=\hQ_\tau(\Sb)+\lambda\|\Sb\|_*$ (defined in \eqref{peml}) and $L_\tau(\Gamah)-L_\tau(\bGamma_\tau)<0$, we have
		\begin{align}
			&\hQ_\tau(\bGamma_\tau+\tdel)-\hQ_\tau(\bGamma_\tau)+\lambda (\|\bGamma_\tau+\tdel\|_*-\|\bGamma_\tau\|_*) \notag\\
			&= L_\tau(\Gamah)-L_\tau(\bGamma_\tau) + L_\tau(\bGamma_\tau+\tdel)-L_\tau(\Gamah) \notag\\
			&\leq \delta, \label{eq:obs0}
		\end{align}
		where we recall \eqref{eq:appest}.
		
	 Observe that $\tdel=\tG-\bGamma_\tau\in\K(\bGamma_\tau;0)\subset\K(\bGamma_\tau;2\delta/\lambda)$ with probability $1-\eta$ by applying \eqref{lambda} and Lemma \ref{lem.nsp}. 
		Hence, from \eqref{eq:appest},
\begin{align}
	\delta > \inf_{\|\bDelta\|_{L_2(P_X)} \geq t, \bDelta \in \cK(\bGamma_\tau;2\delta/\lambda)} \hQ_\tau(\bGamma_\tau+\bDelta)-\hQ_\tau(\bGamma_\tau)+\lambda (\|\bGamma_\tau+\bDelta\|_*-\|\bGamma_\tau\|_*). \label{minorize1}
\end{align}
Note the facts that 
\begin{enumerate}
	\item $\hQ_\tau(\cdot)+\lambda\|\cdot\|_*$ is convex (unique optimum);
	\item $\K(\bGamma_\tau;2\delta/\lambda)$ is star-shaped (see Figure 1 of \cite{NRWY:12}).
\end{enumerate}
Hence, $\|\tdel\|_{L_2(P_X)} \geq t$ can be replaced by $\|\tdel\|_{L_2(P_X)} = t$ and the strict inequality in \eqref{minorize1} is maintained
\begin{align*}
	\delta \geq \inf_{\|\bDelta\|_{L_2(P_X)} = t, \bDelta \in \cK(\bGamma_\tau;2\delta/\lambda)} \hQ_\tau(\bGamma_\tau+\bDelta)-\hQ_\tau(\bGamma_\tau)+\lambda (\|\bGamma_\tau+\bDelta\|_*-\|\bGamma_\tau\|_*).
\end{align*}
It can be deducted from the last display that
\begin{align*}
	\delta \geq \inf_{\|\bDelta\|_{L_2(P_X)} = t, \bDelta \in \cK(\bGamma_\tau;2\delta/\lambda)} Q_\tau(\bGamma_\tau+\bDelta)-Q_\tau(\bGamma_\tau)-n^{-1/2} \cA(t) + \lambda (\|\bGamma_\tau+\bDelta\|_*-\|\bGamma_\tau\|_*),
\end{align*}
By triangle inequality, $\big|\|\bGamma_\tau+\bDelta\|_*-\|\bGamma_\tau\|_*\big| \leq \|\bDelta\|_* \leq \alpha_{r,m} t+2\delta/\lambda$ on the set $\{\|\bDelta\|_{L_2(P_X)} = t, \bDelta \in \cK(\bGamma_\tau;2\delta/\lambda)\}$ by Lemma \ref{lem.mis}(ii). Applying the bound in $\Omega_2$ obtains
\begin{align*}
	\delta \geq \inf_{\|\bDelta\|_{L_2(P_X)} = t, \bDelta \in \cK(\bGamma_\tau;2\delta/\lambda)} Q_\tau(\bGamma_\tau+\bDelta)-Q_\tau(\bGamma_\tau)-n^{-1/2} u (d_n t+c_n)- \lambda (\alpha_{r,m} t+2 \delta/\lambda).
\end{align*}
Since $\delta \leq C \lambda \sqrt{m/n}$, by Remark \ref{rmk:nu},
$$
\nu_\tau(2\delta/\lambda) \geq \nu_\tau(2C\sqrt{m/n}) > u \epsilon_{n,\tau,r} \geq t/4
$$ 
(where the second inequality is from \eqref{eq:gr}; the last inequality will be shown in \eqref{eq:terr} below), invoking Lemma \ref{lem.mis} (i) to get the minorization
\begin{align}
	\delta\geq \inf_{\|\bDelta\|_{L_2(P_X)} = t, \bDelta \in \cK(\bGamma_\tau;2\delta/\lambda)} \frac{1}{4}\fline t^2-n^{-1/2} u (d_n t+c_n)- \lambda (\alpha_{r,m} t+2 \delta/\lambda). \label{thm.finali1}
\end{align}
Rearranging terms to get
\begin{align}
	0 \geq \inf_{\|\bDelta\|_{L_2(P_X)} = t, \bDelta \in \cK(\bGamma_\tau;2\delta/\lambda)} \frac{1}{4}\fline t^2-n^{-1/2} u (d_n t+c_n)- \lambda \alpha_{r,m} t-3 \delta. \label{thm.finali}
\end{align}
However, the right-hand side of \eqref{thm.finali} is strictly greater than 0 whenever
\begin{align}
	t > \frac{2}{\fline}(n^{-1/2}u d_n + \lambda \alpha_{r,m}) + \frac{2}{\fline}\sqrt{(n^{-1/2}u d_n + \lambda \alpha_{r,m})^2 + \fline(n^{-1/2} u c_n +3\delta)}.
\end{align}
The right hand side of the last display is upper bounded by (by $\sqrt{a+b}<\sqrt{a}+\sqrt{b}$ for all $a,b>0$)
\begin{align*}
	 t = \frac{2}{\fline}(n^{-1/2}u d_n + \lambda \alpha_{r,m}) +  \frac{2}{\fline}(n^{-1/2}u d_n + \lambda \alpha_{r,m})+\sqrt{\frac{4}{\fline} n^{-1/2} u c_n +\frac{12}{\fline}\delta},
\end{align*}
which leads to the $t$ in \eqref{eq:oproof_u}. We get a contradiction, so  $\|\tdel\|_{L_2(P_X)} \geq t$ does not hold. Namely, $\|\tdel\|_{L_2(P_X)} < t$.

\bigskip
To show \eqref{opred}, we will prove
\begin{align}
	t \leq u \epsilon_{n,\tau,r}, \label{eq:terr}
\end{align}
where $\epsilon_{n,\tau,r}$ is defined in \eqref{eq:gr}. To see this, first note that,
\begin{align}
	\lambda \stackrel{\eqref{lambda}}{\leq} 2\bar\lambda &\stackrel{\eqref{eq:bddquant}}{\leq} 2\frac{C^*}{m} \sqrt{\Big(1-\frac{\eta-\gamma_n}{3 (p+m)\log 8}\Big)\sigma_{\max}(\bSigma_X)K(\tau)}\sqrt{\frac{p+m}{n}} \notag\\
	&\leq \frac{2C^*}{m} \sqrt{\sigma_{\max}(\bSigma_X)K(\tau)}\sqrt{\frac{p+m}{n}} \label{eq:lamest}
\end{align}
since $0<\eta< 1$ and $\gamma_n\to 0$. 

Elementary calculation shows that for $u\geq 1$,
\begin{align} 
	&\max\big\{2\lambda\alpha_{r,m}/\fline, 2n^{-1/2}ud_n/\fline\big\}\notag\\
	&\quad\quad\leq  \frac{2(32\sqrt{2}+8C^*) u}{\fline\wedge 1}\sqrt{\frac{\sigma_{\max}(\bSigma_X) \vee 1}{\sigma_{\min}(\bSigma_X)\wedge 1}}  \sqrt{\frac{r(m+p \vee B_p)(\log p+\log m)}{m n}}.  \label{eq:recbd1}
\end{align}

Under the condition that $\delta < \lambda m^{1/2} n^{-1/2}$, $r\geq 1$,
\begin{align}
 \sqrt{\frac{1}{\fline} n^{-1/2} u c_n} 
 &\leq \sqrt{\frac{u}{\fline} \frac{d_n}{\alpha_r n}} 
 \leq \alpha_r^{-1/2} \frac{u d_n}{(\fline \wedge 1)\sqrt{n}}  
 \leq \frac{(\sigma_{\min}(\bSigma_X)^{1/2} \vee 1) u d_n}{(\fline \wedge 1)\sqrt{n}} \notag \\
 &\leq \frac{(32\sqrt{2}+8C^*)u}{\fline \wedge 1}\sqrt{\frac{\sigma_{\max}(\bSigma_X) \vee 1}{\sigma_{\min}(\bSigma_X)\wedge 1}}  \sqrt{\frac{r(m+p \vee B_p)(\log p+\log m)}{m n}} \label{eq:recbd2}
\end{align}
since $u\geq 1$, $d_n\geq 1$ (as $m,p\to\infty$).

Lastly, again from $\delta < \lambda m^{1/2} n^{-1/2}$ and \eqref{eq:lamest},
\begin{align}
	\delta &\leq \lambda m^{1/2} n^{-1/2} \leq 2 C^*\sqrt{\sigma_{\max}(\bSigma_X) K(\tau)\frac{p+m}{n^2 m}} \leq C^*n^{-1}\sqrt{\sigma_{\max}(\bSigma_X) \frac{p+m}{m}} \notag\\
	&\leq C^* \frac{p+m}{nm} \sqrt{\sigma_{\max}(\bSigma_X)}, \label{eq:recbd3_1}
\end{align}
where in the third inequality the fact $\sup_\tau|K(\tau)| \leq 1/4$ (noted below Lemma \ref{lem.rate}, or in (K4) of Lemma 2.1 on p.35 of \cite{BM13}) is applied, where $K(\tau)$ is defined in \eqref{eq:Ktau}; in the last inequality, the fact $\sqrt{1+p/m}\leq 1+p/m$ is applied. 
Hence,
\begin{align}
	\sqrt{\frac{1}{\fline}\delta} &\leq \frac{1}{\fline \wedge 1}\sqrt{C^*\frac{p+m}{nm}} \sigma_{\max}(\bSigma_X)^{1/4} \notag\\
	&\leq \frac{(32\sqrt{2}+8C^*)u}{\fline \wedge 1}\sqrt{\frac{\sigma_{\max}(\bSigma_X) \vee 1}{\sigma_{\min}(\bSigma_X)\wedge 1}}  \sqrt{\frac{r(m+p \vee B_p)(\log p+\log m)}{m n}} \label{eq:recbd3}
\end{align}
where the inequality follows by the facts:
\begin{align*}
	\bullet\ &\frac{\sqrt{C^*}}{\fline \wedge 1} \leq \frac{C^*}{\fline \wedge 1} \leq \frac{(32\sqrt{2}+8C^*)u}{\fline \wedge 1}, \quad \mbox{($u> 1$ from the hypothesis of the Theorem,}\\ 
	&\mbox{and $C^*\geq 1$ from Lemma \ref{lem.rate})}\\
	\bullet\ &\sigma_{\max}(\bSigma_X)^{1/4} \leq (\sigma_{\max}(\bSigma_X)\vee 1)^{1/4} \leq (\sigma_{\max}(\bSigma_X)\vee 1)^{1/2} \leq \sqrt{\frac{\sigma_{\max}(\bSigma_X) \vee 1}{\sigma_{\min}(\bSigma_X)\wedge 1}}\\
\bullet\ &\sqrt{\frac{p+m}{nm}} \leq \sqrt{\frac{r(m+p \vee B_p)(\log p+\log m)}{m n}}, \quad \mbox{$B_p\geq 1$ by \hyperref[A2]{(A2)}, $r\geq 1$, $p,m\geq 3$ in \ref{A1}}.
	\end{align*}
Note that if $r=\rank(\bGamma_\tau)=0$, then the matrix $\bGamma_\tau=0$ and this case is excluded.

Combining \eqref{eq:recbd1}, \eqref{eq:recbd2} and \eqref{eq:recbd3} gives \eqref{eq:terr}.

\hfill $\qed$

\subsection{Technical Details for Theorem \ref{thm:rec}}\label{sec:tech_oracle}

The following lemma asserts that the empirical error $\tG-\bGamma_\tau$ lies in the cone $\K(\bGamma_\tau;2\delta/\lambda)$. 
\begin{lemma}\label{lem.nsp}
	Suppose $\lambda \geq 2\|\nabla \hQ(\bGamma_\tau)\|$ and $\tdel = \tG-\bGamma_\tau$, where $\nabla \hQ(\bGamma_\tau)$ is the subgradient of $\hQ(\bGamma_\tau)$ defined in \eqref{eq:subgra}. Then $\|\cP_{\bGamma_\tau}^\perp(\tdel)\|_* \leq 3 \|\cP_{\bGamma_\tau}(\tdel)\|_*+2\delta'/\lambda$ for all $\delta'\geq \delta$. That is, $\tdel \in \K(\bGamma_\tau;2\delta'/\lambda)$ for all $\delta'\geq \delta$.
\end{lemma}
\begin{proof}[Proof for Lemma \ref{lem.nsp}]\label{proof.lem.nsp}
\begin{align}
	0 &\leq \hQ_\tau(\bGamma_\tau)-\hQ_\tau(\Gamah)+\lambda (\|\bGamma_\tau\|_*-\|\Gamah\|_*) \quad (\mbox{$\widehat\bGamma_\tau$ is the minimizer of $\hQ_\tau(\Sb)+\lambda\|\Sb\|_*$})\notag\\
	&\leq \hQ_\tau(\bGamma_\tau)-\hQ_\tau(\tG)+\lambda (\|\bGamma_\tau\|_*-\|\tG\|_*)+\delta\quad (\mbox{by \eqref{eq:appest}})\notag\\
	&\leq \|\nabla \hQ_\tau(\bGamma_\tau) \| \|\tdel\|_* + \lambda (\|\bGamma_\tau\|_*-\|\tG\|_*) +\delta\notag\\
	&\leq \|\nabla \hQ_\tau(\bGamma_\tau) \| \big(\|\cP_{\bGamma_\tau}(\tdel)\|_*+\|\cP_{\bGamma_\tau}^\perp(\tdel)\|_*\big) + \lambda (\|\cP_{\bGamma_\tau}(\bGamma_\tau)\|_*-\|\cP_{\bGamma_\tau}^\perp(\tG)\|_*-\|\cP_{\bGamma_\tau}(\tG)\|_*)+\delta\notag\\
&\leq \|\nabla \hQ_\tau(\bGamma_\tau) \| \big(\|\cP_{\bGamma_\tau}(\tdel)\|_*+\|\cP_{\bGamma_\tau}^\perp(\tdel)\|_*\big) + \lambda (\|\cP_{\bGamma_\tau}(\tdel)\|_*-\|\cP_{\bGamma_\tau}^\perp(\tdel)\|_*)+\delta, \label{eq:subgradient}
\end{align}
where the second inequality follows from the definition of subgradient:
\begin{align*}
	\hQ_\tau(\Gamah)-\hQ_\tau(\bGamma_\tau) \geq \langle \nabla\hQ_\tau(\bGamma_\tau),\Gamah-\bGamma_\tau \rangle,
\end{align*}
and H\"older's inequality; the third inequality is from the fact that $\cP_{\bGamma_\tau}^\perp(\bGamma_\tau)=0$ and for any $\Sb$, $\|\Sb\|_* = \|\cP_{\bGamma_\tau}(\Sb)\|_*+\|\cP_{\bGamma_\tau}^\perp(\Sb)\|_*$ (the discussion after Definition \ref{def:proj})
; the fourth inequality is from the triangle inequality. 

Rearrange expression \eqref{eq:subgradient} to get,
$$
(\lambda-\|\nabla \hQ_\tau(\bGamma_\tau) \|) \|\cP_{\bGamma_\tau}^\perp(\tdel)\|_* \leq (\lambda+\|\nabla \hQ_\tau(\bGamma_\tau) \|) \|\cP_{\bGamma_\tau}(\tdel)\|_*+\delta.
$$	
Choose $\lambda \geq 2 \|\nabla \hQ_\tau(\bGamma_\tau) \|$,
$$
\frac{1}{2} \lambda \|\cP_{\bGamma_\tau}^\perp(\tdel)\|_* \leq \frac{3}{2} \lambda \|\cP_{\bGamma_\tau}(\tdel)\|_*+\delta.
$$
Hence, $\|\cP_{\bGamma_\tau}^\perp(\tdel)\|_* \leq 3 \|\cP_{\bGamma_\tau}(\tdel)\|_*+2\delta/\lambda \leq 3 \|\cP_{\bGamma_\tau}(\tdel)\|_*+2\delta'/\lambda$ for all $\delta'\geq\delta$. 
\end{proof}

\begin{lemma}\label{lem.mis} Under assumptions \hyperref[A2]{(A2)}, \hyperref[A3]{(A3)}, we have for all $\delta>0$,
	\begin{enumerate}
		\item[(i)] If $\|\bDelta\|_{L_2(P_X)} \leq 4 \nu_\tau(\delta)$, and $\bDelta \in \cK(\bGamma_\tau;2\delta/\lambda)$, then $Q_\tau(\bGamma_\tau+\bDelta)-Q_\tau(\bGamma_\tau) \geq \frac{1}{4}\fline \|\bDelta\|_{L_2(P_X)}^2$;
		\item[(ii)] If $\bDelta \in \cK(\bGamma_\tau;2\delta/\lambda)$, $\|\bDelta\|_* \leq 4\sqrt{\frac{rm}{\sigma_{\min}(\bSigma_X)}} \|\bDelta\|_{L_2(P_X)}+2\delta/\lambda$, where $r = \rank(\bGamma_\tau)$. 
	\end{enumerate}
\end{lemma}

\begin{proof}[Proof for Lemma \ref{lem.mis}]
\begin{enumerate}
	\item Let $Q_{\tau,j}(\bGamma_{\tau,\ast j}) = \E[\rho_\tau(Y_{ij}-\bX_i^\top \bGamma_{\tau,*j})]$. From Knight's identity \citep{K98}, for any $v,u\in\R$,
	\begin{align}
	\rho_\tau(u-v)-\rho_\tau(u) = -v \psi_\tau(u)+ \int_0^v \big(\IF\{u\leq z\}-\IF\{u\leq 0\}\big)dz.\label{eq:knight}
	\end{align}
	where $\psi_\tau(u)\defeq \tau-\IF(u \leq 0)$. Putting $u = Y_{ij}-\bX_i^\top \bGamma_{\tau,*j}$ in \eqref{eq:knight}, and $v = \bX_i^\top \bDelta_{*j}$, $\E[-v \psi_\tau(u)]=0$ for all $j$ and $i$, by the definition of $\bGamma_\tau=\arg\min_{\Sb}\E[\hQ_\tau(\Sb)]$. Therefore, using law of iterative expectation and mean value theorem, we have by \hyperref[A3]{(A3)} that
	\begin{align}
		&\hspace{-0.5cm}Q_{\tau,j}(\bGamma_{\tau,\ast j}+\bDelta_{\ast j})-Q_{\tau,j}(\bGamma_{\tau,\ast j})\notag\\
		&= \E\bigg[\int_0^{\bX_i^\top \bDelta_{\ast j}} F_{Y_j|\bX_i}(\bX_i^\top \bGamma_{\tau,\ast j}+z|\bX_i)-F_{Y_j|\bX_i}(\bX_i^\top \bGamma_{\tau,\ast j}|\bX_i)dz\bigg]\notag\\
		&= \E\bigg[ \int_0^{\bX_i^\top \bDelta_{\ast j}} z f_{Y_j|\bX_i}(\bX_i^\top \bGamma_{\tau,\ast j}|\bX_i)+\frac{z^2}{2}f_{Y_j|\bX_i}'(\bX_i^\top \bGamma_{\tau,\ast j}+z^\dag|\bX_i) dz\bigg] \notag\\
		&\geq \fline\frac{\E\big[(\bX_i^\top \bDelta_{\ast j})^2\big]}{4}+\fline\frac{\E\big[(\bX_i^\top \bDelta_{\ast j})^2\big]}{4}-\frac{1}{6}\bar f'\E[|\bX_i^\top \bDelta_{*j}|^3]\label{eq:minorize}
		\end{align}
	for $z^\dag \in [0,z]$. Now, for $\bDelta \in \cK(\bGamma_\tau;2\delta/\lambda)$, the condition 
	$$
	\|\bDelta\|_{L_2(P_X)} \leq 4\nu_\tau(\delta) = \frac{3}{2}\frac{\fline}{\bar f'} \inf_{\bDelta \in \cK(\bGamma_\tau;2\delta/\lambda)\atop\bDelta \neq 0} \frac{\big(\sum_{j=1}^m \E[|\bX_i^\top\bDelta_{*j}|^2]\big)^{3/2}}{\sum_{j=1}^m \E[|\bX_i^\top\bDelta_{*j}|^3]}
	$$ 
	implies
	\begin{align*}
	\fline m^{-1}\sum_{j=1}^m\frac{\E\big[(\bX_i^\top \bDelta_{\ast j})^2\big]}{4} \geq \frac{1}{6}\bar f'm^{-1}\sum_{j=1}^m\E[|\bX_i^\top \bDelta_{*j}|^3]
	\end{align*}
	Therefore, 
	$$
	Q_\tau(\bGamma_\tau+\bDelta)-Q_\tau(\bGamma_\tau) \geq \fline m^{-1} \sum_{j=1}^m \frac{\E(\bX_i^\top \bDelta_{\ast j})^2}{4} = \frac{1}{4} \fline \|\bDelta\|_{L_2(P_X)}^2.
	$$
	\item By the decomposability of nuclear norm, $\bDelta \in \cK(\bGamma_\tau;2\delta/\lambda)$ and \eqref{eq:pfrobeq} in Remark \ref{rem:normeq}, we can estimate
	\begin{align*}
		\|\bDelta\|_* &= \|\cP_{\bGamma_\tau}(\bDelta)\|_* + \|\cP_{\bGamma_\tau}^\perp(\bDelta)\|_* \leq 4 \|\cP_{\bGamma_\tau}(\bDelta)\|_* +2\delta/\lambda\leq 4 \sqrt{r} \|\cP_{\bGamma_\tau}(\bDelta)\|_{\rm F}+2\delta/\lambda \\
		&\leq 4\sqrt{\frac{rm}{\sigma_{\min}(\bSigma_X)}} \|\bDelta\|_{L_2(P_X)}+2\delta/\lambda.
	\end{align*}
\end{enumerate}
\end{proof}

\begin{lemma}\label{lem.Q}
Under Assumptions \hyperref[A1]{(A1)}-\hyperref[A3]{(A3)}, recall that $\cA(t)$ is defined in \eqref{eq:cA}, then for an arbitrary $u>1$,
$$
\P\Big\{\cA(t) \leq 8 \sqrt{2} u (\alpha_r t+2m^{-1/2} \delta/\lambda) \sqrt{(c_2\sigma_{\max}(\bSigma_X)+B_p)} \sqrt{\log m+\log p}\Big\} \geq 1-16(pm)^{1-u^2}-\gamma_n,
$$
where $\alpha_r= 4\sqrt{r/\sigma_{\min}(\bSigma_X)}$ and $r=\rank(\bGamma_\tau)$. 
\end{lemma}

\begin{proof}[Proof for Lemma \ref{lem.Q}]
To simplify notations, let 
\begin{align}
\alpha_r\defeq 4\sqrt{r/\sigma_{\min}(\bSigma_X)}\label{eq:alphar}	
\end{align}
Let $\{\vep_{ij}\}_{i\leq n,j\leq m}$ be independent Rademacher random variables independent from $Y_{ij}$ and $\bX_i$ for all $i,j$. Denote $\P_{\vep}$ and $\E_{\vep}$ as the conditional probability and the conditional expectation with respect to $\{\vep_{ij}\}_{i\leq n,j\leq m}$, given $Y_{ij}$ and $\bX_i$. Denote
\begin{align}
	\chi_{ij}^\tau(\cdot)\defeq \rho_\tau\{Y_{ij}-\bX_i^\top\bGamma_{\tau,\ast j}-\cdot\}-\rho_\tau\{Y_{ij}-\bX_i^\top \bGamma_{\tau,\ast j}\}. \label{eq:defchi}
\end{align}
$\chi_{ij}^\tau(\cdot)$ is a contraction in the sense that $\chi_{ij}^\tau(0)=0$, and for all $a,b\in\R$, 
\begin{align}
	\big|\chi_{ij}^\tau(a)-\chi_{ij}^\tau(b)\big| \leq |a-b|. \quad \forall i=1,...,n, \ j=1,...,m.\label{eq:contra}
\end{align}

First, we note that for any $\bDelta$ satisfying $\bDelta\in \cK(\bGamma_\tau;2\delta/\lambda)$ and $\|\bDelta\|_{L_2(P_X)}\leq t$,
\begin{align}
	&\var\bigg(\GG_n\bigg(m^{-1}\sum_{j=1}^m \chi_{ij}^\tau(\bX_i^\top\bDelta_{*j})\bigg)\bigg)\notag\\
	&= \var\bigg(m^{-1}\sum_{j=1}^m \chi_{ij}^\tau(\bX_i^\top\bDelta_{*j})\bigg) \leq m^{-1} \sum_{j=1}^m \E\big[(\chi_{ij}^\tau(\bX_i^\top\bDelta_{*j}))^2\big]\notag\\ 
	&\leq m^{-1} \sum_{j=1}^m \E\big[(\bX_i^\top\bDelta_{*j})^2\big]
	\leq t^2, \label{eq:var}
\end{align}
where the first equality and the second inequality follows from elementary computations and i.i.d. assumption \hyperref[A1]{(A1)}, the third inequality is a result of \eqref{eq:contra}, and the last inequality applies \eqref{eq:frobeq} in Remark \ref{rem:normeq}. 

To apply Lemma 2.3.7 of \cite{VW:1996}, we observe from Chebyshev's inequality that for any $s>0$,
\begin{align*}
	&\inf_{\|\bDelta\|_{L_2(P_X)} \leq t, \bDelta \in \cK(\bGamma_\tau;2\delta/\lambda)} \P\bigg(\bigg|\GG_n\bigg(m^{-1}\sum_{j=1}^m \chi_{ij}^\tau(\bX_i^\top\bDelta_{*j})\bigg)\bigg|< \frac{s}{2}\bigg) \\
	&= 1-\sup_{\|\bDelta\|_{L_2(P_X)} \leq t, \bDelta \in \cK(\bGamma_\tau;2\delta/\lambda)} \P\bigg(\bigg|\GG_n\bigg(m^{-1}\sum_{j=1}^m \chi_{ij}^\tau(\bX_i^\top\bDelta_{*j})\bigg)\bigg| \geq \frac{s}{2}\bigg)\geq 1-4\frac{t^2}{s^2}.
\end{align*}
Taking $s \geq \sqrt{8} t$, we have 
$$
\frac{1}{2} \leq \inf_{\|\bDelta\|_{L_2(P_X)} \leq t, \bDelta \in \cK(\bGamma_\tau;2\delta/\lambda)} \P\bigg(\bigg|\GG_n\bigg(m^{-1}\sum_{j=1}^m \chi_{ij}^\tau(\bX_i^\top\bDelta_{*j})\bigg)\bigg|< \frac{s}{2}\bigg).
$$
Thus, applying Lemma 2.3.7 of \cite{VW:1996}, we have
\begin{align}
	\P\{\cA(t) > s\} \leq 4 \P\bigg(\sup_{\|\bDelta\|_{L_2(P_X)} \leq t \atop \bDelta \in \cK(\bGamma_\tau;2\delta/\lambda)}\bigg|n^{-1/2}m^{-1}\sum_{i=1}^n\sum_{j=1}^m \vep_{ij} \chi_{ij}^\tau(\bX_i^\top\bDelta_{*j})\bigg|>\frac{s}{4}\bigg). \label{eq:A1}
\end{align}
Now we restrict the $\cA(t)$ on the event $\Omega$ on which \eqref{cond.cov} in \hyperref[A2]{(A2)} holds, with $\P(\Omega) \geq 1-\gamma_n$. Applying Markov's inequality, for an arbitrary constant $\mu>0$, the right-hand side of \eqref{eq:A1} can be bounded by
\begin{align}
	&\P\{\cA(t) > s|\Omega\} \notag\\
	&\leq 4 \exp\bigg(\frac{-\mu s}{4}\bigg) \E\bigg[\E_{\vep}\bigg[\exp\bigg\{\mu\sup_{\|\bDelta\|_{L_2(P_X)} \leq t \atop \bDelta \in \cK(\bGamma_\tau;2\delta/\lambda)}\bigg|n^{-1/2}m^{-1}\sum_{i=1}^n\sum_{j=1}^m \vep_{ij} \chi_{ij}^\tau(\bX_i^\top\bDelta_{*j})\bigg|\bigg\}\bigg]\bigg|\Omega\bigg]. \label{eq:A2}
\end{align}
Now recall \eqref{eq:contra}, the comparison theorem for Rademacher processes (Lemma 4.12 in \cite{LT:1991}) implies the right-hand side of \eqref{eq:A2} is bounded by
\begin{align}
	&\P\{\cA(t) > s|\Omega\} \notag\\
	&\leq 4 \exp\bigg(\frac{-\mu s}{4}\bigg) \E\bigg[\E_{\vep}\bigg[\exp\bigg\{2\mu\sup_{\|\bDelta\|_{L_2(P_X)} \leq t \atop \bDelta \in \cK(\bGamma_\tau;2\delta/\lambda)}\bigg|n^{-1/2}m^{-1}\sum_{i=1}^n\sum_{j=1}^m \vep_{ij} \bX_i^\top\bDelta_{*j}\bigg|\bigg\}\bigg]\bigg|\Omega\bigg]. \label{eq:A3}
\end{align}
To obtain a bound for the right-hand side of \eqref{eq:A3}, we note that 
\begin{align}
	\bigg|\sum_{i=1}^n\sum_{j=1}^m \vep_{ij} \bX_i^\top\bDelta_{*j}\bigg| &= \bigg|\mbox{tr}\Big(\Big[\sum_{i=1}^n \vep_{i1} \bX_i \ \sum_{i=1}^n \vep_{i2} \bX_i\ ...\ \sum_{i=1}^n \vep_{im} \bX_i\Big]^\top \bDelta\Big)\bigg|\notag\\
	&\leq \|\bDelta\|_* \sup_{\ba\in\cS^{p-1}}\bigg|\sum_{j=1}^m \Big(\sum_{i=1}^n \vep_{ij}\bX_i^\top\ba\Big)^2\bigg|^{1/2}\notag\\
	&\leq m^{1/2}\|\bDelta\|_* \max_{j\leq m}\bigg\|\sum_{i=1}^n \vep_{ij}\bX_i\bigg\|,\label{eq:bd1}
\end{align}
where the first inequality is from H\"older's inequality, and the second inequality is elementary. 

Now we apply random matrix theory to bound the right-hand side of \eqref{eq:A3}. Using matrix dilations (see, for example Section 2.6 of \cite{T:11}), we have
\begin{align}
	\bigg\|\sum_{i=1}^n \vep_{ij}\bX_i\bigg\| =\bigg\| \sum_{i=1}^n \vep_{ij}\begin{pmatrix}
	\IO_p &\bX_i\\
	\bX_i^\top &0 
 \end{pmatrix}\bigg\|. \label{eq:dil}
\end{align}
Notice that the random matrix $\vep_{ij}\begin{pmatrix}
	\IO_p &\bX_i\\
	\bX_i^\top &0 
 \end{pmatrix}$ is self adjoint and symmetrically distributed conditional on $\bX_i$. We now obtain
 \begin{align}
 	&\E_{\vep}\bigg[\exp\bigg\{2\mu\sup_{\|\bDelta\|_{L_2(P_X)} \leq t \atop \bDelta \in \cK(\bGamma_\tau;2\delta/\lambda)}\bigg|n^{-1/2}m^{-1}\sum_{i=1}^n\sum_{j=1}^m \vep_{ij} \bX_i^\top\bDelta_{*j}\bigg|\bigg\}\bigg]\notag\\
	&\leq \E_{\vep}\bigg[\exp\bigg\{2\mu (\alpha_r t + m^{-1/2}2\delta/\lambda) \max_{j \leq m} \bigg\|n^{-1/2}\sum_{i=1}^n \vep_{ij} \bX_i^\top\bigg\|\bigg\}\bigg]\notag\\
	&\leq m \max_{j \leq m} \E_{\vep}\bigg[\exp\bigg\{2\mu (\alpha_r t + m^{-1/2}2\delta/\lambda) \bigg\|n^{-1/2}\sum_{i=1}^n \vep_{ij} \begin{pmatrix}
	\IO_p &\bX_i\\
	\bX_i^\top &0 
 \end{pmatrix}\bigg\|\bigg\}\bigg]\notag\\
 	&\leq m 2 (p+1) \max_{j \leq m}  \exp\bigg\{\sigma_{\max}\bigg(\sum_{i=1}^n \log \E_{\vep} \bigg[\exp\bigg\{2\mu (\alpha_r t + m^{-1/2}2\delta/\lambda) n^{-1/2}\vep_{ij} \begin{pmatrix}
	\IO_p &\bX_i\\
	\bX_i^\top &0 
 \end{pmatrix}\bigg\}\bigg]\bigg)\bigg\}\label{eq:bd2}
 \end{align}
 where the first inequality is from Lemma \ref{lem.mis}(ii) and \eqref{eq:bd1} and recall $\alpha_r$ in \eqref{eq:alphar}, the second inequality follows from \eqref{eq:dil}, Lemma \ref{lem.mis} (ii) ($\bDelta\in\cK(\bGamma_\tau;2\delta/\lambda)$), and the fact that 
 \begin{align*}
 	\E[\max_{j\leq m} \exp(|Z_j|)] &\leq m \max_{j \leq m}\E[\exp(|Z_j|)], \ \mbox{ for any random variable } Z_j\in\R.
\end{align*}
 The third inequality is by Theorem 3(ii) of \cite{MP13} by the symmetric distribution of $\vep_{ij}$, where for a self adjoint matrix $\Ab$,
 \begin{align*}
 	\exp(\Ab) &\defeq \Ib + \sum_{j=1}^\infty \frac{\Ab^j}{j!}\\
	\log(\exp(\Ab)) &\defeq \Ab.
 \end{align*}
From equation (2.4) on page 399 of \cite{T:11}, for any $j$ and $c>0$,
\begin{align*}
	\E_{\vep} \bigg[\exp\bigg\{c\ \vep_{ij} \begin{pmatrix}
		\IO_p &\bX_i\\
		\bX_i^\top &0 
	 \end{pmatrix}\bigg\}\bigg] &= \frac{1}{2}\bigg(\exp\bigg\{c\begin{pmatrix}
		\IO_p &\bX_i\\
		\bX_i^\top &0 
	 \end{pmatrix}\bigg\}+\exp\bigg\{-c\begin{pmatrix}
		\IO_p &\bX_i\\
		\bX_i^\top &0 
	 \end{pmatrix}\bigg\}\bigg)\\
	 &\preccurlyeq \exp\bigg\{\frac{c^2}{2}\begin{pmatrix}
		\bX_i\bX_i^\top &\IO_p\\
		 0 &\bX_i^\top\bX_i 
	 \end{pmatrix}\bigg\},
\end{align*}
where "$\Ab \preccurlyeq \Bb$" means the $\Bb-\Ab$ is positive semidefinite for two matrices $\Ab,\Bb$. From equation (2.8) on page 399 of \cite{T:11}, the logarithm defined above preserves the order $\preccurlyeq$. Hence, \eqref{eq:bd2} is bounded by
\begin{align}
	&2m (p+1)  \exp\bigg\{2 \mu^2 (\alpha_r t + m^{-1/2}2\delta/\lambda)^2 \sigma_{\max}\bigg(n^{-1}\sum_{i=1}^n \begin{pmatrix}
		\bX_i\bX_i^\top &\IO_p\\
		 0 &\bX_i^\top\bX_i 
	 \end{pmatrix}\bigg)\bigg\}\notag\\
	 &\leq 2m(p+1)\exp\big\{2 \mu^2 (\alpha_r t + m^{-1/2}2\delta/\lambda)^2 (\sigma_{\max}(\hat\bSigma_X)+B_p)\big\},\label{eq:bd3}
\end{align}
where the last inequality follows from a bound for the spectral norm for block matrices in equation (2) of Theorem 1 in \cite{BK90} (with Shatten-$\infty$ norm), and Assumption \hyperref[A2]{(A2)}. 

Putting \eqref{eq:bd3} into \eqref{eq:A2}, we obtain
\begin{align}
	\P\{\cA(t) > s|\Omega\} 
	&\leq 8m(p+1) \exp\bigg(\frac{-\mu s}{4}\bigg) \E\big[ \exp\big\{2 \mu^2 (\alpha_r t + m^{-1/2}2\delta/\lambda)^2 (\sigma_{\max}(\hat\bSigma_X)+B_p)\big\}\big|\Omega\big]\notag\\
	&\leq 8m(p+1) \exp\bigg(\frac{-\mu s}{4}\bigg) \exp\big\{2 \mu^2 (\alpha_r t + m^{-1/2}2\delta/\lambda)^2 (c_2\sigma_{\max}(\bSigma_X)+B_p)\big\}.\label{eq:bd4}
\end{align}
Minimizing the expression \eqref{eq:bd4} with respect to $\mu$ gives
\begin{align}
	\P\{\cA(t) > s|\Omega\} 
	&\leq 8m(p+1) \exp\bigg\{-\frac{s^2}{128 (\alpha_r t + m^{-1/2}2\delta/\lambda)^2 (c_2\sigma_{\max}(\bSigma_X)+B_p)}\bigg\}.
\end{align}
Taking 
\begin{align}
s &=8 \sqrt{2} u (\alpha_r t + m^{-1/2}2\delta/\lambda)\sqrt{(c_2\sigma_{\max}(\bSigma_X)+B_p)} \sqrt{\log m+\log p} \label{eq:ss}
\end{align}
Notice that $s \geq \sqrt{8}t$ for large enough $p,m$, so the symmetrization \eqref{eq:A1} is valid. Recall that $\P(\Omega) \geq 1-\gamma_n$. The proof is then completed.
\end{proof}

\begin{remark}\label{rem:noniid}
The Lemma 2.3.7 of \cite{VW:1996} and Lemma 4.12 of \cite{LT:1991} applied in the proof of Lemma \ref{lem.Q} require only independence in the random variables $(Y_{ij},\bX_i)$, without needing identical distribution. The random matrix theory applied in the proof may also be generalized to matrix martingales; see Section 7 of \cite{T:11} for more details. 
\end{remark}

\begin{remark}\label{rem:uniftau} It can be observed that Lemma \ref{lem.Q} is valid uniformly for any $0<\tau < 1$.
\end{remark}

\subsection{Proof of Theorem \ref{th:facloa}}\label{sec:facloa}
In this proof, we abbreviate $\sigma_k(\bGamma_\tau)$, $\sigma_k(\tG)$, $(\tilde\Vb_\tau)_{*k}$ and $(\Vb_\tau)_{*k}$, $(\tilde\Ub_\tau)_{*k}$ and $(\Ub_\tau)_{*k}$ by $\sigma_k$, $\tilde\sigma_k$, $\tilde\Vb_{*k}$ and $\Vb_{*k}$, $\tilde\Ub_{*k}$ and $\Ub_{*k}$.

To prove \eqref{eq:vbound}, since $\Psi_\tau=\Vb_\tau$ and $\hat \Psi_\tau = \tilde\Vb_\tau$, by Theorem 3 of \cite{YWS15}, 
\begin{align}
	\sin \cos^{-1}(|\tilde\Vb_{*j}^\top \Vb_{*j}|)\leq \frac{2(2\|\bGamma_\tau\|+\|\tG-\bGamma_\tau\|_{\rm F})\|\tG-\bGamma_\tau\|_{\rm F}}{\min\{\sigma_{j-1}^2(\bGamma_\tau)-\sigma_j^2(\bGamma_\tau),\sigma_{j}^2(\bGamma_\tau)-\sigma_{j+1}^2(\bGamma_\tau)\}}\label{eq:vbound.temp}	
\end{align}
where by the fact that $|\tilde\Vb_{*j}^\top \Vb_{*j}| \leq 1$,
\begin{align*}
	\sin \cos^{-1}(|\tilde\Vb_{*j}^\top \Vb_{*j}|) &= \sqrt{1-(\tilde\Vb_{*j}^\top \Vb_{*j})^2} = \sqrt{(1-\tilde\Vb_{*j}^\top \Vb_{*j})(1+\tilde\Vb_{*j}^\top \Vb_{*j})}\\
	&\geq \sqrt{(1-|\tilde\Vb_{*j}^\top \Vb_{*j}|)^2}=1-\big|\tilde\Vb_{*j}^\top \Vb_{*j}\big|.
\end{align*}
Similar bound like \eqref{eq:vbound} also holds for $\tilde\Ub_{*j}$, by the discussion below Theorem 3 of \cite{YWS15}.

For a proof for inequality \eqref{eq:bfac}, by direct calculation,
\begin{align}
	\big|\hat f_k^\tau(\bX_i)-f_k^\tau(\bX_i)\big| &= \big|\tilde\sigma_k \tilde\Ub_{*k}^\top \bX_i-\sigma_k \Ub_{*k}^\top \bX_i\big|\notag\\
	&\leq \big\|\tilde\sigma_k \tilde\Ub_{*k}^\top-\sigma_k \Ub_{*k}^\top\big\|\|\bX_i\|\notag\\
	&\leq \big(\big|\tilde\sigma_k-\sigma_k\big|\big\|\tilde\Ub_{*k}\big\|+\sigma_k \big\|\tilde\Ub_{*k}-\Ub_{*k}\big\|\big)\|\bX_i\|\notag\\
	&\leq \Big(\big|\tilde\sigma_k-\sigma_k\big|+\sigma_k \sqrt{(\tilde\Ub_{*k}-\Ub_{*k})^\top(\tilde\Ub_{*k}-\Ub_{*k})}\Big)\|\bX_i\|\notag\\
	&\leq \Big(\big|\tilde\sigma_k-\sigma_k\big|+\sigma_k \sqrt{2(1-\tilde\Ub_{*k}^\top\Ub_{*k})}\Big)\|\bX_i\| \label{eq:facbd}
\end{align}
where we apply the fact that $\|\tilde\Ub_{*k}\big\|=1$. By assumption $\tilde\Ub_{*k}^\top\Ub_{*k}\geq 0$, $\tilde\Ub_{*k}^\top\Ub_{*k}=|\tilde\Ub_{*k}^\top\Ub_{*k}|$. Apply Lemma \ref{lem:mirsky} and the bound \eqref{eq:vbound.temp} with $\Vb$ being replaced by $\Ub$ to \eqref{eq:facbd}, then \eqref{eq:bfac} is proved. Thus, the proof for this theorem is completed. \hfill$\qed$


\vskip 2em \centerline{\Large \bf S.3: Miscellaneous Technical Details}
\setcounter{subsection}{0}
\renewcommand{\thesubsection}{S.3.\arabic{subsection}}
\setcounter{equation}{0}
\renewcommand{\theequation}{S.3.\arabic{equation}}
\setcounter{theorem}{0}
\renewcommand{\thetheorem}{S.3.\arabic{theorem}}
\vskip 1em

\subsection{Detail on Remark \ref{rmk:nu}}\label{sec:nu}

For \eqref{eq:gr} to hold, it is enough to have $\E[|\bX_i^\top\bDelta_{*j}|^3]\leq C \E[|\bX_i^\top\bDelta_{*j}|^2]^{3/2}$ for all $j=1,2,...,m$, where $C>0$ is a constant independent of $j$, because
\begin{align}
	\Big(\sum_{j=1}^m \E[|\bX_i^\top\bDelta_{*j}|^2]^{3/2}\Big)^{2/3} \leq \sum_{j=1}^m \E[|\bX_i^\top\bDelta_{*j}|^2]
\end{align}
by the inequality $\|\ba\|_{3/2} \leq \|\ba\|_1$ for an arbitrary $\ba=(a_1,a_2,...,a_m)$ with $a_j\geq 0,\ \forall j$. If $\bX_i$ is i.i.d. sampled from a \emph{log-concave} density, then Theorem 5.22 of \cite{LV07} implies $\E[|\bX_i^\top\bDelta_{*j}|^3]\leq 3^{3/2} \E[|\bX_i^\top\bDelta_{*j}|^2]^{3/2}$ for any $\bDelta$. See also Design 1 on p.2 of the supplemental materials of \cite{BC:2011}. This implies \eqref{eq:gr} as $\epsilon_{n,\tau,r}$ is small as $n \gtrsim B_p r(p+m)(\log p+\log m)$.

\subsection{Detail on Remark \ref{rmk:notlowrank}}\label{sec:notlowrank}
We need some extra notations. Let $\cV\subset\R^m$ and $\cU\subset\R^p$ be two subspaces with dimension $r$, let $\cM = \{\bDelta\in\R^{p\times m}:\mbox{row space of }\bDelta\subset\cV,\mbox{ column space of }\bDelta\subset\cU\}$; $\overline\cM^\perp = \{\bDelta\in\R^{p\times m}:\mbox{row space of }\bDelta\subset\cV^\perp,\mbox{ column space of }\bDelta\subset\cU^\perp\}$ (defined similarly as in Example 3 on page 542 of \cite{NRWY:12}). For any matrix $\Sb\in\R^{p\times m}$,
	\begin{align*}
		\p_{\cM}(\Sb) &=  \Pb_{\cU}\Sb \Pb_{\cV},\quad \p_{\overline\cM}^\perp(\Sb) = \Pb_{\cU}^\top\Sb\Pb_{\cV}^\top,
	\end{align*}
	where $\Pb_{\cV} = \Vb\Vb^\top$, $\Pb_{\cV}^\perp = \Ib_{m\times r}-\Pb_{\cV}$, $\Vb=[\vb_1\,...\,\vb_r]$, and $\{\vb_j\}_{j=1}^r$ is a set of orthonormal basis for $\cV$; analogously, $\Pb_{\cU} = \Ub\Ub^\top$, $\Pb_{\cU}^\perp = \Ib_{p\times r}-\Pb_{\cU}$, $\Ub=[\ub_1\,...\,\ub_r]$, and $\{\ub_j\}_{j=1}^r$ is a set of orthonormal basis for $\cU$. Moreover, for any $\Sb\in\R^{p\times m}$, $\|\Sb\|_*=\|\p_{\cM}(\Sb)\|_*+\|\p_{\overline\cM}^\top(\Sb)\|_*$.

	It can be shown that when $\lambda \geq 2\|\nabla \hQ(\bGamma_\tau)\|$, the difference $\tdel=\tG-\bGamma_\tau$ lies in the set
	\begin{align}
		&\cK(\overline\cM,4 \|\p_{\overline\cM}^\perp(\bGamma_\tau)\|+2 \delta'/\lambda)\notag\\
		&\quad\quad\quad\defeq\bigg\{\bDelta\in\R^{p\times m}: \|\p_{\overline\cM}^\perp(\bDelta)\|\leq 3\|\p_{\overline\cM}(\bDelta)\|+4 \|\p_{\cM}^\perp(\bGamma_\tau)\|+\frac{2 \delta'}{\lambda}\bigg\},\label{eq:notlowrank_nsp}
	\end{align}
	where $\delta'\geq\delta$. Under this situation, the recovery property of $\tG$ can be shown via similar argument as for Theorem \ref{thm:rec} (possibly under more restrictive conditions), and we leave out the details. 

To show \eqref{eq:notlowrank_nsp}, we first note an inequality
\begin{align}
	\|\tG\|_*-\|\bGamma_\tau\|_* \leq 2 \|\p_{\cM}^\perp(\bGamma_\tau)\|_* + \|\p_{\overline\cM}(\tdel)\|_* - \|\p_{\overline\cM}^\perp(\tdel)\|_*,\label{eq:notlowrank52}
\end{align}
which can be shown by exactly the same argument for showing inequality (52) in Lemma 3 on page 27 in the supplementary material of \cite{NRWY:12}, because the nuclear norm is decomposable with respect to $(\cM,\overline\cM^\perp)$. 

It can be seen that from similar argument as \eqref{eq:subgradient},
	\begin{align}
		0 &\leq \hQ_{\tau}(\bGamma_\tau)-\hQ_{\tau}(\Gamat)+\lambda\|\bGamma_\tau\|_*-\lambda\|\Gamat\|_* + \delta\notag\\
		&\leq \|\nabla \hQ_\tau(\bGamma_\tau) \| \big(\|\cP_{\overline\cM}(\tdel)\|_*+\|\cP_{\overline\cM}^\perp(\tdel)\|_*\big) \notag\\
		&\quad\quad\quad\quad\quad\quad+ \lambda (2 \|\p_{\cM}^\perp(\bGamma_\tau)\|_* + \|\p_{\overline\cM}(\tdel)\|_* - \|\p_{\overline\cM}^\perp(\tdel)\|_*)+\delta, \label{eq:subgra_notlowrank}
	\end{align}
	where the second inequality is from \eqref{eq:notlowrank52}. Rearrange expression \eqref{eq:subgra_notlowrank} to get,
	$$
	(\lambda-\|\nabla \hQ_\tau(\bGamma_\tau) \|) \|\cP_{\overline\cM}^\perp(\hDelta)\|_* \leq (\lambda+\|\nabla \hQ_\tau(\bGamma_\tau) \|) \|\cP_{\overline\cM}(\hDelta)\|_*+2\lambda\|\p_{\cM}^\perp(\bGamma_\tau)\|_*+\delta.
	$$	
	By $\lambda \geq 2 \|\nabla \hQ_\tau(\bGamma_\tau) \|$,
	$$
	\frac{1}{2} \lambda \|\cP_{\overline\cM}^\perp(\hDelta)\|_* \leq \frac{3}{2} \lambda \|\cP_{\overline\cM}(\hDelta)\|_*+2\lambda\|\p_{\cM}^\perp(\bGamma_\tau)\|_*+\delta.
	$$

\subsection{Details for Generating matrices $\Sb_1$ and $\Sb_2$ in Section \ref{sec.si}}\label{sec:s1s2}
Given $(r_1,r_2)$, $\Sb_1$ and $\Sb_2$ are selected with the following procedure:
\begin{enumerate}
	\item Generate vectors $\{\ba_1,...,\ba_{r_1}\}$ and $\{\bb_1,...,\bb_{r_2}\}$, where $\ba_{j_1},\bb_{j_2}\in\R^p$, and $a_{j_1 k_1}, b_{j_2 k_2} \sim U(0,1)$ i.i.d. for $j_1=1,...,r_1$, $j_2=1,...,r_2$, $k_1,k_2=1,...,p$;
	\item Set the columns of $\Sb_1$ and $\Sb_2$ by $(\Sb_1)_{\ast j} = \sum_{k=1}^{r_1} \alpha_{k,j} \ba_k$ and $(\Sb_2)_{\ast j}=\sum_{k=1}^{r_2} \beta_{k,j} \bb_{k}$ for $j=1,...,m$, where $\alpha_{k,j}$, $\beta_{k,j}$ are independent random variables in $U[0,1]$ for $k=1,...,p$ and $j=1,...,m$. 
\end{enumerate}

In our simulation, the first two nonzero singular values for $\Sb_1$ are $(\sigma_1(\Sb_1),\sigma_2(\Sb_1))=(179.91,26.51)$ and the rest singular value is 0. For $\Sb_2^{Sym}$, the first two nonzero singular values are $(\sigma_1(\Sb_2^{Sym}),\sigma_2(\Sb_2^{Sym}))=(175.48,25.74)$ and the rest is 0. For $\Sb_2^{Sym}$, the first six nonzero singular values are $(\sigma_1(\Sb_2^{Asym}),...,\sigma_{6}(\Sb_2^{Asym}))=(473.40,29.87,25.66,23.89,23.58,22.16)$ and the rest is 0.

	
\vskip 2em \centerline{\Large \bf S.4: Auxiliary Lemmas} \vskip -1em
\setcounter{subsection}{0}
\renewcommand{\thesubsection}{S.4.\arabic{subsection}}
\setcounter{equation}{0}
\renewcommand{\theequation}{S.4.\arabic{equation}}
\setcounter{theorem}{0}
\renewcommand{\thetheorem}{S.4.\arabic{theorem}}

\begin{defin}\label{def_prox}
Let $\mathcal X=\R^{p \times n}$ with inner product $\langle \Ab,\Bb
\rangle = \mbox{tr}(\Ab^\top \Bb)$ and $\|\cdot\|$ be the induced norm.
$f: \mathcal X \rightarrow \R$ a lower semicontinuous convex
function. The \emph{proximity operator of $f$}, $S_f: \mathcal X
\rightarrow \mathcal X$:
\begin{align*}
&S_f(\Yb) \defeq \operatorname{arg}\,\underset{\Xb \in \mathcal X}{\operatorname{min}} \bigg\{f(\Xb)+\frac{1}{2} \|\Xb-\Yb\|^2\bigg\}, \forall \Yb \in \mathcal X.
\end{align*}
\end{defin}
\begin{theorem}[Theorem 2.1 of \cite{SVT:2010}]\label{prox_nu}
Suppose the singular decomposition of $\Yb=\Ub\Db \Vb^\top \in \R^{p \times m}$, where $\Db$ is a $p \times m$ rectangular diagonal matrix and $\Ub$ and $\Vb$ are unitary matrices. The proximity operator $S_\lambda(\cdot)$ associated with $\lambda \|\cdot\|_{*}$ is
\begin{align}
S_{\lambda}(\Yb) \defeq \Ub (\Db-\lambda\Ib_{pm})_+ \Vb^\top,
\end{align}
where $\Ib_{pm}$ is the $p \times m$ rectangular identity matrix with diagonal elements equal to 1.
\end{theorem}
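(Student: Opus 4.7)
The plan is to prove this by directly verifying the first-order optimality (subgradient) condition for the strictly convex problem defining the proximity operator. Since $\lambda\|\Xb\|_\ast + \tfrac{1}{2}\|\Xb-\Yb\|_{\rm F}^2$ is strictly convex (the Frobenius term is strictly convex and the nuclear norm is convex), the minimizer $S_\lambda(\Yb)$ exists and is unique, so it suffices to exhibit one matrix $\Xb^\star$ satisfying $\bm{0}\in \lambda\,\partial\|\Xb^\star\|_\ast + (\Xb^\star-\Yb)$ and identify it with the claimed soft-thresholded SVD.

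The first step is to recall the characterization of the subdifferential of the nuclear norm: for any $\Xb\in\R^{p\times m}$ with thin SVD $\Xb=\Ub_r\bSigma_r\Vb_r^\top$ (retaining only the nonzero singular values, with $\Ub_r\in\R^{p\times r}$, $\Vb_r\in\R^{m\times r}$ having orthonormal columns), one has
\begin{equation*}
\partial\|\Xb\|_\ast = \bigl\{\Ub_r\Vb_r^\top+\Wb : \Wb\in\R^{p\times m},\;\Ub_r^\top\Wb=\bm 0,\;\Wb\Vb_r=\bm 0,\;\|\Wb\|\le 1\bigr\}.
\end{equation*}
This identity follows from the duality $\|\cdot\|_\ast$ and $\|\cdot\|$ being polar to each other together with Fan's/von~Neumann's trace inequality; I would cite it rather than re-derive it.

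Next, I take the candidate $\Xb^\star=\Ub(\Db-\lambda\Ib_{pm})_+\Vb^\top$ and split the singular values of $\Yb$ into the ``large'' indices $\cS=\{i:\sigma_i(\Yb)>\lambda\}$ and the ``small'' indices $\cS^c$. Writing $\Ub=[\Ub_\cS\;\Ub_{\cS^c}]$ and $\Vb=[\Vb_\cS\;\Vb_{\cS^c}]$ block-wise, the SVD of $\Xb^\star$ is $\Ub_\cS\,\mathrm{diag}\{\sigma_i-\lambda\}_{i\in\cS}\,\Vb_\cS^\top$, so the support projection matrices are $\Pb_1=\Ub_\cS\Ub_\cS^\top$ and $\Pb_2=\Vb_\cS\Vb_\cS^\top$. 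A direct computation gives
\begin{equation*}
\Yb-\Xb^\star = \lambda\,\Ub_\cS\Vb_\cS^\top + \Ub_{\cS^c}\,\mathrm{diag}\{\sigma_i\}_{i\in\cS^c}\,\Vb_{\cS^c}^\top.
\end{equation*}
Setting $\Wb=\lambda^{-1}\Ub_{\cS^c}\,\mathrm{diag}\{\sigma_i\}_{i\in\cS^c}\,\Vb_{\cS^c}^\top$, I verify that $\Ub_\cS^\top\Wb=\bm 0$, $\Wb\Vb_\cS=\bm 0$ (by orthogonality of the SVD), and $\|\Wb\|=\lambda^{-1}\max_{i\in\cS^c}\sigma_i\le 1$ by definition of $\cS^c$. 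Hence $\lambda^{-1}(\Yb-\Xb^\star)=\Ub_\cS\Vb_\cS^\top+\Wb \in \partial\|\Xb^\star\|_\ast$, i.e.\ $\bm{0}\in \lambda\partial\|\Xb^\star\|_\ast+(\Xb^\star-\Yb)$, which is the required optimality condition. Uniqueness of the proximity operator then yields $S_\lambda(\Yb)=\Xb^\star$.

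The main obstacle, if one insists on being self-contained, is justifying the subdifferential formula for $\|\cdot\|_\ast$; everything after that is bookkeeping on the block SVD. I would handle this either by quoting a standard reference (Watson, 1992) or by using the variational representation $\|\Xb\|_\ast=\max\{\langle \Xb,\Zb\rangle:\|\Zb\|\le 1\}$ together with the fact that the maximizer set equals the claimed family. Once this is in hand, the soft-thresholding verification is immediate as above.
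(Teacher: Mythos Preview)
Your proof is correct. Note, however, that the paper does not supply its own proof of this theorem: it is stated in the supplementary lemmas as a citation of Theorem~2.1 in \cite{SVT:2010}, so there is nothing in the paper to compare against. Your argument via the subgradient characterization of the nuclear norm is in fact the same route taken in that reference, so your proposal aligns with the cited source.
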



\begin{lemma}[Hoeffding's Inequality, Proposition 5.10 of \cite{V:2012}]\label{lem.hoef}
Let $X_1,...,X_n$ be independent centered sub-gaussian random variables, and let $K = \max_i \|X_i\|_{\psi_2}$. Then for every $\ab=(a_1,...,a_n)^\top \in \R^n$ and every $t \geq 0$, we have
\begin{align*}
	\P\bigg(\bigg|\sum_{i=1}^n a_i X_i\bigg|\geq t\bigg) \leq e \cdot \exp\bigg(-\frac{C't^2}{K^2 \|\ab\|_2^2}\bigg),
\end{align*}
where $C'>0$ is a universal constant.
\end{lemma}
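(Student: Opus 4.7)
The plan is to establish this Hoeffding-type tail bound via the standard Chernoff/Cramér transform, using the equivalence between the sub-Gaussian norm defined via moment growth and an exponential moment bound.

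First, I would extract a moment-generating function (MGF) bound from the sub-Gaussian norm. From the definition $\|X_i\|_{\psi_2}=\sup_{p\geq 1}p^{-1/2}(\E|X_i|^p)^{1/p}\leq K$ one immediately gets the moment control $\E|X_i|^p \leq K^p p^{p/2}$ for every $p\geq 1$. Combining this with the Taylor expansion $\E\exp(\lambda X_i)=1+\sum_{p\geq 2}\lambda^p\E X_i^p/p!$ (the linear term vanishes because $\E X_i=0$) and Stirling's estimate $p!\geq (p/e)^p$, each term is bounded by $(|\lambda|Ke)^p p^{-p/2}/1$, and splitting even/odd indices and collecting yields the key MGF inequality
\begin{equation*}
\E\exp(\lambda X_i)\ \leq\ \exp\!\bigl(c_0\lambda^2 K^2\bigr),\qquad \lambda\in\R,
\end{equation*}
for an absolute constant $c_0>0$. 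This step is the main obstacle: controlling the series uniformly in $\lambda$ requires careful bookkeeping with Stirling, and this is where the universal constant $C'$ effectively comes from. The argument is standard (Vershynin, Lemma 5.5), but it is the only non-mechanical ingredient.

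Second, I would use independence to tensorise. For $S=\sum_{i=1}^n a_iX_i$,
\begin{equation*}
\E\exp(\lambda S)\ =\ \prod_{i=1}^n \E\exp(\lambda a_i X_i)\ \leq\ \prod_{i=1}^n \exp\!\bigl(c_0 \lambda^2 a_i^2 K^2\bigr)\ =\ \exp\!\bigl(c_0\lambda^2 K^2\|\ab\|_2^2\bigr).
\end{equation*}
Applying Markov's inequality and optimising over $\lambda>0$, namely $\lambda^*=t/(2c_0 K^2\|\ab\|_2^2)$, gives
\begin{equation*}
\P(S\geq t)\ \leq\ \exp\!\Bigl(-\tfrac{t^2}{4c_0 K^2\|\ab\|_2^2}\Bigr).
\end{equation*}

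Finally, repeat the argument for $-S$ (the class of centred sub-Gaussian variables is symmetric under negation), combine via a union bound to get a two-sided estimate, and absorb the factor $2$ into the prefactor $e$ (since $2\leq e$). Setting $C':=1/(4c_0)$ yields the stated inequality with the claimed universal constant. No assumption beyond independence, centring, and the sub-Gaussian norm bound is used, so the result is exactly the form stated.
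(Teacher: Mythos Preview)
Your proof is correct and is precisely the standard Chernoff--MGF argument that underlies Proposition~5.10 in Vershynin. Note, however, that the paper does not supply its own proof of this lemma: it is listed among the supplementary lemmas purely as a cited external result, so there is nothing in the paper to compare against beyond the reference itself.
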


\begin{lemma}[Hoeffding's Inequality: classical form]\label{lem.hoefc}
	Let $X_1,...,X_n$ be independent random variables such that $X_i \in [a_i,b_i]$ almost surely, then 
	\begin{align*}
		\P\bigg(\bigg|\sum_{i=1}^n X_i\bigg|\geq t\bigg) \leq 2 \exp\bigg(-\frac{2t^2}{\sum_{i=1}^n (b_i-a_i)^2}\bigg).
	\end{align*}
\end{lemma}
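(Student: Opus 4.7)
The plan is to use the classical Chernoff bounding argument combined with Hoeffding's lemma on the moment generating function of bounded random variables. Without loss of generality, I will assume $\E[X_i]=0$; otherwise, the stated inequality is conventionally understood with $X_i$ replaced by $X_i-\E[X_i]$, which preserves the length $b_i-a_i$ of the enclosing interval. By union bound, it suffices to prove the one-sided tail bound
$$
\P\Bigl(\sum_{i=1}^n X_i \geq t\Bigr) \leq \exp\Bigl(-\tfrac{2t^2}{\sum_{i=1}^n (b_i-a_i)^2}\Bigr),
$$
and apply the same argument to $-X_i\in[-b_i,-a_i]$ to obtain the matching lower-tail estimate, so that the two contributions yield the factor of $2$.

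First I would apply Markov's inequality to $e^{s\sum X_i}$ for arbitrary $s>0$ and use independence to factorize:
$$
\P\Bigl(\sum_{i=1}^n X_i \geq t\Bigr) \leq e^{-st}\prod_{i=1}^n \E[e^{s X_i}].
$$
The central step is Hoeffding's lemma: for any centered $X\in[a,b]$, one has $\E[e^{sX}]\leq \exp(s^2(b-a)^2/8)$. To prove this, I would use convexity of the exponential to write, for $x\in[a,b]$,
$$
e^{sx}\leq \tfrac{b-x}{b-a}e^{sa}+\tfrac{x-a}{b-a}e^{sb},
$$
take expectations (using $\E[X]=0$), and then examine $\varphi(u)=\log\bigl(\tfrac{b}{b-a}e^{ua}-\tfrac{a}{b-a}e^{ub}\bigr)$ at $u=s$. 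A direct computation gives $\varphi(0)=\varphi'(0)=0$, and $\varphi''(u)\leq (b-a)^2/4$ uniformly in $u$ (it is the variance of a Bernoulli-type distribution, maximized at parameter $1/2$), so Taylor's theorem yields $\varphi(s)\leq s^2(b-a)^2/8$.

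Plugging Hoeffding's lemma back in gives
$$
\P\Bigl(\sum_{i=1}^n X_i \geq t\Bigr) \leq \exp\Bigl(-st + \tfrac{s^2}{8}\sum_{i=1}^n (b_i-a_i)^2\Bigr).
$$
Optimizing over $s>0$ with the choice $s=4t/\sum_i (b_i-a_i)^2$ produces the desired exponent $-2t^2/\sum_i(b_i-a_i)^2$. Applying the same reasoning to $-\sum X_i$ and summing the two one-sided bounds yields the factor $2$ and completes the proof. The main obstacle is the variance-bound step in Hoeffding's lemma, namely verifying $\varphi''(u)\leq (b-a)^2/4$; this is a short but slightly tricky calculus exercise that is the only nontrivial ingredient, everything else being standard Chernoff bookkeeping.
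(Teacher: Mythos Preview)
Your argument is the standard, correct proof of the classical Hoeffding inequality via the Chernoff method and Hoeffding's lemma $\E[e^{sX}]\le \exp\bigl(s^2(b-a)^2/8\bigr)$. The paper does not actually prove this lemma; it is listed among the supplementary lemmas as a known result and invoked without proof, so there is nothing to compare against. Your handling of the centering assumption (reading the statement as applying to $X_i-\E[X_i]$) is appropriate, since the bound as literally stated would otherwise fail for non-centered variables.
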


\vskip 2em \centerline{\Large \bf S.5: Selecting the Matrix $\Bb$ in Section 4.3} \vskip -1em
\setcounter{subsection}{0}
\renewcommand{\thesubsection}{S.5.\arabic{subsection}}
\setcounter{equation}{0}
\renewcommand{\theequation}{S.5.\arabic{equation}}
\setcounter{theorem}{0}
\renewcommand{\thetheorem}{S.5.\arabic{theorem}}
\setcounter{figure}{0}
\renewcommand{\thefigure}{S.5.\arabic{figure}}
\setcounter{table}{0}
\renewcommand{\thetable}{S.5.\arabic{table}}
\vskip 1em

The $\Bb$ in \eqref{eq:ts_x} is the coefficient estimator obtained by fitting a VAR(1) model \citep{L05} to the $\bX_i$ in \eqref{input.vari}, and $\Sigma_{\sbvep}$ is the sample covariance matrix from the residuals. Due to the high dimensionality ($460$), the VAR model may be over-parameterized especially when the order is high, and straightforwardly estimating the VAR may yield unreliable estimates. Therefore, as suggested by multiple authors (e.g. \cite{DZZ16, NMB17} and the references therein), we estimate the VAR model with the $\ell_1$ norm penalty, or Lasso \citep{T96}, to alleviate the problem of over-parameterization. Henceforth, the VAR model estimated with the Lasso penalty will be called Lasso-VAR. The computation can be carried out with the \texttt{R} package \texttt{BigVAR} \citep{NMB17}. The Lasso tuning parameter is selected optimally by the cross-validation procedure provided in the package. 

To evaluate the adequacy of the VAR(1) model for the real data in \eqref{input.vari}, Table \ref{tab:var} provides the 1-step-ahead mean square forecasting error (MSFE) of Lasso-VAR (see Eq. (12) of \cite{NMB17}) with different lags. As it requires excessive computational time and resource for model estimation and cross-validation, the maximal order under consideration here is three. Lasso-VAR(3) has the smallest MSFE, but the difference between the models seems small, so we take Lasso-VAR(1). The MSFE of VAR with order selected by AIC or BIC \citep{L05,NMB17} is 2805 with optimal order of the both being 0, which is higher than that of Lasso-VAR as shown in Table \ref{tab:var}. 

For a simple diagnosis of Lasso-VAR(1), we check the autocorrelation and partial autocorrelation function of each individual residual series. Autocorrelation and partial autocorrelation functions of some series are significant. However, increasing the order of the VAR model does not improve the situation. To our best knowledge, we are not aware of any literature on vector ARIMA models for high dimensional time series, which might provide a better fit of our data. Fitting a very high dimensional VAR like ours is very subtle. As the Lasso-VAR(1) has demonstrated competent forecasting performance as shown in Table \ref{tab:var}, we adopt Lasso-VAR(1). A full exploration of the time series structure of the data is left for future research.

\begin{table}[htb]        
	\begin{center}        
		\begin{tabular}{lrrr}
			\hline\hline
			Order &1 &2 &3 \\
			\hline
			Lasso-VAR MSFE & 2364.82 & 2353.075 & 2341.046\\
			\% of active coef. &3.9 &2.836 &2.381 \\
			\hline
			\multicolumn{4}{c}{MSFE of VAR-AIC/BIC (optimal order = 0):  2805}\\
			\hline\hline
		\end{tabular}        
	\end{center}		  
	\caption{The mean square forecasting error (MSFE) and the percentage of active coefficients (total number of coefficients = $460\times (1 + 460 \times $ order)) with different orders, where ``1'' is from the intercept. For the matrix $\B$, we do not include the intercept.}\label{tab:var}        
\end{table}

\vskip 2em \centerline{\Large \bf S.6: Additional Numerical Results: AR(1) Model} \vskip -1em
\setcounter{subsection}{0}
\renewcommand{\thesubsection}{S.6.\arabic{subsection}}
\setcounter{equation}{0}
\renewcommand{\theequation}{S.6.\arabic{equation}}
\setcounter{theorem}{0}
\renewcommand{\thetheorem}{S.6.\arabic{theorem}}
\setcounter{figure}{0}
\renewcommand{\thefigure}{S.6.\arabic{figure}}
\vskip 1em

In this section, we consider the same data generating model as \eqref{het.model} in Section \ref{sec:esterr}, but now the regressor $\bX_i$ follows an AR(1) model
\begin{align}
	\bX_i = 0.5 \bX_{i-1} + \bu_i, \label{eq:dep_regressor}
\end{align}
where $\bu_i$ follows the multivariate $U([0,1])$ distribution with covariance matrix $\bSigma$ in which $\bSigma_{ij} = 0.1*0.8^{|i-j|}$. Because $\bY_i$ is generated as \eqref{het.model}, the true number of factors is 2 for $\tau=0.2$ and 6 for $\tau=0.8$ as in the i.i.d. case. The computational setting is the same as the i.i.d. case.

Figure \ref{facno_ts} shows the relative frequency of the estimated number of factors and the estimated penalized testing error when the regressors follow \eqref{eq:dep_regressor}. It appears that the presense of time dependency slightly decreases the recovery accuracy, but the pattern of the penalized testing error and the estimation performance of the number of factors remain similar to the i.i.d. case in Section \ref{sec:facno}. However, for $\tau=0.8$, smaller $\kappa$ and greater $T$ than than those for $\tau=0.2$ are selected to ensure estimation accuracy, which is due to the fact that the true number of factors for $\tau=0.8$ is greater than that of $\tau=0.2$.
\begin{figure}[!h] 
	\centering
\includegraphics[width=7cm, height = 7cm]{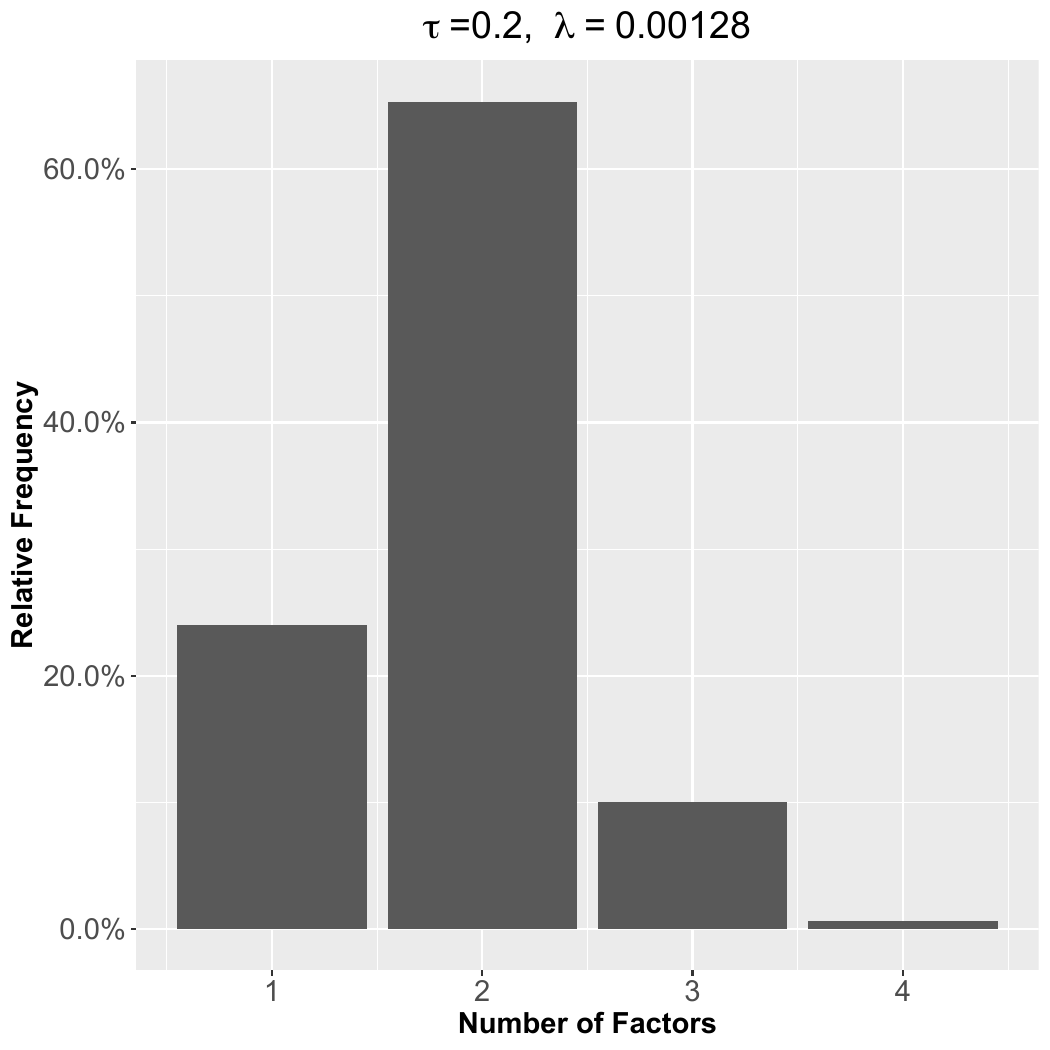}
\includegraphics[width=7cm, height = 7cm]{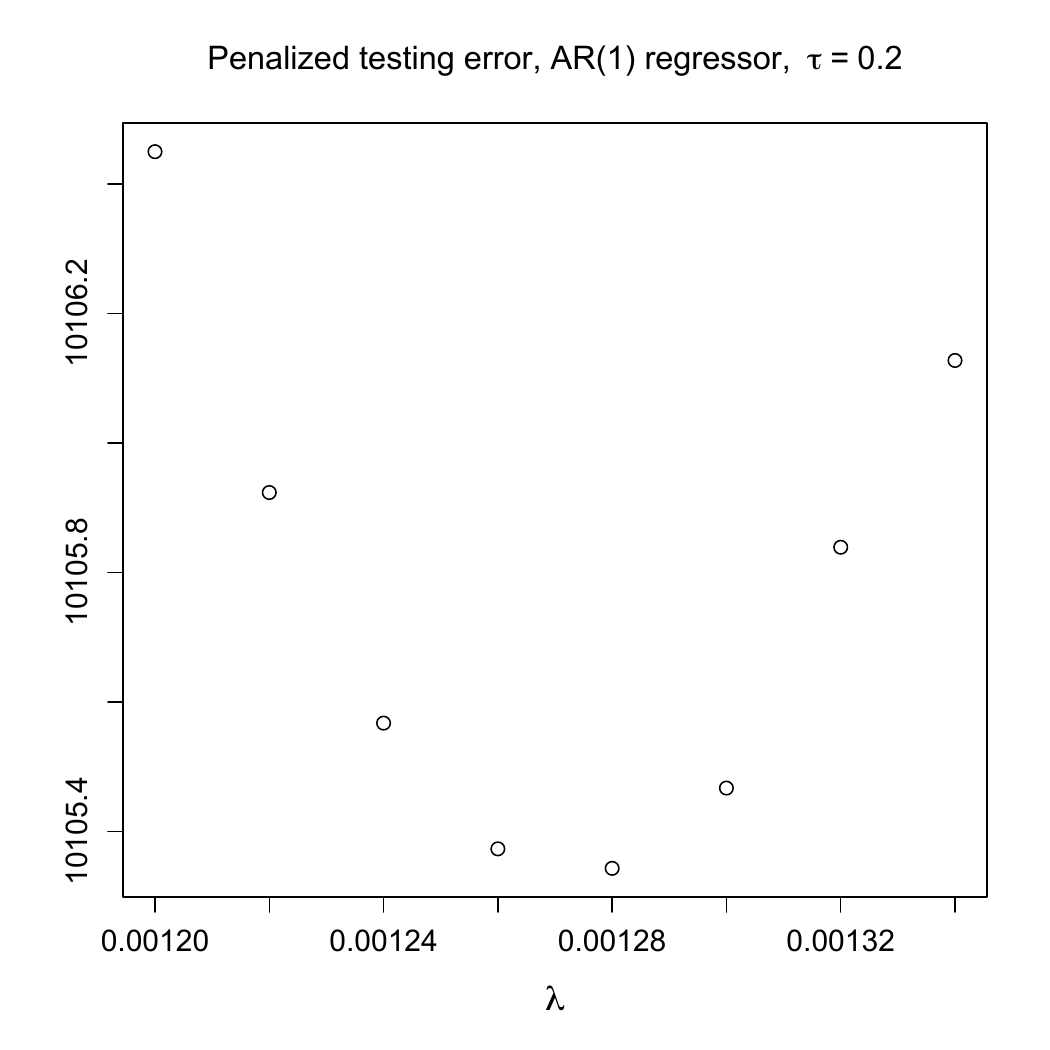}\\
\includegraphics[width=7cm, height = 7cm]{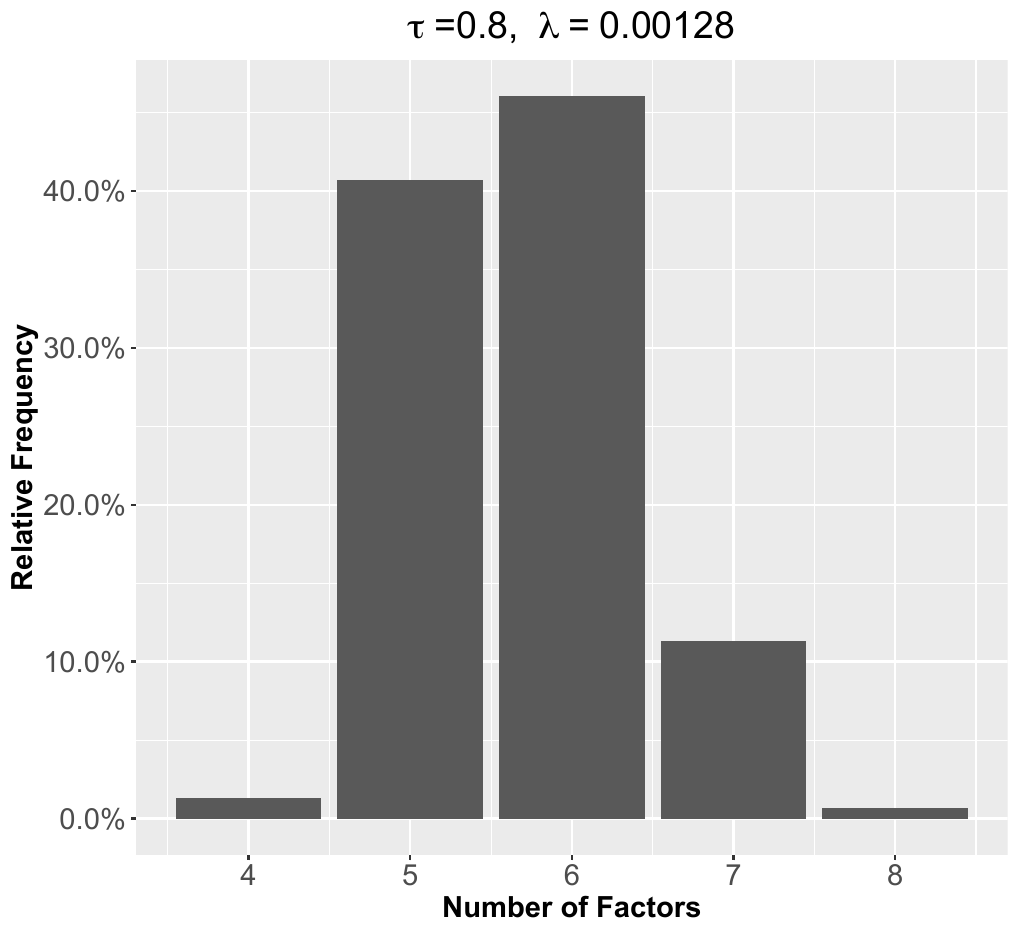}
\includegraphics[width=7cm, height = 7cm]{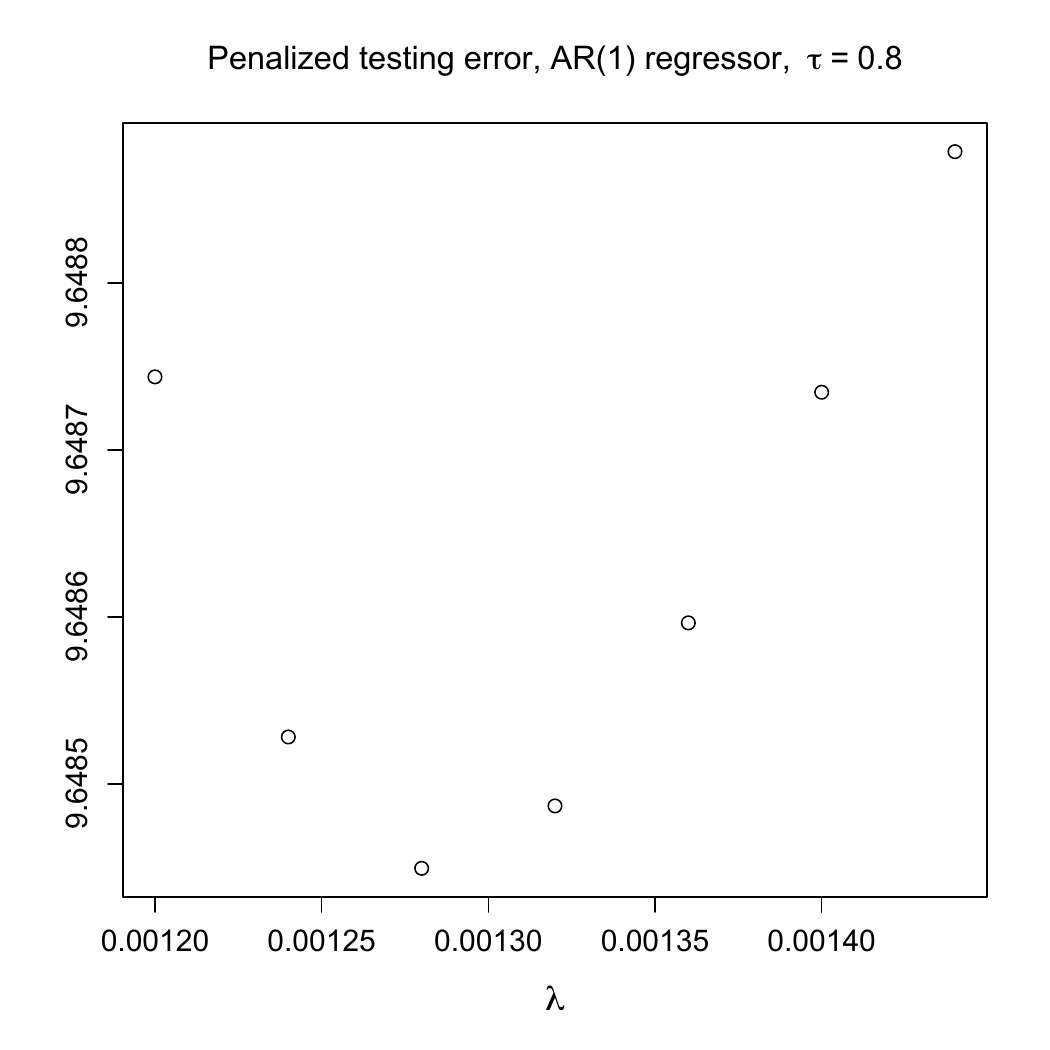}
\caption{The histogram of the estimated number of factors and the plot for the penalized testing error computed by the average of 150 Monte Carlo repetitions, $\tau=0.2$ and 0.8. Data are generated as \eqref{het.model}, with AR(1) regressor $\bX_i$ generated as \eqref{eq:dep_regressor}. The true number of factors is 2 for $\tau=0.2$ and 6 for $\tau=0.8$. $(\kappa,T)=(6.66*10^{-6},3500)$ for $\tau=0.2$ and $(\kappa,T)=(8*10^{-7},4000)$ for $\tau=0.8$.}\label{facno_ts}
\end{figure}

\end{document}